\DeclareMathOperator*{\argmin}{\arg\!\min}
\DeclareMathOperator*{\plim}{\mathrm{p}\!\lim}
\newtheorem{thm}{Theorem}
\newtheorem{prop}{Proposition}
\newtheorem{lem}{Lemma}
\newtheorem{asm}{Assumption}
\newtheorem{rem}{Remark}
\theoremstyle{definition}
\newtheorem{definition}{Definition}
\newtheorem{algo}{Algorithm}
\renewenvironment{proof}[1][\proofname]{{\bfseries #1. }}{\qed}
\newcites{app}{References to Appendices}
\DeclareRobustCommand\full  {\tikz[baseline=-0.6ex]\draw[thick] (0,0)--(0.5,0);}
\DeclareRobustCommand\dashed{\tikz[baseline=-0.6ex]\draw[thick,dashed] (0,0)--(0.54,0);}
\title{K-Means Panel Data Clustering in the Presence of Small Groups}
\author[]{Mikihito Nishi\footnote{This work is supported by JSPS KAKENHI Grant number 25KJ0041. A part of this research was conducted while the author was at Hitotsubashi University. The author would like to thank participants at the 32nd Kansai Econometrics Meeting, the 19th International Symposium on Econometric Theory and Applications, and the Summer Workshop on Economic Theory 2025 for their valuable comments and suggestions. All errors are mine. Address correspondence to: Graduate School of Economics, University of Tokyo, 7-3-1 Hongo, Bunkyou-ku, Tokyo, 113-0033, Japan; e-mail: mnishi@g.ecc.u-tokyo.ac.jp}}
\affil[]{Graduate School of Economics, University of Tokyo}
\date{\today}
\begin{document}
\onehalfspacing

    \begin{titlingpage}

        \maketitle

        \begin{abstract}
		  We consider panel data models with group structure. We study the asymptotic behavior of least-squares estimators and information criterion for the number of groups, allowing for the presence of small groups that have an asymptotically negligible relative size. Our contributions are threefold. First, we derive sufficient conditions under which the least-squares estimators are consistent and asymptotically normal. One of the conditions implies that a longer sample period is required as there are smaller groups. Second, we show that information criteria for the number of groups proposed in earlier works can be inconsistent or perform poorly in the presence of small groups. Third, we propose modified information criteria (MIC) designed to perform well in the presence of small groups. A Monte Carlo simulation confirms their good performance in finite samples. An empirical application illustrates that K-means clustering paired with the proposed MIC allows one to discover small groups without producing too many groups. This enables characterizing small groups and differentiating them from the other large groups in a parsimonious group structure.
        \end{abstract}

\medskip

\noindent

\emph{Keywords}: Clustering, grouped fixed-effects, information criterion, K-means, panel data

\medskip

\noindent

\emph{JEL Codes}: C23, C38

\end{titlingpage}

\section{Introduction}

Modeling cross-sectional heterogeneity is an important task when dealing with panel data, in which researchers face a trade-off between flexibility of models with complete heterogeneity and efficiency obtained from models with homogeneity across units. Recently, assuming a group structure among individuals has been viewed as a parsimonious yet sufficiently flexible way to model heterogeneity. In the group structure model, units within a group share the same parameter, while units belonging to different groups have different characteristics. Many researchers have developed estimation theory for grouped heterogeneity and unknown group memberships. \citet{linEstimationPanelData2012} and  \citet{bonhommeGroupedPatternsHeterogeneity2015} base their estimation strategies on the least-squares (LS) principle and the K-means algorithm. In particular, \citet{bonhommeGroupedPatternsHeterogeneity2015} establish consistency and asymptotic normality of the LS estimator in models with group-specific time-varying fixed-effects, or grouped fixed-effects (GFE). In a similar spirit, \citet{andoClusteringHugeNumber2017} consider models with group-specific interactive fixed-effects and develop asymptotic theory based on a penalized objective function. \citet{suIdentifyingLatentStructures2016}, \citet{mehrabaniEstimationIdentificationLatent2023}, and \citet{wangHomogeneitySparsityAnalysis2024} propose using penalized objective functions to estimate group-specific slope coefficients. \citet{okuiHeterogeneousStructuralBreaks2021} and \citet{lumsdaineEstimationPanelGroup2023} consider models in which group-specific slope parameters and group structure possibly experience structural breaks. The vast literature on the topic also includes \citet{hahnPanelDataModels2010}, \citet{luDeterminingNumberGroups2017}, \citet{liuIdentificationEstimationPanel2020}, \citet{wangIdentifyingLatentGroup2021}, \citet{chetverikovSpectralPostspectralEstimators2022}, \citet{mugnierSimpleComputationallyTrivial2023}, and \citet{yuSpectralClusteringVariance2024}. 

A commonly employed assumption in the above literature is that all latent groups have a size proportional to the number of cross-sections $N$, ruling out cases where one or more groups have a disproportionately small size \citep{bonhommeGroupedPatternsHeterogeneity2015,suIdentifyingLatentStructures2016,andoClusteringHugeNumber2017,luDeterminingNumberGroups2017,wangIdentifyingLatentGroup2021,okuiHeterogeneousStructuralBreaks2021,chetverikovSpectralPostspectralEstimators2022,lumsdaineEstimationPanelGroup2023}. This assumption, however, may be violated in real data analysis, as evidenced in some studies. For example, \citet{andoClusteringHugeNumber2017} apply a clustering algorithm to stock returns data and obtain a group of 2616 cross-sections, and, at the same time, also obtain a group consisting of only 19 units. Another example is \citet{aparicio-perezDisentanglingHeterogeneousEffect2025}, who investigate the nexus between a country's natural resources endowment and economic growth. In their analysis, they obtain 6 groups, one of which consists of 47 countries, while there is a group of only 2 countries. \citet{grunewaldTradeoffIncomeInequality2017} and \citet{martinez-zarzosoSearchingGroupedPatterns2020} also document disproportionality in group size. Although some recent studies establish inferential theory allowing for the presence of small groups \citep{mehrabaniEstimationIdentificationLatent2023,mugnierSimpleComputationallyTrivial2023,wangHomogeneitySparsityAnalysis2024}, the properties of clustering methods are still underexplored in such a situation.

One might think that the assumption of all groups having a size proportional to $N$ is imposed just to simplify the theoretical derivation and is innocuous. But we demonstrate that this is not true, at least as far as the LS estimator and the K-means algorithm are concerned. Specifically, we consider models with group-specific slope coefficients and GFE as considered in \citet{bonhommeGroupedPatternsHeterogeneity2015}, and study the asymptotic behavior of the LS estimators for slope coefficients, GFE, and group memberships and information criteria for the unknown number of groups.

In terms of the model specification and estimators employed, \citet{bonhommeGroupedPatternsHeterogeneity2015} is obviously a precursor to our study, but we extend their analysis in three nontrivial ways. First, we derive sufficient conditions under which the LS estimators are consistent and asymptotically normal, in an environment where at least one group has a size of smaller order than $N$. One of the sufficient conditions implies that a longer sample period $T$ is required as there are smaller groups. This points out a potential limitation of the LS clustering that may arise in short panels with small groups. Second, we study the asymptotic behavior of information criteria for the number of groups. Because \citet{bonhommeGroupedPatternsHeterogeneity2015} propose an information criterion without providing a theoretical analysis, the results drawn in the present work are entirely new. We derive a sufficient condition under which the information criterion is \textit{inconsistent} and \textit{underestimates} the number of groups. We show that an information criterion proposed in \citet{baiDeterminingNumberFactors2002} can satisfy the condition and hence underestimate the number of groups, particularly when there are small groups or the model includes GFE. Although we find that the BIC proposed in \citet{bonhommeGroupedPatternsHeterogeneity2015} does not satisfy the condition for underestimation, our simulation shows that it overestimates the number of groups in finite samples when the model does not include GFE. Finally, we propose modified information criteria (MIC) designed not to satisfy the conditions for inconsistency and to improve the information criteria proposed in \citet{baiDeterminingNumberFactors2002} and \citet{bonhommeGroupedPatternsHeterogeneity2015}. A simulation study shows their good performance in the presence of small groups. 

The scope of the present work is wider than that of earlier studies which consider the panel model with small groups.\footnote{Note that the ``small group" problem we discuss has nothing to do with the ``small cluster" problem considered in \citet{sunHomogeneityPursuitClustered2025}. They consider partitioning $m$ clusters into groups, where $m$ corresponds to $N$ in our notation, and assume that each group has a size proportional to $m$. What they mean by small \textit{clusters} is that each cluster has a small number of observations, which is analogous to a small $T$ (sample period) in our case, and thus does not refer to small \textit{groups} in our sense. See Remark \ref{rem:clarification} below for details.} \citet{wangHomogeneityPursuitPanel2018}, \citet{mehrabaniEstimationIdentificationLatent2023} and \citet{wangHomogeneitySparsityAnalysis2024} propose estimation frameworks based on penalized objective functions and show that their estimators are consistent. However, they study models where heterogeneity manifests itself only through slope coefficients and do not consider models with GFE. Furthermore, we show that a condition assumed in \citet{wangHomogeneityPursuitPanel2018} and \citet{mehrabaniEstimationIdentificationLatent2023} which ensures the consistency of their estimators for the number of groups does not hold in the presence of small groups. \citet{mugnierSimpleComputationallyTrivial2023}, who proposes a clustering method for the GFE model, shows that his estimators are consistent as long as each group contains at least two units. However, he mainly assumes homogeneous slope coefficients across cross-sections and only briefly discusses how his results may be extended to the cases where slope parameters are group-specific. Unlike these works, we study the LS estimator and the estimator for the number of groups, allowing for small groups, group-specific slope parameters, and GFE.

In our empirical application, we apply the K-means algorithm coupled with the proposed MIC to examine the determinants of sales growth of Japanese firms. Using the MIC, we discover a small group in our dataset, which is overlooked when we use \citeauthor{baiDeterminingNumberFactors2002}'s (\citeyear{baiDeterminingNumberFactors2002}) information criterion. This empirical exercise also illustrates that the MIC enables detecting small groups without producing too many groups, unlike the BIC considered in \citet{bonhommeGroupedPatternsHeterogeneity2015}. This allows us to characterize the small group and differentiate it from the other large groups while keeping a reasonably parsimonious structure. 

The remainder of this paper is organized as follows. In Section \ref{sec:model_asm}, we specify the model and spell out conditions under which we work. Section \ref{sec:without_gfe} gives asymptotic results for the LS estimator and the information criterion for the number of groups in models without GFE. In Section \ref{sec:with_gfe}, we extend the analysis given in Section \ref{sec:without_gfe} to models with both group-specific slope parameters and GFE. Section \ref{sec:mc} conducts a simulation study. In Section \ref{sec:empirics}, an empirical application is presented. Section \ref{sec:conclusion} concludes the paper. All mathematical proofs are relegated to Appendix.

\textbf{Notation}: For any $m\times n$ matrix $A=(a_{ij})$, $\|A\|\coloneqq(\sum_{i=1}^m\sum_{j=1}^na_{ij}^2)^{1/2}$ denotes the Frobenius norm of $A$. For any square matrix $A$, $\lambda_{\min}(A)$ and $\lambda_{\max}(A)$ denote the minimum and maximum eigenvalues of $A$, respectively. For any positive integer $n$, $[n]\coloneqq\{1,2,\ldots,n\}$ is the set of positive integers up to $n$. $\stackrel{p}{\to}$ and $\stackrel{d}{\to}$ signify convergence in probability and convergence in distribution as $N,T\to\infty$, respectively, where $N$ is the number of cross-sections and $T$ is the length of the sample period.

\section{Model and Assumptions}\label{sec:model_asm}

\subsection{Model and the LS estimator}

Consider the following panel data model:
\begin{align}
    y_{it}=x_{it}'\theta_{k_i} + \varepsilon_{it}, \ i=1,\ldots,N, \ t=1,\ldots,T,
    \label{model:without_gfe}
\end{align}
where $x_{it}$ is a $p\times 1$ vector of covariates, $k_i\in[K]$ denotes the nonrandom membership of unit $i$, $\theta_k\in\Theta\subset \mathbb{R}^p$ for $k\in[K]$, and $K$ is the number of groups.\footnote{We can also include individual fixed-effects in model \eqref{model:without_gfe}. Including individual fixed-effects does not change the conclusion given later as long as assumptions given below are modified suitably and regressor $x_{it}$ does not include a lagged dependent variable. We work with model \eqref{model:without_gfe} for notational simplicity.} Let $\boldsymbol{\gamma}_N \coloneqq (k_1,\ldots,k_N)\in [K]^N$ be the set of memberships of $N$ cross-sections and $\boldsymbol{\theta}$ denote the set of all $\theta_k$'s. We use $K^0$, $\theta_k^0$ and $k_i^0$ to denote the true values of $K$, $\theta_k$, and $k_i$, respectively. Similarly, $\boldsymbol{\theta}^0$ and $\boldsymbol{\gamma}_N^0$ are the sets of $\theta_k^0$'s and $k_i^0$'s, respectively. Also let $G_k^0\coloneqq\{i\in[N]:k_i^0=k\}$ be the set of cross-sections belonging to the true $k$-th group and $N_k^0\coloneqq|G_k^0|=\sum_{i=1}^N1\{k_i^0=k\}$ be the size (cardinality) of $G_k^0$.

We use the least-squares approach, or K-means, to estimate parameters $\boldsymbol{\theta}^0$ and $\boldsymbol{\gamma}_N^0$. Given a prespecified value $K$ for the number of groups, the least-squares estimator is the solution to the following minimization problem:
\begin{align}
    (\widehat{\boldsymbol{\theta}}^{(K)}, \widehat{\boldsymbol{\gamma}}_N^{(K)}) = \argmin_{(\boldsymbol{\theta},\boldsymbol{\gamma}_N)\in\Theta^K\times[K]^N}\sum_{i=1}^N\sum_{t=1}^T\left(y_{it} - x_{it}'\theta_{k_i}\right)^2,
    \label{def:LS_estimator}
\end{align}
where $\widehat{\boldsymbol{\gamma}}_N^{(K)}=(\widehat{k}_1^{(K)},\ldots,\widehat{k}_N^{(K)})$ is the set of estimated memberships. Let $\widehat{\sigma}^2(K, \widehat{\boldsymbol{\gamma}}_N^{(K)})\coloneqq (NT)^{-1}\sum_{i=1}^N\sum_{t=1}^T\bigl(y_{it}-x_{it}'\widehat{\theta}_{\widehat{k}_i^{(K)}}^{(K)}\bigr)^2$ be the averaged least-squares SSR obtained under $K$ groups.

\subsection{Assumptions}

Suppose that the true group sizes, $N_k^0$, can be expressed as $N_k^0=\tau_kN^{\alpha_k}$ with $\tau_k>0$ and $\alpha_k\in[0,1]$ such that $N_k^0\geq1$ for all $k\in[K^0]$.\footnote{$N=\sum_{k=1}^{K^0}N_k^0$ is implicitly assumed. This condition may require $\tau_k$ to depend on $N$ (i.e., $\tau_{N,k}$), but we omit this dependence on $N$ for notational simplicity. The results and conclusion below do not change as long as there exist some constants $c$ and $C$ such that $c,C\in(0,\infty)$ and $c<\tau_{N,k}<C$ for all $N$ and $k$.} Without loss of generality, we assume $\alpha_1\geq\alpha_2\geq\cdots\geq\alpha_{K^0}$.

\begin{asm}
 (a) $K^0$ is fixed as $N,T\to\infty$. (b) $1=\alpha_1=\alpha_2=\cdots=\alpha_m>\alpha_{m+1}\geq\cdots\geq\alpha_{K^0}\geq0$ for some $m\in[K^0-1]$.
 \label{asm:group_size}
\end{asm}
$m$ is the number of groups with asymptotically nonnegligible relative size (proportional to $N$). We require that at least one group have a size proportional to $N$ ($m\geq1$). The conventional situation where all groups have a nonnegligible size ($m=K^0$) is excluded.

\begin{rem}
    Let us clarify that the ``small cluster" problem considered in \citet{sunHomogeneityPursuitClustered2025} has nothing to do with the ``small group" problem discussed in the present work. Specifically, \citet{sunHomogeneityPursuitClustered2025} consider partitioning $m$ clusters into $K$ groups in the environment where $m\to\infty$ and $K$ is bounded. They assume that the size of each of the $K$ groups, $m_k$ (say), satisfies $m_k/m\to c_k>0$ (Assumption C5), which corresponds to $\alpha_k=1$ for all $k\in[K]$ in our notation. Hence, they consider the situation where all groups have a size proportional to $m$. What they mean by ``small clusters" is that each cluster has a bounded observation $n$. In other words, they deal with a $(m,n)$ clustered data and consider large-$m$, small-$n$ asymptotics, this being analogous to a $(N,T)$ panel with large $N$ and small $T$.
    \label{rem:clarification}
\end{rem}

\begin{asm}
    There exist constants $M>0$ and $c>0$ such that:
    
        \noindent(a) $\Theta$ is a compact subset of $\mathbb{R}^p$.
        
        \noindent(b) $E\left[\left\|x_{it}\right\|^4\right]\leq M$ for all $i,t$.
        
        \noindent(c) $E[\varepsilon_{it}]=0$ and $E[\varepsilon_{it}^4]\leq M$.
        
        \noindent(d) $(NT)^{-1/2}\sum_{i=1}^N\sum_{t=1}^T x_{it}\varepsilon_{it}=O_p(1)$.
        
        \noindent(e) Let $\rho_{NT}(\boldsymbol{\gamma}_N,k,\widetilde{k})$ be the minimum eigenvalue of $(N_k^0)^{-1}\sum_{i=1}^N1\{k_i^0=k\}1\{k_i=\widetilde{k}\}T^{-1}\sum_{t=1}^Tx_{it}x_{it}'$, where $\boldsymbol{\gamma}_N=(k_1,\ldots,k_N)\in[K^0]^N$. There exists a $\widehat{\rho}_{NT}\stackrel{p}{\to}\rho>0$ such that for all $k\in[K^0]$ and sufficiently large $T$, $\min_{\boldsymbol{\gamma}_N\in[K^0]^N}\max_{\widetilde{k}\in[K^0]}\rho_{NT}(\boldsymbol{\gamma}_N,k,\widetilde{k})\geq \widehat{\rho}_{NT}$.
        
        \noindent(f) There exists some $\widehat{c}_T$ such that for any $i$ and sufficiently large $T$, $\lambda_{\min}\left(T^{-1}\sum_{t=1}^Tx_{it}x_{it}'\right)\geq\widehat{c}_T$, and $\lim_{T\to\infty}\widehat{c}_T>c$.
        
        \noindent(g) If $\alpha_{K^0}\geq1/2$, then $N/T^{\nu}\to0$ for some $\nu>0$. If $\alpha_{K^0}<1/2$, then $N^{1-2\alpha_{K^0}}/T\to0$.
        
        \noindent(h) For all $k\neq\widetilde{k}$, $\left\|\theta_k^0 - \theta_{\widetilde{k}}^0\right\|>c$.
        
        \noindent(i) $\sup_{i\in[N]}P\left(T^{-1}\sum_{t=1}^T\left\|x_{it}\right\|^2 \geq M\right) = o(T^{-\delta})$ for all $\delta>0$.
        
        \noindent(j) For any constant $\epsilon>0$, $\sup_{i\in[N]}P\left(\left\|T^{-1}\sum_{t=1}^Tx_{it}\varepsilon_{it}\right\|\geq \epsilon\right)=o(T^{-\delta})$ for all $\delta>0$.
        
        \noindent(k) $(NT)^{-1}\sum_{t=1}^N\sum_{t=1}^T\varepsilon_{it}^2 \stackrel{p}{\to}\sigma^2>0$.
        \label{asm:basic}
\end{asm}

Assumption \ref{asm:basic}(a) requires that the parameter space be compact, a standard assumption in the literature. Assumptions \ref{asm:basic}(b) and (c) are also standard moment conditions. Assumption \ref{asm:basic}(d) essentially requires that $x_{it}\varepsilon_{it}$ be weakly correlated cross-sectionally and temporarily. Assumption \ref{asm:basic}(e) is similar to the usual rank condition in OLS estimation. In particular, this condition implies that $(N_k^0T)^{-1}\sum_{i\in G_k^0}\sum_{t=1}^Tx_{it}x_{it}'$ is uniformly positive definite for sufficiently large $N$ and $T$. Assumption \ref{asm:basic}(f) is similar.\footnote{In fact, (f) implies (e) under Assumption \ref{asm:group_size}(a), but we keep (e) as an assumption for convenience.} Assumption (f) is essentially the same as Assumption 1(iv) of \citet{lumsdaineEstimationPanelGroup2023}.

The first part of Assumption \ref{asm:basic}(g) with $\alpha_{K^0}=1$ is assumed in Corollary 1 of \citet{bonhommeGroupedPatternsHeterogeneity2015}. In the second part, $T$ must diverge faster as $\alpha_{K^0}$ is closer to 0, which implies that a smaller $\alpha_{K^0}$ needs to be compensated by a larger $T$. In particular, $T$ must diverge faster than $N$ if $\alpha_{K^0}=0$. Assumption \ref{asm:basic}(h) guarantees the identification of groups and is called the group separation condition in the literature. Assumption \ref{asm:basic}(i) requires that $T^{-1}\sum_{t=1}^T\|x_{it}\|^2$ be bounded by some (large) $M$ with arbitrary probability for sufficiently large $T$. In Assumption \ref{asm:basic}(j), $T^{-1}\sum_{t=1}^Tx_{it}\varepsilon_{it}$ has an arbitrarily thin tail for sufficiently large $T$. Assumptions \ref{asm:basic}(i) and (j) hold if $x_{it}$ and $x_{it}\varepsilon_{it}$ satisfy certain tail and mixing conditions; see \citet{bonhommeGroupedPatternsHeterogeneity2015}.

The following assumption is exploited to derive the asymptotic distribution of $\widehat{\boldsymbol{\theta}}^{(K^0)}$.
\begin{asm}
    \noindent(a) $E[x_{it}\varepsilon_{it}]=0$ for all $i,t$.
    
    \noindent(b) For all $k\in[K^0]$, there exist positive definite matrices $\Sigma_k$ and $\Omega_k$ such that
        \begin{align}
            \Sigma_k &= \plim_{N,T\to\infty} \frac{1}{N_{k}^0T}\sum_{i\in G_k^0}\sum_{t=1}^Tx_{it}x_{it}', \\
            \Omega_k &=\lim_{N,T\to\infty}\frac{1}{N_{k}^0T}\sum_{i\in G_k^0}\sum_{j\in G_k^0}\sum_{t=1}^T\sum_{s=1}^TE\left[x_{it}x_{js}'\varepsilon_{it}\varepsilon_{js}\right].
        \end{align}
        
    \noindent(c) $(N_{k}^0T)^{-1/2}\sum_{i\in G_k^0}\sum_{t=1}^Tx_{it}\varepsilon_{it}\stackrel{d}{\to} N(0,\Omega_k)$ for all $k\in[K^0]$.
    \label{asm:dist}
\end{asm}

\section{Asymptotic Results in Models without GFE}\label{sec:without_gfe}

\subsection{Consistency and asymptotic normality of the LS estimator when $K^0$ is known}
The following theorem, which is understood up to group relabeling, is obtained assuming that $K=K^0$ is known.

\begin{thm}
    Consider model \eqref{model:without_gfe}. Suppose that Assumptions \ref{asm:group_size}-\ref{asm:dist} hold. Then as $N,T\to\infty$, we have
    
    \noindent(i) $P\left(\bigcup_{i=1}^N\left\{\widehat{k}_i^{(K^0)}\neq k_i^0\right\}\right)=o(1)$,
    
    \noindent(ii) $\sqrt{N_k^0T}\left(\widehat{\theta}_k^{(K^0)} - \theta_k^0\right) \stackrel{d}{\to}N\left(0,\Sigma_k^{-1}\Omega_k\Sigma_k^{-1}\right)$ for each $k\in[K^0]$,
    
    \noindent(iii) $\widehat{\sigma}^2(K^0, \widehat{\boldsymbol{\gamma}}_N^{(K^0)})=(NT)^{-1}\sum_{i=1}^N\sum_{t=1}^T\varepsilon_{it}^2 + O_p\left(\left(NT\right)^{-1}\right) \stackrel{p}{\to}\sigma^2$.
    \label{thm:consistency_normality}
\end{thm}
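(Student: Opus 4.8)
I would prove (i) first and then deduce (ii) and (iii) from it, broadly following the strategy of \citet{bonhommeGroupedPatternsHeterogeneity2015}; the genuinely new work is confined to the small groups $G_{m+1}^0,\dots,G_{K^0}^0$ in part~(i). The starting point for (i) is that $(\boldsymbol{\theta}^0,\boldsymbol{\gamma}_N^0)$ is feasible in \eqref{def:LS_estimator}, so, writing $\widehat{k}_i:=\widehat{k}_i^{(K^0)}$ and $\widehat{\theta}_k:=\widehat{\theta}_k^{(K^0)}$ and substituting $y_{it}=x_{it}'\theta_{k_i^0}^0+\varepsilon_{it}$ into $\widehat{\sigma}^2(K^0,\widehat{\boldsymbol{\gamma}}_N^{(K^0)})\le(NT)^{-1}\sum_i\sum_t\varepsilon_{it}^2$,
\[
\frac{1}{NT}\sum_{i=1}^N\sum_{t=1}^T\bigl(x_{it}'(\theta_{k_i^0}^0-\widehat{\theta}_{\widehat{k}_i})\bigr)^2\ \le\ -\frac{2}{NT}\sum_{i=1}^N\sum_{t=1}^T x_{it}'(\theta_{k_i^0}^0-\widehat{\theta}_{\widehat{k}_i})\varepsilon_{it}.
\]
I would bound the right-hand side by $2\,\mathrm{diam}(\Theta)\,N^{-1}\sum_i\|T^{-1}\sum_t x_{it}\varepsilon_{it}\|$, which is $o_p(1)$ because each summand tends to $0$ in $L^1$ uniformly in $i$ (the $L^2$-bound from Assumptions~\ref{asm:basic}(b),(c) plus convergence in probability from Assumption~\ref{asm:basic}(j)), and the left-hand side below by $\widehat{c}_T\,N^{-1}\sum_i\|\widehat{\theta}_{\widehat{k}_i}-\theta_{k_i^0}^0\|^2$ using Assumption~\ref{asm:basic}(f); hence $N^{-1}\sum_i\|\widehat{\theta}_{\widehat{k}_i}-\theta_{k_i^0}^0\|^2=o_p(1)$. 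For a large group $k\le m$ this gives $(N_k^0)^{-1}\sum_{i\in G_k^0}\|\widehat{\theta}_{\widehat{k}_i}-\theta_k^0\|^2=o_p(1)$ since $N_k^0/N\to\tau_k>0$, so at least $m$ of the $\widehat{\theta}_k$ converge to the $m$ distinct large-group values and, by the separation condition~\ref{asm:basic}(h), the large groups occupy $m$ distinct labels.

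The delicate part of (i) is to show that the remaining $K^0-m$ labels converge, up to relabeling, to $\theta_{m+1}^0,\dots,\theta_{K^0}^0$. For this I would use an exchange argument: were a small-group value $\theta_{k_0}^0$ ($k_0>m$) left unmatched, Assumption~\ref{asm:basic}(h) would force every unit of $G_{k_0}^0$ to be fitted by a slope bounded away from $\theta_{k_0}^0$, so a perturbation that reroutes $G_{k_0}^0$ to a duplicated/idle label and re-optimizes would cut the SSR by an amount of order $N_{k_0}^0T$; since this dominates the stochastic error incurred by the perturbation precisely under the rate in Assumption~\ref{asm:basic}(g) — so that a smaller $\alpha_{K^0}$ must be offset by a larger $T$ — optimality of the least-squares solution would be contradicted. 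Iterating over the distinct values of the $\alpha_k$'s then gives $\widehat{\theta}_k\stackrel{p}{\to}\theta_k^0$ for all $k\in[K^0]$ after a relabeling. With that relabeling fixed, I would finish (i) on the event $\mathcal{E}_\epsilon=\{\|\widehat{\theta}_k-\theta_k^0\|\le\epsilon\ \forall k\}\cap\{T^{-1}\sum_t\|x_{it}\|^2\le M\ \forall i\}$, which has probability $\to1$ for small fixed $\epsilon$ by the above and Assumption~\ref{asm:basic}(i): on $\mathcal{E}_\epsilon$, a unit $i\in G_j^0$ with $\widehat{k}_i=k\neq j$ satisfies $\sum_t(y_{it}-x_{it}'\widehat{\theta}_k)^2\le\sum_t(y_{it}-x_{it}'\widehat{\theta}_j)^2$, and expanding with $y_{it}=x_{it}'\theta_j^0+\varepsilon_{it}$ and bounding the deterministic part below via Assumptions~\ref{asm:basic}(f),(h) (using $\lambda_{\max}(T^{-1}\sum_t x_{it}x_{it}')\le M$ on $\mathcal{E}_\epsilon$) forces $\|T^{-1}\sum_t x_{it}\varepsilon_{it}\|\ge\kappa$ for a fixed $\kappa>0$. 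By Assumption~\ref{asm:basic}(j) this has probability $o(T^{-\delta})$ for every $\delta>0$ uniformly in $i$, and since Assumption~\ref{asm:basic}(g) makes $N$ grow only polynomially in $T$, a union bound yields $P\bigl(\bigcup_i\{\widehat{k}_i^{(K^0)}\neq k_i^0\}\bigr)\le P(\mathcal{E}_\epsilon^c)+N\cdot o(T^{-\delta})=o(1)$.

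For (ii) and (iii), I would condition on $\{\widehat{k}_i^{(K^0)}=k_i^0\ \forall i\}$, which by (i) has probability $\to1$ and whose complement is therefore asymptotically negligible for both statements. On this event the $k$-th estimated group is $G_k^0$, so $\widehat{\theta}_k$ is pooled OLS on $G_k^0$ and
\[
\sqrt{N_k^0T}\,(\widehat{\theta}_k-\theta_k^0)=\Bigl(\tfrac{1}{N_k^0T}\textstyle\sum_{i\in G_k^0}\sum_t x_{it}x_{it}'\Bigr)^{-1}\tfrac{1}{\sqrt{N_k^0T}}\textstyle\sum_{i\in G_k^0}\sum_t x_{it}\varepsilon_{it};
\]
Assumption~\ref{asm:dist}(b) handles the first factor, Assumptions~\ref{asm:dist}(a),(c) give the CLT for the second, and Slutsky's theorem delivers (ii). For (iii), on the same event I would write $\widehat{\sigma}^2(K^0,\widehat{\boldsymbol{\gamma}}_N^{(K^0)})=(NT)^{-1}\sum_i\sum_t\varepsilon_{it}^2+R_{1,NT}+R_{2,NT}$ with $R_{1,NT}=-2(NT)^{-1}\sum_k(\widehat{\theta}_k-\theta_k^0)'\sum_{i\in G_k^0}\sum_t x_{it}\varepsilon_{it}$ and $R_{2,NT}=(NT)^{-1}\sum_k(\widehat{\theta}_k-\theta_k^0)'\bigl(\sum_{i\in G_k^0}\sum_t x_{it}x_{it}'\bigr)(\widehat{\theta}_k-\theta_k^0)$; by (ii), $\|\widehat{\theta}_k-\theta_k^0\|=O_p((N_k^0T)^{-1/2})$, while Assumptions~\ref{asm:basic}(b),(d) give $\|\sum_{i\in G_k^0}\sum_t x_{it}\varepsilon_{it}\|=O_p((N_k^0T)^{1/2})$ and $\|\sum_{i\in G_k^0}\sum_t x_{it}x_{it}'\|=O_p(N_k^0T)$, so $R_{1,NT}=O_p((NT)^{-1})$ and $R_{2,NT}=O_p((NT)^{-1})$, and the leading term tends to $\sigma^2$ by Assumption~\ref{asm:basic}(k).

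The routine parts of this program are (ii), (iii), and the uniform-classification step of (i). The main obstacle is the preliminary-consistency step of (i) — establishing that the small-group parameters are estimated consistently at all, equivalently that the estimated and true parameter vectors are in bijection even though the small groups carry a vanishing share of the observations. This is precisely where the sample-size requirement of Assumption~\ref{asm:basic}(g) is indispensable, and it has no analogue in \citet{bonhommeGroupedPatternsHeterogeneity2015}.
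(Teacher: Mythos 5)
Your overall architecture (preliminary consistency of $\widehat{\boldsymbol{\theta}}^{(K^0)}$, then a per-unit classification bound with a union bound over $i$, then (ii) and (iii) on the event of perfect classification) matches the paper's, and your treatment of (ii), (iii) and of the final union-bound step of (i) is essentially the paper's argument. The divergence — and the gap — is in how you establish that the \emph{small-group} slopes are matched by some label, which you correctly identify as the crux.

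The problem is that your first step throws away the rate. From the basic inequality you obtain only $N^{-1}\sum_i\|\widehat{\theta}_{\widehat{k}_i}-\theta^0_{k_i^0}\|^2=o_p(1)$, which is silent about groups with $N_k^0/N\to0$, and you then try to recover them by an exchange argument. But the cost--benefit comparison in that argument does not close at the level of precision you have available. The gain from correctly fitting an unmatched small group $G_{k_0}^0$ is of order $N_{k_0}^0T$. The cost of freeing a label is the loss from consolidating two labels $k_1,k_2$ that both sit within the $o_p(1)$ consistency radius $\epsilon$ of some large-group value $\theta_j^0$ (or from evicting the units currently on an idle label); a direct computation bounds that loss only by terms of order $N_j^0T\,\epsilon^2$ and $\epsilon\cdot N_j^0T\cdot\max_i\|T^{-1}\sum_tx_{it}\varepsilon_{it}\|$, i.e.\ by $o_p(NT)$ with no rate. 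Since the gain $N_{k_0}^0T$ is itself $o(NT)$, neither side dominates, and Assumption \ref{asm:basic}(g) does not enter the comparison in the way you assert. To make the exchange work you would first need the duplicated large-group labels to be consistent at a rate faster than $N^{(\alpha_{k_0}-1)/2}$, which your preliminary step does not deliver and which essentially requires the oracle-equivalence machinery you were trying to avoid.

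The paper sidesteps this entirely. Its Lemma \ref{lem:Qhat_Qtilde} shows, via Assumption \ref{asm:basic}(d), that $\widehat{Q}$ and the noiseless objective $\widetilde{Q}$ differ by $O_p(1/\sqrt{NT})$ \emph{uniformly} over $(\boldsymbol{\theta},\boldsymbol{\gamma}_N)$, so $\widetilde{Q}(\widehat{\boldsymbol{\theta}}^{(K^0)},\widehat{\boldsymbol{\gamma}}_N^{(K^0)})-\widetilde{Q}(\boldsymbol{\theta}^0,\boldsymbol{\gamma}_N^0)=O_p(1/\sqrt{NT})$. It then lower-bounds this difference by $\widehat{\rho}_{NT}\sum_k(N_k^0/N)\min_{\widetilde{k}}\|\theta_k^0-\theta_{\widetilde{k}}\|^2$ using the grouping-uniform eigenvalue condition of Assumption \ref{asm:basic}(e) — a condition your sketch never invokes — and divides by the smallest weight $N_{K^0}^0/N\asymp N^{\alpha_{K^0}-1}$ to get $\max_k\min_{\widetilde{k}}\|\theta_k^0-\widehat{\theta}_{\widetilde{k}}^{(K^0)}\|^2=O_p(N^{1/2-\alpha_{K^0}}/\sqrt{T})$, which is $o_p(1)$ precisely by Assumption \ref{asm:basic}(g). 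That one display is where the small groups are handled and where (g) does its work; no perturbation of the optimizer is needed. I would recommend replacing your exchange argument with this weighted-Hausdorff bound, after which the rest of your plan goes through.
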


\citet{bonhommeGroupedPatternsHeterogeneity2015} and \citet{lumsdaineEstimationPanelGroup2023} derive the same results as Theorem \ref{thm:consistency_normality}(i) and (ii), but they assume that all groups have a size proportional to $N$ (i.e., $\alpha_k=1$ for all $k$). Theorem 1 extends their results to models with small groups. A key condition is Assumption \ref{asm:basic}(g), which requires that $T$ be sufficiently large adaptively to the size of the smallest $K^0$-th group. We cannot ensure consistency and asymptotic normality without this assumption. This warns us about a potential limitation of K-means that may arise in panel data with small $T$ and small groups. Such a situation does not seem unusual, as researchers often apply clustering methods to panels of short to moderate length \citep{bonhommeGroupedPatternsHeterogeneity2015,suIdentifyingLatentStructures2016,martinez-zarzosoSearchingGroupedPatterns2020}.

\subsection{Asymptotics for the information criterion for the number of groups}

Now, we turn to the problem of estimating $K^0$ by using the information criterion. Referring to \citet{baiDeterminingNumberFactors2002} and \citet{baiPanelDataModels2009}, \citet{bonhommeGroupedPatternsHeterogeneity2015} propose the following criterion:
\begin{align}
    \mathrm{IC}(K,h_{NT}) \coloneqq \widehat{\sigma}^2(K,\widehat{\boldsymbol{\gamma}}_N^{(K)}) + n(K)\widetilde{\sigma}^2h_{NT},
    \label{def:ic}
\end{align}
where $n(K)$ is the number of parameters under $K$ groups, $\widetilde{\sigma}^2$ is a consistent estimate of $\sigma^2=\plim (NT)^{-1}\sum_{i=1}^N\sum_{t=1}^T\varepsilon_{it}^2$, and $h_{NT}>0$ is a penalty term satisfying $h_{NT}\to0$ as $N,T\to\infty$.\footnote{$\widetilde{\sigma}^2$ is typically constructed as $\widetilde{\sigma}^2=\{NT-n(K_{\max})\}^{-1}\times NT\widehat{\sigma}^2(K_{\max},\widehat{\boldsymbol{\gamma}}_N^{(K_{\max})})$ \citep{baiDeterminingNumberFactors2002,bonhommeGroupedPatternsHeterogeneity2015}.} $n(K)=N+pK$ in model \eqref{model:without_gfe}. The estimated number of groups is
\begin{align}
    \widehat{K}(h_{NT}) \coloneqq \argmin_{K\in[K_{\max}]}\mathrm{IC}(K,h_{NT})
\end{align}
for some $K_{\max}\geq K^0$.

\begin{prop}
    Consider model \eqref{model:without_gfe}. Suppose that Assumptions \ref{asm:group_size}-\ref{asm:dist} hold.
    
    \noindent(i) If 
    \begin{align}
        NTh_{NT} \to\infty, \label{cond:not_overestimate}
    \end{align}
    then we have $P(\widehat{K}(h_{NT}) \leq K^0) \to 1$ as $N,T\to\infty$.

    \noindent(ii) If
    \begin{align}
        N^{1-\alpha_{m+1}}h_{NT} \to \infty, \label{cond:underestimate}
    \end{align}
    then we have $P(\widehat{K}(h_{NT}) < K^0) \to 1$ as $N,T\to\infty$.
    \label{prop:ic_inconsistency}
\end{prop}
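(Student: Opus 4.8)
The plan is to analyze $\mathrm{IC}(K,h_{NT})$ separately for $K<K^0$ (underestimation) and $K\geq K^0$ (no overestimation), building on the behavior of the least-squares SSR $\widehat{\sigma}^2(K,\widehat{\boldsymbol{\gamma}}_N^{(K)})$ as a function of the trial number of groups. For part (i), the key is that for $K\geq K^0$, the minimized SSR cannot fall much below $\widehat{\sigma}^2(K^0,\widehat{\boldsymbol{\gamma}}_N^{(K^0)})$: since $K^0$ groups already ``fit'' the data (Theorem \ref{thm:consistency_normality}(iii) gives $\widehat{\sigma}^2(K^0,\widehat{\boldsymbol{\gamma}}_N^{(K^0)}) = (NT)^{-1}\sum_{i,t}\varepsilon_{it}^2 + O_p((NT)^{-1})$), adding more groups can only reduce the SSR by an $O_p((NT)^{-1})$ amount (one can bound this: extra groups partition the residuals further, and under Assumption \ref{asm:basic} the incremental in-sample overfit from $pK$ extra regression directions is $O_p((NT)^{-1})$ uniformly over $K\leq K_{\max}$). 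Meanwhile the penalty difference $\mathrm{IC}(K,h_{NT})-\mathrm{IC}(K^0,h_{NT})$ includes a term $p(K-K^0)\widetilde\sigma^2 h_{NT}$, which dominates the SSR gain precisely when $NTh_{NT}\to\infty$. Hence, with probability tending to one, no $K>K^0$ beats $K^0$, giving $P(\widehat K(h_{NT})\leq K^0)\to1$.

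For part (ii), the driving fact is that under-specifying the number of groups forces at least one true group to be misclassified, producing a non-vanishing bias in the SSR. Concretely, I would first establish a lower bound: for any $K<K^0$ and any membership assignment $\boldsymbol\gamma_N\in[K]^N$, at least two true groups must be merged, so there exist indices $k\neq\widetilde k$ with $\theta_k^0\neq\theta_{\widetilde k}^0$ (by Assumption \ref{asm:basic}(h), separated by $c$) whose units are pooled under a common fitted coefficient. The pooled-regression residual sum of squares over those units exceeds a positive multiple of $\min(N_k^0,N_{\widetilde k}^0)\cdot T\cdot \|\theta_k^0-\theta_{\widetilde k}^0\|^2$ times the relevant eigenvalue (using Assumptions \ref{asm:basic}(e)–(f) to bound the design from below), up to lower-order stochastic terms controlled by Assumption \ref{asm:basic}(d). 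The cheapest such merge to make is to absorb the smallest group, group $K^0$ — but more carefully, the cheapest is to merge two of the groups among $\{m+1,\ldots,K^0\}$ of the smallest orders; in any case, merging \emph{some} pair costs at least on the order of $N^{\alpha_{m+1}}T$ in the summed SSR, i.e. at least order $N^{\alpha_{m+1}-1}$ in $\widehat\sigma^2$ after dividing by $NT$. (Here I use that $\alpha_{m+1}$ is the largest exponent among the small groups, so any assignment with $K<K^0$ groups must merge at least one group of size $\Theta(N^{\alpha_{j}})$ with $j\le m+1$ into another; the smallest achievable cost is governed by $N^{\alpha_{m+1}}$.) Therefore $\widehat\sigma^2(K,\widehat{\boldsymbol\gamma}_N^{(K)}) - \widehat\sigma^2(K^0,\widehat{\boldsymbol\gamma}_N^{(K^0)})$ is bounded below by a positive constant times $N^{\alpha_{m+1}-1} = N^{-(1-\alpha_{m+1})}$ with probability approaching one. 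On the other hand, $\mathrm{IC}(K,h_{NT})-\mathrm{IC}(K^0,h_{NT})$ equals this SSR gap minus $p(K^0-K)\widetilde\sigma^2 h_{NT}$, and condition \eqref{cond:underestimate}, $N^{1-\alpha_{m+1}}h_{NT}\to\infty$, is exactly what makes the penalty saving dominate: $h_{NT}\gg N^{-(1-\alpha_{m+1})}$, so $\mathrm{IC}(K,h_{NT}) < \mathrm{IC}(K^0,h_{NT})$ eventually, for $K = K^0-1$ in particular. Combining with any $K$ in $[1,K^0-1]$ being at least as good as $K^0$ minus a growing penalty term, we conclude $P(\widehat K(h_{NT}) < K^0)\to1$.

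The main obstacle is the lower bound in part (ii): pinning down the correct rate $N^{\alpha_{m+1}}T$ for the minimal SSR inflation caused by merging groups. This requires (a) a clean combinatorial argument that any $\boldsymbol\gamma_N\in[K]^N$ with $K<K^0$ merges a specific pair of true groups whose sizes can be taken as small as $\Theta(N^{\alpha_{m+1}})$ and $\Theta(N^{\alpha_{m}})$ respectively — actually one must be careful: merging the single smallest group into a large one costs order $N^{\alpha_{K^0}}T$, which could be smaller, so the binding merge is the one attaining the minimum over feasible merges, and a short argument shows this minimum is at least $\Theta(N^{\alpha_{m+1}}T)$ only if one is forced to merge \emph{two} small groups or a small group with comparably small design; I would instead directly argue that reducing to $K^0-1$ groups costs at least $\Theta(N^{\alpha_{K^0}}T)$ and restate the hypothesis accordingly, or confirm that $\alpha_{m+1}$ is indeed the right exponent by a careful case analysis — and (b) uniform control, over all $\binom{N}{\cdot}$ partitions, of the stochastic remainder terms (cross-products of $x_{it}\varepsilon_{it}$ with the coefficient-estimation error), which is handled by Assumption \ref{asm:basic}(d),(i),(j) in the same spirit as the proof of Theorem \ref{thm:consistency_normality}(i). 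The no-overestimation direction in part (i) is comparatively routine, relying only on the $O_p((NT)^{-1})$ overfitting bound and $NTh_{NT}\to\infty$.
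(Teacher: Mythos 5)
Your part (i) follows the paper's route in spirit, but the step you wave at (``extra groups \ldots\ incremental in-sample overfit \ldots\ is $O_p((NT)^{-1})$'') is exactly the nontrivial point: $\widehat{\boldsymbol{\gamma}}_N^{(\overline{K})}$ is chosen by minimizing over \emph{all} partitions into $\overline{K}$ groups and need not respect the true groups, so you cannot treat the gain as coming from a fixed set of $p(\overline{K}-K^0)$ extra regressors. The paper's device is to pass to the common refinement $\overline{\boldsymbol{\gamma}}_N(K^*)$ of $\boldsymbol{\gamma}_N^0$ and $\widehat{\boldsymbol{\gamma}}_N^{(\overline{K})}$: it fits at least as well as $\widehat{\boldsymbol{\gamma}}_N^{(\overline{K})}$, it lies in $\overline{\Gamma}_N(K^*)$ so each of its cells sits inside a single true group, and Lemma \ref{lem:overfit} then gives $\widehat{\sigma}^2(K^*,\overline{\boldsymbol{\gamma}}_N(K^*))=\widehat{\sigma}^2(K^0,\widehat{\boldsymbol{\gamma}}_N^{(K^0)})+O_p((NT)^{-1})$. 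You should add that construction; the rest of part (i) is fine.

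Part (ii) has a genuine gap: your inequality points the wrong way. To conclude $\mathrm{IC}(\underline{K},h_{NT})<\mathrm{IC}(K^0,h_{NT})$ you need
\begin{align}
\widehat{\sigma}^2(\underline{K},\widehat{\boldsymbol{\gamma}}_N^{(\underline{K})})-\widehat{\sigma}^2(K^0,\widehat{\boldsymbol{\gamma}}_N^{(K^0)}) \;<\; p\,(K^0-\underline{K})\,\widetilde{\sigma}^2h_{NT},
\end{align}
i.e.\ an \emph{upper} bound on the loss of fit from underfitting that is $o_p(h_{NT})$. A lower bound of order $N^{\alpha_{m+1}-1}$ on this gap (which is what you propose to prove) is compatible with the gap being a fixed constant, in which case the penalty saving $p(K^0-\underline{K})\widetilde{\sigma}^2h_{NT}\to0$ would \emph{not} dominate and the IC would \emph{not} underestimate; a lower bound of this kind is the ingredient of a consistency proof (condition \eqref{cond:worse_fit} in Remark \ref{rem:inconsistency}), not of the inconsistency claim here. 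The paper instead exhibits, for each $\underline{K}\in\{m,\ldots,K^0-1\}$, an explicit cheap grouping $\boldsymbol{\gamma}_N(\underline{K},m)\in\Gamma_N(\underline{K},m)$ that keeps the $m$ large groups essentially intact and assigns the $O(N^{\alpha_{m+1}})$ units of all small groups arbitrarily; Lemma \ref{lem:underfit} shows the resulting coefficient estimates for the large groups are only $O_p(\max\{(NT)^{-1/2},N^{\alpha_{m+1}-1}\})$ off, whence $\widehat{\sigma}^2(\underline{K},\boldsymbol{\gamma}_N(\underline{K},m))=(NT)^{-1}\sum_{i,t}\varepsilon_{it}^2+O_p(N^{\alpha_{m+1}-1})$, and minimality gives $\widehat{\sigma}^2(\underline{K},\widehat{\boldsymbol{\gamma}}_N^{(\underline{K})})\leq\widehat{\sigma}^2(\underline{K},\boldsymbol{\gamma}_N(\underline{K},m))$. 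Here $\alpha_{m+1}$ enters as the order of the \emph{total} size of all small groups ($K^0$ fixed and $\alpha_{m+1}\geq\cdots\geq\alpha_{K^0}$), not as the cost of the cheapest pairwise merge; this also resolves the $\alpha_{m+1}$-versus-$\alpha_{K^0}$ puzzle you raise at the end, which is symptomatic of analyzing the wrong side of the inequality. Your final assembly (one $\underline{K}<K^0$ beating $K^0$, combined with part (i)) is correct once the upper bound is in place.
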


Proposition \ref{prop:ic_inconsistency} has two parts. Part (i) gives a sufficient condition on $h_{NT}$ for $\widehat{K}(h_{NT})$ not to overestimate. The second part is an inconsistency result. Condition \eqref{cond:underestimate} tells us what kind of $h_{NT}$ we should \textit{not} use. In the next subsection, we will check whether popular penalty choices satisfy \eqref{cond:not_overestimate} and \eqref{cond:underestimate}.

\begin{rem}
In the proof of Proposition \ref{prop:ic_inconsistency}, we show that for any $\underline{K}\in\{m,m+1,\ldots,K^0-1\}$, there exists a sequence of a set of groupings $\Gamma_N(\underline{K},m)$ such that for all $\boldsymbol{\gamma}_N(\underline{K},m)=(k_1,\ldots,k_N)\in\Gamma_N(\underline{K},m)$, 
\begin{align}
    \widehat{\sigma}^2(\underline{K},\widehat{\boldsymbol{\gamma}}_N^{(\underline{K})}) \leq \widehat{\sigma}^2(\underline{K},\boldsymbol{\gamma}_N(\underline{K},m)) = \widehat{\sigma}^2(K^0,\widehat{\boldsymbol{\gamma}}_N^{(K^0)}) + O_p(N^{\alpha_{m+1}-1}),
    \label{ineq:first_term_IC}
\end{align}
where $\widehat{\sigma}^2(K,\boldsymbol{\gamma}_N)$ is the averaged SSR obtained from the OLS estimation based on a given partition $\boldsymbol{\gamma}_N\in[K]^N$. This implies that the first terms of $\mathrm{IC}(\underline{K},h_{NT})$ and $\mathrm{IC}(K^0,h_{NT})$ differ only marginally, the difference being of order $O_p(N^{\alpha_{m+1}-1})$. Because $n(\underline{K})h_{NT} < n(K^0)h_{NT}$, $\mathrm{IC}(\underline{K},h_{NT})$ is asymptotically smaller than $\mathrm{IC}(K^0,h_{NT})$ if the difference between the second terms dominates the difference between the first terms, which is ensured under condition \eqref{cond:underestimate}.

A typical argument to prove the consistency of information criteria involves assuming or arguing that there exists a fixed constant $c>0$ such that
\begin{align}
    \min_{1\leq \underline{K}<K^0}\min_{\gamma_N\in[\underline{K}]^N} \hat{\sigma}^2(\underline{K},\gamma_N) - \frac{1}{NT}\sum_{i=1}^N\sum_{t=1}^T\varepsilon_{it}^2 > c,
    \label{cond:worse_fit}
\end{align}
w.p.a.1 \citep{suIdentifyingLatentStructures2016,wangHomogeneityPursuitPanel2018,wangIdentifyingLatentGroup2021,mehrabaniEstimationIdentificationLatent2023,sunHomogeneityPursuitClustered2025}. Looking at \eqref{ineq:first_term_IC} and Theorem \ref{thm:consistency_normality}(iii) reveals that \eqref{cond:worse_fit} does not hold in the presence of small groups.
\label{rem:inconsistency}
\end{rem}

\subsection{Examples of penalty $h_{NT}$}

We consider two examples. The first one is the penalty proposed in \citet{baiDeterminingNumberFactors2002}, of the form
\begin{align}
    h_{NT}^{\mathrm{BN}} \coloneqq \frac{\ln(\min\{N,T\})}{\min\{N,T\}}.
    \label{def:bn}
\end{align}
Clearly, $h_{NT}^{\mathrm{BN}}$ satisfies \eqref{cond:not_overestimate}, and hence $\widehat{K}(h_{NT}^{\mathrm{BN}})$ asymptotically does not overestimate. If $N/T\to c\in[0,\infty)$, condition \eqref{cond:underestimate} is not satisfied under $h_{NT}=h_{NT}^{\mathrm{BN}}$, because $N^{1-\alpha_{m+1}}h_{NT}^{\mathrm{BN}}=N^{1-\alpha_{m+1}}(\ln N)/N\to0$ (unless $\alpha_{m+1}=0$). If $N/T\to\infty$, then $N^{1-\alpha_{m+1}}h_{NT}^{\mathrm{BN}}=N^{1-\alpha_{m+1}}(\ln T)/T$. Recalling Assumption \ref{asm:basic}(g), we can easily verify that, when $\alpha_{K^0}\geq1/2$, condition \eqref{cond:underestimate} holds if $\nu>(1-\alpha_{m+1})^{-1}$ and $N/T^{\nu'}\to c \in (0,\infty]$ for some $\nu'\in[(1-\alpha_{m+1})^{-1},\nu)$, where $\nu$ is the number appearing in Assumption \ref{asm:basic}(g). When $\alpha_{K^0}<1/2$ and $N/T\to\infty$, and under Assumption \ref{asm:basic}(g), condition \eqref{cond:underestimate} holds if, for example, $T=cN^{\delta}$ for some $c>0$ and some $\delta\in(1-2\alpha_{K^0},1-\alpha_{m+1}]$. Under these conditions, $\widehat{K}(h_{NT}^{\mathrm{BN}})$ asymptotically underestimates. Intuitively speaking, $\widehat{K}(h_{NT}^{\mathrm{BN}})$ cannot capture a small amount of information coming from cross-sections that belong to small groups, so that the small groups can be overlooked.

In sum, $\widehat{K}(h_{NT}^{\mathrm{BN}})$ can underestimate the number of groups if $N/T\to\infty$, as long as the regularity conditions hold. This is another result that warns us about a pitfall we may face when dealing with short panels. Furthermore, the above analysis reveals that $\widehat{K}(h_{NT}^{\mathrm{BN}})$ can be inconsistent if $N$ is disproportionately large relative to $T$, even if $T$ itself is ``large" in the absolute sense. We emphasize that not only the value of $(N,T)$ but also the ratio $N/T$ matters.

Another example is the BIC-type penalty proposed in the supplementary materials to \citet{baiPanelDataModels2009} and \citet{bonhommeGroupedPatternsHeterogeneity2015} (in models with interactive fixed effects):
\begin{align}
    h_{NT}^{\mathrm{BIC}} \coloneqq \frac{\ln(NT)}{NT}.
\end{align}

It can be easily verified that \eqref{cond:not_overestimate} holds under $h_{NT}=h_{NT}^{\mathrm{BIC}}$. Moreover, $h_{NT}=h_{NT}^{\mathrm{BIC}}$ does not satisfy \eqref{cond:underestimate} unless $\alpha_{m+1}=0$ and $N=\exp(T^{\beta})$ for some $\beta>1$, which is a rare situation in typical applications. However, as pointed out by \citet{bonhommeGroupedPatternsHeterogeneity2015}, $\widehat{K}(h_{NT}^{\mathrm{BIC}})$ may overestimate the number of groups in finite samples. Indeed, our simulation (reported in Section \ref{sec:mc}) documents this tendency, indicating that $h_{NT}=h_{NT}^{\mathrm{BIC}}$ is not suitable unless a large sample is available.\footnote{This is not the case when we consider models with GFE; see Section \ref{sec:with_gfe}.}

\subsection{Modified information criterion}

The above analysis provides us with an insight into what kind of penalty may be useful. On the one hand, $\widehat{K}(h_{NT}^{\mathrm{BN}})$ can be inconsistent when $N/T\to \infty$, because $h_{NT}^{\mathrm{BN}}$ shrinks toward zero so \textit{slowly} that \eqref{cond:underestimate} holds. On the other hand, the shrinking rate of $h_{NT}^{\mathrm{BIC}}$ is fast enough to prevent \eqref{cond:underestimate} from being satisfied, but this rate is so \textit{fast} that $\widehat{K}(h_{NT}^{\mathrm{BIC}})$ overestimates the number of groups in finite samples. Therefore, a penalty that performs well needs to: (i) satisfy \eqref{cond:not_overestimate}, (ii) not satisfy \eqref{cond:underestimate}, (iii) shrink toward zero faster than $h_{NT}^{\mathrm{BN}}$ when $N/T\to\infty$, and (iv) shrink more slowly than $h_{NT}^{\mathrm{BIC}}$. Given this observation, we propose the following penalty:
\begin{align}
    h_{NT}^{\mathrm{MIC}1} \coloneqq \begin{cases}\displaystyle
        \frac{\ln N}{N} & \mathrm{if} \ N\leq T \\[2ex] \displaystyle
        0.5\frac{\ln (NT)}{N} & \mathrm{if} \ N> T
    \end{cases}.
\end{align}

It is straightforward to check that $h_{NT}^{\mathrm{MIC}1}$ satisfies \eqref{cond:not_overestimate} but does not satisfy \eqref{cond:underestimate} if $\alpha_{m+1}>0$.\footnote{If $\alpha_{m+1}=0$, it suffices to replace the denominator of $h_{NT}^{\mathrm{MIC}1}$ with $N^{1+\epsilon}$ for some small $\epsilon>0$.} $h_{NT}^{\mathrm{MIC}1}$ takes the same form as $h_{NT}^{\mathrm{BN}}$ when $N\leq T$, and is asymptotically equivalent to $h_{NT}^{\mathrm{BN}}$ if $N/T\to c\in [0,\infty)$. This is because our simulation shows that $h_{NT}^{\mathrm{BN}}$ performs well in this case. When $N/T\to\infty$, $h_{NT}^{\mathrm{MIC}1}=0.5\ln(NT)/N \leq (\ln N)/N \ll (\ln T)/T=h_{NT}^{\mathrm{BN}}$. Finally, $h_{NT}^{\mathrm{MIC}1} \gg h_{NT}^{\mathrm{BIC}}$. $h_{NT}^{\mathrm{MIC}1}$ satisfies all the required conditions.\footnote{A scaling constant used in the definition of $h_{NT}^{\mathrm{MIC}1}$ is not important asymptotically. We adjust a scaling constant just to preserve continuity of $h_{NT}^{\mathrm{MIC}1}$ in $(N,T)$. Although the choice of a scaling constant may affect the finite-sample performance, we do not explore in this direction.} In our simulation, $\widehat{K}(h_{NT}^{\mathrm{MIC}1})$ shows a good finite-sample performance; see Section \ref{sec:mc}.

\begin{rem}
    Note that we do not prove the consistency of $\widehat{K}(h_{NT}^{\mathrm{MIC}1})$. In the literature of panel data clustering, the consistency of information criteria is typically established assuming \eqref{cond:worse_fit} \citep{suIdentifyingLatentStructures2016,wangHomogeneityPursuitPanel2018,wangIdentifyingLatentGroup2021,mehrabaniEstimationIdentificationLatent2023}. As we discussed in Remark \ref{rem:inconsistency}, however, \eqref{cond:worse_fit} does not hold in the presence of small groups. It seems difficult to establish the consistency of $\widehat{K}(h_{NT}^{\mathrm{MIC}1})$ without \eqref{cond:worse_fit}.
\end{rem}

\section{The Case of Models with GFE}\label{sec:with_gfe}

Since the introduction by \citet{bonhommeGroupedPatternsHeterogeneity2015}, panel data models with GFE have drawn researchers' attention \citep[e.g.,][]{chetverikovSpectralPostspectralEstimators2022,mugnierSimpleComputationallyTrivial2023}. In this section, we consider the following model:
\begin{align}
    y_{it}=x_{it}'\theta_{k_i} + \mu_{k_it} + \varepsilon_{it}, \ i=1,\ldots,N, \ t=1,\ldots,T,
    \label{model:with_gfe}
\end{align}
where $\mu_{kt}\in\mathcal{M}\subset\mathbb{R}$ for $k\in[K]$ and $t\in[T]$. Here, $\mu_{k_it}$ is introduced as GFE into model \eqref{model:without_gfe}. To simplify the discussion, we assume $\mu_{kt}$'s are nonrandom. Define $\mu_k\coloneqq(\mu_{k1},\ldots,\mu_{kT})'$ for all $k$, and let $\boldsymbol{\mu}$ denote the set of all $\mu_k$'s. As earlier, we use the superscripts $0$ to refer to the true parameter values.

The least-squares estimator under $K$ groups is
\begin{align}
    (\widehat{\boldsymbol{\theta}}^{(K)}, \widehat{\boldsymbol{\mu}}^{(K)}, \widehat{\boldsymbol{\gamma}}_N^{(K)}) = \argmin_{(\boldsymbol{\theta},\boldsymbol{\mu},\boldsymbol{\gamma}_N)\in\Theta^K\times\mathcal{M}^{KT}\times[K]^N}\sum_{i=1}^N\sum_{t=1}^T\left(y_{it} - x_{it}'\theta_{k_i} - \mu_{k_it}\right)^2.
    \label{def:LS_estimator_gfe}
\end{align}
With some abuse of notation, we let $\widehat{\sigma}^2(K, \widehat{\boldsymbol{\gamma}}_N^{(K)})= (NT)^{-1}\sum_{i=1}^N\sum_{t=1}^T\bigl(y_{it}-x_{it}'\widehat{\theta}_{\widehat{k}_i^{(K)}}^{(K)} - \widehat{\mu}_{\widehat{k}_i^{(K)}t}^{(K)}\bigr)^2$ denote the averaged least-squares SSR obtained from \eqref{model:with_gfe} with $K$ groups.

\subsection{Assumptions and results for the LS estimator}

Introducing GFE requires modifying Assumptions \ref{asm:group_size}-\ref{asm:dist}.

\setcounter{asm}{0}
\renewcommand{\theasm}{\arabic{asm}'}
\begin{asm}
 (a) $K^0$ is fixed as $N,T\to\infty$. (b') $1=\alpha_1=\alpha_2=\cdots=\alpha_m>\alpha_{m+1}\geq\cdots\geq\alpha_{K^0}>0$ for some $m\in[K^0-1]$.
 \label{asm:group_size_gfe}
\end{asm}

Assumption \ref{asm:group_size_gfe} requires that all groups be of divergent size as $N\to\infty$. This is because, for each $k\in[K]$, we can only use cross-sectional information to estimate $\mu_{kt}$ for each $t=1,\ldots,T$; that is, a consistent estimation of $\mu_{kt}$ is achieved only when $N_k^0\to \infty$.

\begin{asm}
    Assumptions \ref{asm:basic}(a), (d), (e), (f), (h), and (j) are replaced by the following conditions:

    \noindent(a') $\Theta$ is a compact subset of $\mathbb{R}^p$. $\mathcal{M}$ is a compact subset of $\mathbb{R}$.

    \noindent(d') $(NT)^{-1/2}\sum_{i=1}^N\sum_{t=1}^T x_{it}\varepsilon_{it}=O_p(1)$, and $(NT)^{-1/2}\sum_{i=1}^N\sum_{t=1}^T \varepsilon_{it}=O_p(1)$.
    
    \noindent(e') Let $\rho_{NT}^*(\boldsymbol{\gamma}_N,k,\widetilde{k})$ be the minimum eigenvalue of $(N_k^0)^{-1}\sum_{i=1}^N1\{k_i^0=k\}1\{k_i=\widetilde{k}\}D_{T,i}$, where
    \begin{align}
        D_{T,i} \coloneqq \begin{pmatrix}
            T^{-1}\sum_{t=1}^Tx_{it}x_{it}' & x_{i1}/\sqrt{T} & x_{i2}/\sqrt{T} & \cdots & x_{iT}/\sqrt{T} \\
            x_{i1}'/\sqrt{T} & 1 & 0 & \cdots & 0 \\
            x_{i2}'/\sqrt{T} & 0 & 1 & \cdots & 0 \\
            \vdots & \vdots & \vdots & \cdots & \vdots \\
            x_{iT}'/\sqrt{T} & 0 & 0 & \cdots & 1
        \end{pmatrix},
    \end{align}
    and $\boldsymbol{\gamma}_N=(k_1,\ldots,k_N)\in[K^0]^N$. There exists a $\widehat{\rho}_{NT}^*\stackrel{p}{\to}\rho^*>0$ such that for all $k\in[K^0]$ and sufficiently large $T$, $\min_{\boldsymbol{\gamma}_N\in[K^0]^N}\max_{\widetilde{k}\in[K^0]}\rho_{NT}^*(\boldsymbol{\gamma}_N,k,\widetilde{k})\geq \widehat{\rho}_{NT}^*$.
    
    \noindent(f') There exists some $\widehat{c}_T^*$ such that for any $i$ and sufficiently large $T$, $\lambda_{\min}\left(D_{T,i}\right)\geq\widehat{c}_T^*$, and $\lim\widehat{c}_T^*>c$ as $T\to\infty$.
    
    \noindent(h') For all $k\neq\widetilde{k}$, $\left\|\theta_k^0 - \theta_{\widetilde{k}}^0\right\|>c$ and $\lim_{T\to\infty}T^{-1}\sum_{t=1}^T(\mu_{kt}^0-\mu_{\widetilde{k}t}^0)^2 >c$.
    
    \noindent(j') For any constant $\epsilon>0$, $\sup_{i\in[N]}P\left(\left\|T^{-1}\sum_{t=1}^Tx_{it}\varepsilon_{it}\right\|\geq \epsilon\right)=o(T^{-\delta})$, $\sup_{i\in[N]}P\left(\left|T^{-1}\sum_{t=1}^T\varepsilon_{it}\right|\geq \epsilon\right)=o(T^{-\delta})$, and $\sup_{i\in[N]}P\left(\left|T^{-1}\sum_{t=1}^T\left(\varepsilon_{it}^2 -E[\varepsilon_{it}^2]\right)\right|\geq \epsilon\right)=o(T^{-\delta})$ for all $\delta>0$.
    \label{asm:basic_gfe}
\end{asm}

Assumption \ref{asm:basic_gfe}(e') is essentially the same as Assumption S.2(a) of \citet{bonhommeGroupedPatternsHeterogeneity2015}.

\begin{asm}
    Assumptions \ref{asm:dist}(b) and (c) are replaced with the following conditions.

    \noindent(b')For all $k\in[K^0]$, there exist positive definite matrices $\Sigma_k^*$ and $\Omega_k^*$ such that
        \begin{align}
            \Sigma_k^* &= \plim_{N,T\to\infty} \frac{1}{N_{k}^0T}\sum_{i\in G_k^0}\sum_{t=1}^T(x_{it} - \overline{x}_{kt})(x_{it} - \overline{x}_{kt})', \\
            \Omega_k^* &=\lim_{N,T\to\infty}\frac{1}{N_{k}^0T}\sum_{i\in G_k^0}\sum_{j\in G_k^0}\sum_{t=1}^T\sum_{s=1}^TE\left[(x_{it} - \overline{x}_{kt})(x_{jt} - \overline{x}_{kt})'\varepsilon_{it}\varepsilon_{js}\right],
        \end{align}
    where $\overline{x}_{kt} \coloneqq (N_k^0)^{-1}\sum_{i\in G_k^0}x_{it}$.
    
    \noindent(c') $(N_{k}^0T)^{-1/2}\sum_{i\in G_k^0}\sum_{t=1}^T(x_{it} - \overline{x}_{kt})\varepsilon_{it}\stackrel{d}{\to} N(0,\Omega_k^*)$ for all $k\in[K^0]$.

    Additionally, the following conditions hold.

    \noindent(d) For all $k\in[K^0]$ and $t\in[T]$, $\lim_{N\to\infty}(N_k^0)^{-1}\sum_{i\in G_k^0}\sum_{j\in G_k^0}E[\varepsilon_{it}\varepsilon_{jt}] = \omega_{kt}>0$.

    \noindent(e) For all $k\in[K^0]$ and $t\in[T]$, $(N_k^0)^{-1/2}\sum_{i\in G_k^0}\varepsilon_{it} \stackrel{d}{\to} N(0,\omega_{kt})$ as $N\to\infty$.
    \label{asm:dist_gfe}
\end{asm}

The following theorem establishes the asymptotic normality of the LS estimator from \eqref{model:with_gfe} with $K=K^0$.

\begin{thm}
    Consider model \eqref{model:with_gfe}. Suppose that Assumptions \ref{asm:group_size_gfe}-\ref{asm:dist_gfe} hold. Then as $N,T\to\infty$, we have
    
    \noindent(i) $P\left(\bigcup_{i=1}^N\left\{\widehat{k}_i^{(K^0)}\neq k_i^0\right\}\right)=o(1)$,
    
    \noindent(ii) $\sqrt{N_k^0T}\left(\widehat{\theta}_k^{(K^0)} - \theta_k^0\right) \stackrel{d}{\to}N\left(0,(\Sigma_k^{*})^{-1}\Omega_k^*(\Sigma_k^{*})^{-1}\right)$ for each $k\in[K^0]$,

    \noindent(iii) $\sqrt{N_k^0}\left(\widehat{\mu}_{kt}^{(K^0)} - \mu_{kt}^0\right) \stackrel{d}{\to}N\left(0,\omega_{kt}\right)$ for each $k\in[K^0]$ and $t\in[T]$,
    
    \noindent(iv) $\widehat{\sigma}^2(K^0,\widehat{\boldsymbol{\gamma}}_N^{(K^0)})=(NT)^{-1}\sum_{i=1}^N\sum_{t=1}^T\varepsilon_{it}^2 + O_p\left(N^{-1}\right) \stackrel{p}{\to}\sigma^2$.
    \label{thm:consistency_normality_gfe}
\end{thm}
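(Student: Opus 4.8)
The plan is to follow the architecture of the proof of Theorem \ref{thm:consistency_normality}, concentrating out $\boldsymbol{\mu}$ so that the within-group-at-$t$ demeaned regressors play the role that the raw regressors $x_{it}$ played in the GFE-free model, while the concentrated-out $\widehat{\mu}_{kt}$ is handled separately as a cross-sectional average. Given a partition $\boldsymbol{\gamma}_N=(k_1,\ldots,k_N)$ and slopes $\boldsymbol{\theta}$, the minimizing $\mu_{kt}$ is the average of $y_{it}-x_{it}'\theta_k$ over $\{i:k_i=k\}$, so \eqref{def:LS_estimator_gfe} is equivalent to minimizing $\sum_k\sum_{i:k_i=k}\sum_t[(y_{it}-\overline{y}^{\boldsymbol{\gamma}}_{kt})-(x_{it}-\overline{x}^{\boldsymbol{\gamma}}_{kt})'\theta_k]^2$ over $(\boldsymbol{\theta},\boldsymbol{\gamma}_N)$, where $\overline{y}^{\boldsymbol{\gamma}}_{kt}$ and $\overline{x}^{\boldsymbol{\gamma}}_{kt}$ denote averages over $\{i:k_i=k\}$. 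Since the true partition with $(\boldsymbol{\theta}^0,\boldsymbol{\mu}^0)$ is feasible, $\widehat{\sigma}^2(K^0,\widehat{\boldsymbol{\gamma}}_N^{(K^0)})\leq(NT)^{-1}\sum_i\sum_t\varepsilon_{it}^2$; I would expand the concentrated objective around the truth, bound the cross terms by Cauchy--Schwarz using Assumptions \ref{asm:basic_gfe}(a'), \ref{asm:basic}(b), (c), \ref{asm:basic_gfe}(d'), and invoke the lower-eigenvalue conditions \ref{asm:basic_gfe}(e'), (f') -- where $D_{T,i}$ is precisely the (rescaled) Gram matrix of the joint regression of $y_{it}$ on $x_{it}$ and the within-group time dummies, so it controls the relevant quadratic forms even along ``wrong'' partitions that pool units from several true groups -- to obtain a preliminary rate: a suitably weighted average of $\|\widehat{\theta}_{\widehat{k}_i}-\theta^0_{k_i^0}\|^2$ and $T^{-1}\sum_t(\widehat{\mu}_{\widehat{k}_it}-\mu^0_{k_i^0t})^2$ is $o_p(1)$, with Assumption \ref{asm:basic}(g) compensating for the smallest group exactly as in the GFE-free case and Assumption \ref{asm:group_size_gfe}(b') ($\alpha_{K^0}>0$) ensuring $\widehat{\mu}_{kt}$ can be estimated consistently at all. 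Combined with the separation Assumption \ref{asm:basic_gfe}(h'), this gives, up to relabeling, $\widehat{\theta}_k\stackrel{p}{\to}\theta^0_k$ and $T^{-1}\sum_t(\widehat{\mu}_{kt}-\mu^0_{kt})^2\stackrel{p}{\to}0$ for every $k$, paralleling \citet{bonhommeGroupedPatternsHeterogeneity2015}.

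For part (i), $\widehat{k}_i$ minimizes $T^{-1}\sum_t(y_{it}-x_{it}'\widehat{\theta}_g-\widehat{\mu}_{gt})^2$ over $g$, so if a unit $i$ with $k_i^0=k$ is assigned to $\widetilde{k}\neq k$, its cost at $\widetilde{k}$ cannot exceed its cost at $k$; the latter equals $T^{-1}\sum_t\varepsilon_{it}^2+o_p(1)$ uniformly in $i$ (by the previous paragraph and the tail Assumptions \ref{asm:basic}(i), \ref{asm:basic_gfe}(j'), whose probabilities are $o(T^{-\delta})$ for all $\delta>0$), whereas the former equals $T^{-1}\sum_t(\varepsilon_{it}+x_{it}'(\theta^0_k-\widehat{\theta}_{\widetilde{k}})+\mu^0_{kt}-\widehat{\mu}_{\widetilde{k}t})^2$, which by the separation \ref{asm:basic_gfe}(h'), the lower bound \ref{asm:basic_gfe}(f'), and the uniform control of $T^{-1}\sum_t\|x_{it}\|^2$ and $\|T^{-1}\sum_tx_{it}\varepsilon_{it}\|$ from (i), (j') is at least $T^{-1}\sum_t\varepsilon_{it}^2+c'-o_p(1)$ for some $c'>0$. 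A union bound over the $N$ cross-sections and the finitely many pairs $(k,\widetilde{k})$, using the tail rates together with $N=o(T^\nu)$ (resp.\ $N^{1-2\alpha_{K^0}}/T\to0$) from Assumption \ref{asm:basic}(g), then yields $P\bigl(\bigcup_i\{\widehat{k}_i\neq k_i^0\}\bigr)=o(1)$.

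Parts (ii)--(iv) I would obtain on the event $\mathcal{E}_N\coloneqq\{\widehat{k}_i=k_i^0\text{ for all }i\}$, which has probability tending to $1$ by (i); on $\mathcal{E}_N$ the estimator coincides with the infeasible one that knows $\boldsymbol{\gamma}_N^0$, i.e.\ within-group least squares with group-time effects. Since $y_{it}-\overline{y}_{kt}=(x_{it}-\overline{x}_{kt})'\theta^0_k+\varepsilon_{it}-\overline{\varepsilon}_{kt}$ (bars now over $G_k^0$, with $\overline{\varepsilon}_{kt}\coloneqq(N_k^0)^{-1}\sum_{i\in G_k^0}\varepsilon_{it}$) and $\sum_{i\in G_k^0}(x_{it}-\overline{x}_{kt})\overline{\varepsilon}_{kt}=0$, we have $\widehat{\theta}_k-\theta^0_k=[\sum_{i\in G_k^0}\sum_t(x_{it}-\overline{x}_{kt})(x_{it}-\overline{x}_{kt})']^{-1}\sum_{i\in G_k^0}\sum_t(x_{it}-\overline{x}_{kt})\varepsilon_{it}$, so part (ii) follows from Assumptions \ref{asm:dist_gfe}(b'), (c') and Slutsky. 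For part (iii), $\widehat{\mu}_{kt}=\overline{y}_{kt}-\overline{x}_{kt}'\widehat{\theta}_k=\mu^0_{kt}+\overline{\varepsilon}_{kt}-\overline{x}_{kt}'(\widehat{\theta}_k-\theta^0_k)$, and since $\sqrt{N_k^0}(\widehat{\theta}_k-\theta^0_k)=O_p(T^{-1/2})$ by part (ii) while $\overline{x}_{kt}=O_p(1)$ by Assumption \ref{asm:basic}(b), $\sqrt{N_k^0}(\widehat{\mu}_{kt}-\mu^0_{kt})=(N_k^0)^{-1/2}\sum_{i\in G_k^0}\varepsilon_{it}+o_p(1)\stackrel{d}{\to}N(0,\omega_{kt})$ by Assumption \ref{asm:dist_gfe}(d), (e). For part (iv), on $\mathcal{E}_N$ the fitted residual is $\widehat{\varepsilon}_{it}=\varepsilon_{it}-\overline{\varepsilon}_{k_i^0t}-(x_{it}-\overline{x}_{k_i^0t})'(\widehat{\theta}_{k_i^0}-\theta^0_{k_i^0})$, so for each $k$, summing over $i\in G_k^0$ and $t$ gives $\sum_t\sum_{i\in G_k^0}\widehat{\varepsilon}_{it}^2=\sum_t\sum_{i\in G_k^0}\varepsilon_{it}^2-\sum_tN_k^0\overline{\varepsilon}_{kt}^2+O_p(1)$ (the $\widehat{\theta}$-correction terms being $O_p(1)$ by part (ii)); summing over the $K^0$ groups and dividing by $NT$, $\widehat{\sigma}^2(K^0,\widehat{\boldsymbol{\gamma}}_N^{(K^0)})=(NT)^{-1}\sum_i\sum_t\varepsilon_{it}^2-(NT)^{-1}\sum_k\sum_tN_k^0\overline{\varepsilon}_{kt}^2+O_p((NT)^{-1})$, and since $E[N_k^0\overline{\varepsilon}_{kt}^2]=(N_k^0)^{-1}\sum_{i,j\in G_k^0}E[\varepsilon_{it}\varepsilon_{jt}]\to\omega_{kt}=O(1)$ by \ref{asm:dist_gfe}(d), the demeaning correction is $O_p(N^{-1})$ while the leading term $\stackrel{p}{\to}\sigma^2$ by Assumption \ref{asm:basic}(k).

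The hardest part will be the preliminary rate together with the classification step: one must show the misclassification probability vanishes \emph{uniformly} over all $N$ cross-sections even though the smaller groups' parameters are estimated only at the slow rate $(N_{K^0}^0T)^{-1/2}$ and, with GFE, the relevant design must be controlled through $D_{T,i}$ rather than through $T^{-1}\sum_tx_{it}x_{it}'$ -- in particular along mismatched partitions, where the within-group demeaning mixes true groups. The leverage comes from Assumption \ref{asm:basic}(g), which forces $T$ to grow fast enough relative to $N^{1-2\alpha_{K^0}}$ (or $N$), and from the fast-decaying-tail Assumptions \ref{asm:basic}(i), \ref{asm:basic_gfe}(j'), which make the union bound over the $N$ units feasible; beyond this, the only genuinely new bookkeeping relative to Theorem \ref{thm:consistency_normality} is tracking the concentrated-out $\widehat{\mu}_{kt}$ and verifying that the $D_{T,i}$ in \ref{asm:basic_gfe}(e'), (f') bounds the relevant quadratic forms.
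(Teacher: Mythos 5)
Your proposal is correct and follows essentially the same architecture as the paper's proof (preliminary Hausdorff consistency of $(\widehat{\boldsymbol{\theta}}^{(K^0)},\widehat{\boldsymbol{\mu}}^{(K^0)})$ via comparison with an infeasible objective, then uniform control of per-unit misclassification using the tail conditions and a union bound made feasible by Assumption \ref{asm:basic}(g), then reduction to the oracle within-group estimator), as carried out in Lemmas \ref{lem:Qhat_Qtilde_gfe}--\ref{lem:prob_missclassify_gfe} and Proposition \ref{prop:asym_equivalence_gfe}. Two presentational differences are worth recording. First, you concentrate out $\boldsymbol{\mu}$ as within-group time averages, whereas the paper keeps $\boldsymbol{\mu}\in\mathcal{M}^{K^0T}$ as a free parameter and lower-bounds the joint quadratic form in $(\theta,\mu)$ through the Gram matrix $D_{T,i}$ of Assumptions \ref{asm:basic_gfe}(e')--(f'); these are equivalent up to the usual caveat that the compact set $\mathcal{M}$ is large enough that the concentration is exact, a point the paper sidesteps by not concentrating. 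Second, for parts (ii)--(iv) you work directly on the perfect-classification event $\mathcal{E}_N$, on which the estimator coincides exactly with the oracle; the paper instead first establishes the intermediate asymptotic equivalence $\widehat{\theta}_k^{(K^0)}=\widetilde{\theta}_k^{(K^0)}+o_p(N^{(1-\alpha_k)/2}T^{-\delta})$ in Proposition \ref{prop:asym_equivalence_gfe} before invoking classification. Your shortcut is legitimate here precisely because Assumption \ref{asm:basic}(g) delivers $P(\mathcal{E}_N)\to1$, and it buys a cleaner derivation of (ii)--(iv); the paper's route would remain informative in settings where only the average misclassification rate, not the union event, can be controlled.
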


Theorem \ref{thm:consistency_normality_gfe} is a generalization of Corollary S.2 of \citet{bonhommeGroupedPatternsHeterogeneity2015} to the case of panels with small groups.

\subsection{(In)Consistency of information criteria}\label{subsec:ic_gfe}

As before, we use $\mathrm{IC}(K,h_{NT})$ defined in \eqref{def:ic} as the criterion to determine the number of groups. Note that the definition of $\mathrm{IC}(K,h_{NT})$ in this section differs from that considered in Section \ref{sec:without_gfe} in two respects. First, the first term is $\widehat{\sigma}^2(K, \widehat{\boldsymbol{\gamma}}_N^{(K)})= (NT)^{-1}\sum_{i=1}^N\sum_{t=1}^T\bigl(y_{it}-x_{it}'\widehat{\theta}_{\widehat{k}_i^{(K)}}^{(K)} - \widehat{\mu}_{\widehat{k}_i^{(K)}t}^{(K)}\bigr)^2$. Second, the number of parameters is $n(K)=N+(p+T)K$.

\begin{prop}
    Consider model \eqref{model:with_gfe}. Suppose that Assumptions \ref{asm:group_size}-\ref{asm:dist} hold.
    
    \noindent(i) If \eqref{cond:not_overestimate} holds, then we have $P(\widehat{K}(h_{NT}) \leq K^0)\to 1$ as $N,T\to\infty$.
    
    \noindent(ii) If 
    \begin{align}
        TN^{1-\alpha_{m+1}}h_{NT} \to \infty, \label{cond:underestimate_gfe}
    \end{align}
    then we have $P(\widehat{K}(h_{NT}) < K^0) \to 1$ as $N,T\to\infty$.
    \label{prop:ic_inconsistency_gfe}
\end{prop}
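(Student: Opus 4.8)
The plan is to mirror the proof of Proposition \ref{prop:ic_inconsistency}, the one structural change being that in model \eqref{model:with_gfe} the number of estimated parameters is $n(K)=N+(p+T)K$, so that the marginal penalty for one extra group is $n(K+1)\widetilde{\sigma}^2h_{NT}-n(K)\widetilde{\sigma}^2h_{NT}=(p+T)\widetilde{\sigma}^2h_{NT}\asymp Th_{NT}$ rather than $\asymp h_{NT}$. Throughout I will use $\widetilde{\sigma}^2\stackrel{p}{\to}\sigma^2>0$ (which still holds since $NT-n(K_{\max})\sim NT$ and $\widehat{\sigma}^2(K_{\max},\widehat{\boldsymbol{\gamma}}_N^{(K_{\max})})\stackrel{p}{\to}\sigma^2$), and compare $\mathrm{IC}(K,h_{NT})$ with $\mathrm{IC}(K^0,h_{NT})$ separately for $K>K^0$ and $K<K^0$.

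For part (i), fix $K$ with $K^0<K\le K_{\max}$. Because the $K$-group least-squares problem nests the $K^0$-group one (surplus labels may be left unused), $\widehat{\sigma}^2(K,\widehat{\boldsymbol{\gamma}}_N^{(K)})\le\widehat{\sigma}^2(K^0,\widehat{\boldsymbol{\gamma}}_N^{(K^0)})$. The crucial step is the reverse ``overfitting'' bound
\[
0\le\widehat{\sigma}^2(K^0,\widehat{\boldsymbol{\gamma}}_N^{(K^0)})-\widehat{\sigma}^2(K,\widehat{\boldsymbol{\gamma}}_N^{(K)})=O_p(1/N),
\]
uniformly over the finitely many $K^0<K\le K_{\max}$. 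To obtain it I would first re-run the consistency argument behind Theorem \ref{thm:consistency_normality_gfe}(i) (cf. \citet{bonhommeGroupedPatternsHeterogeneity2015}) to show that w.p.a.1 each estimated group under $\widehat{\boldsymbol{\gamma}}_N^{(K)}$ is contained in a single true group; on that event the extra reduction in the SSR relative to the true model is produced only by the within-true-group group-specific slopes and time effects, and a union bound over such refinements --- permitted by the exponential-tail conditions of Assumption \ref{asm:basic}(i) and Assumption \ref{asm:basic_gfe}(j') together with the eigenvalue conditions \ref{asm:basic_gfe}(e')--(f') --- bounds it by $O_p(KT/(NT))=O_p(1/N)$; combining with Theorem \ref{thm:consistency_normality_gfe}(iv) gives the display. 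It then follows that, for some $c>0$,
\[
\mathrm{IC}(K,h_{NT})-\mathrm{IC}(K^0,h_{NT})=-\bigl[\widehat{\sigma}^2(K^0,\widehat{\boldsymbol{\gamma}}_N^{(K^0)})-\widehat{\sigma}^2(K,\widehat{\boldsymbol{\gamma}}_N^{(K)})\bigr]+(p+T)(K-K^0)\widetilde{\sigma}^2h_{NT}\ge -O_p(1/N)+cTh_{NT}
\]
w.p.a.1, and since \eqref{cond:not_overestimate} forces $Th_{NT}\gg 1/N$, the right-hand side is positive w.p.a.1 uniformly in $K$, giving $P(\widehat{K}(h_{NT})>K^0)\to0$.

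For part (ii), note first that $\alpha_{m+1}<1$ makes \eqref{cond:underestimate_gfe} imply \eqref{cond:not_overestimate}, so part (i) already rules out overestimation; it remains to exclude $\widehat{K}(h_{NT})=K^0$. Following the construction indicated in Remark \ref{rem:inconsistency}, I would set $\underline{K}=K^0-1$ and take $\boldsymbol{\gamma}_N(\underline{K})$ to coincide with $\boldsymbol{\gamma}_N^0$ except that the units of the smallest group $G_{K^0}^0$ are reassigned to group $1$. On the event $\{\widehat{\boldsymbol{\gamma}}_N^{(K^0)}=\boldsymbol{\gamma}_N^0\}$, which by Theorem \ref{thm:consistency_normality_gfe}(i) occurs w.p.a.1, a perturbation of the within-group OLS-with-time-effects fit --- using $N_{K^0}^0=O(N^{\alpha_{K^0}})=o(N)$, the compactness in Assumption \ref{asm:basic_gfe}(a'), the moment bounds, and the slope and GFE separation in Assumption \ref{asm:basic_gfe}(h') --- yields
\[
\widehat{\sigma}^2(\underline{K},\widehat{\boldsymbol{\gamma}}_N^{(\underline{K})})\le\widehat{\sigma}^2(\underline{K},\boldsymbol{\gamma}_N(\underline{K}))=\widehat{\sigma}^2(K^0,\widehat{\boldsymbol{\gamma}}_N^{(K^0)})+O_p(N^{\alpha_{m+1}-1}),
\]
the inequality being optimality of $\widehat{\boldsymbol{\gamma}}_N^{(\underline{K})}$ among $\underline{K}$-groupings and the error controlled since $N^{\alpha_{K^0}-1}\le N^{\alpha_{m+1}-1}$. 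Hence, for some $c>0$,
\[
\mathrm{IC}(\underline{K},h_{NT})-\mathrm{IC}(K^0,h_{NT})\le O_p(N^{\alpha_{m+1}-1})-(p+T)\widetilde{\sigma}^2h_{NT}\le O_p(N^{\alpha_{m+1}-1})-cTh_{NT}
\]
w.p.a.1, and \eqref{cond:underestimate_gfe} forces $Th_{NT}\gg N^{\alpha_{m+1}-1}$, so the right-hand side is negative w.p.a.1; thus $P(\widehat{K}(h_{NT})=K^0)\le P(\mathrm{IC}(K^0,h_{NT})\le\mathrm{IC}(\underline{K},h_{NT}))\to0$, which together with part (i) gives $P(\widehat{K}(h_{NT})<K^0)\to1$.

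The hard part will be the overfitting bound in part (i). In the model without GFE the analogous drop in the SSR is only $O_p(1/(NT))$, whereas here each extra group carries $T$ group-specific time effects, inflating the bound by a factor $T$ to $O_p(1/N)$; the point requiring care is that this inflation is not compounded further by the data-dependent choice of partition, since the relevant maximum ranges over on the order of $K^N$ groupings, and this is precisely where the refinement argument together with the exponential-tail conditions is indispensable, with the new GFE handled through Assumptions \ref{asm:group_size_gfe}(b'), \ref{asm:basic_gfe}(h'), and \ref{asm:dist_gfe}(d)--(e). By contrast, the perturbation estimate in part (ii) is essentially routine, but it still relies on the GFE separation in Assumption \ref{asm:basic_gfe}(h') and on Theorem \ref{thm:consistency_normality_gfe}(i) to replace $\widehat{\boldsymbol{\gamma}}_N^{(K^0)}$ by $\boldsymbol{\gamma}_N^0$.
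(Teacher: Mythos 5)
Your part (ii) is essentially the paper's argument: you exhibit one grouping with $\underline{K}=K^0-1$ groups that absorbs the small group(s) into a large one, bound the resulting SSR inflation by $O_p(N^{\alpha_{m+1}-1})$, and let the penalty gap $\asymp Th_{NT}$ dominate under \eqref{cond:underestimate_gfe}. (The paper does this for every $\underline{K}\in\{m,\ldots,K^0-1\}$ via the class $\Gamma_N(\underline{K},m)$ and Lemma \ref{lem:underfit_gfe}, and it invokes Theorem \ref{thm:consistency_normality_gfe}(iv) rather than the event $\{\widehat{\boldsymbol{\gamma}}_N^{(K^0)}=\boldsymbol{\gamma}_N^0\}$, but these are cosmetic differences.)

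Part (i) has a genuine gap. Your route to the overfitting bound $\widehat{\sigma}^2(K^0,\widehat{\boldsymbol{\gamma}}_N^{(K^0)})-\widehat{\sigma}^2(\overline{K},\widehat{\boldsymbol{\gamma}}_N^{(\overline{K})})=O_p(1/N)$ rests on the claim that, w.p.a.1, every estimated group under the \emph{overfitted} $\widehat{\boldsymbol{\gamma}}_N^{(\overline{K})}$ is contained in a single true group. That claim is not delivered by ``re-running'' the proof of Theorem \ref{thm:consistency_normality_gfe}(i), which is specific to $K=K^0$ and leans on the group-separation condition in a way that has no analogue when labels are in surplus; with $\overline{K}>K^0$ the minimizer can place an $O_p(1)$ number of units from different true groups into one small spurious group at an SSR cost of only $O_p(1/N)$, so exact containment is not something you can expect to prove, and you would at best obtain an approximate version requiring further work to control. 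The paper sidesteps this entirely: it forms the common refinement $\overline{\boldsymbol{\gamma}}_N(K^*)$ of $\widehat{\boldsymbol{\gamma}}_N^{(\overline{K})}$ and $\boldsymbol{\gamma}_N^0$ (defined by $\overline{k}_i=\overline{k}_j$ iff $k_i^0=k_j^0$ and $\widehat{k}_i^{(\overline{K})}=\widehat{k}_j^{(\overline{K})}$). This grouping refines $\widehat{\boldsymbol{\gamma}}_N^{(\overline{K})}$, so $\widehat{\sigma}^2(K^*,\overline{\boldsymbol{\gamma}}_N(K^*))\leq\widehat{\sigma}^2(\overline{K},\widehat{\boldsymbol{\gamma}}_N^{(\overline{K})})$, and it lies in $\overline{\Gamma}_N(K^*)$ by construction, so Lemma \ref{lem:overfit_gfe} bounds its SSR within $O_p(N^{-1})$ of $\widehat{\sigma}^2(K^0,\widehat{\boldsymbol{\gamma}}_N^{(K^0)})$ — no structural claim about the overfitted estimator and no union bound over $K^N$ groupings is needed. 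You should replace your containment step with this refinement device; the rest of your part (i) (the $Th_{NT}\gg 1/N$ comparison under \eqref{cond:not_overestimate}) then goes through as in the paper.
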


A key condition is \eqref{cond:underestimate_gfe}, similar to \eqref{cond:underestimate}. Condition \eqref{cond:underestimate_gfe}, however, differs from \eqref{cond:underestimate} by a multiplicative factor of $T$. This factor emerges because the number of parameters is now $n(K)=N+(p+T)K$. The difference between the second terms of $\mathrm{IC}(K_1,h_{NT})$ and $\mathrm{IC}(K_2,h_{NT})$ ($K_1> K_2$), say, is $\widetilde{\sigma}^2(n(K_1) - n(K_2))h_{NT} = \widetilde{\sigma}^2(p+T)(K_1-K_2)h_{NT} \asymp Th_{NT}$, so that $Th_{NT}$ plays the role of a penalty in effect. 

Let us investigate whether penalties considered in Section \ref{sec:without_gfe} satisfy \eqref{cond:underestimate_gfe}. Recall that $h_{NT}^{\mathrm{BN}}$ is defined by \eqref{def:bn}. When $N/T\to\infty$, we have $TN^{1-\alpha_{m+1}}h_{NT}^{\mathrm{BN}} = TN^{1-\alpha_{m+1}}(\ln T)/T \to \infty$. When $N/T\to c\in[0,\infty)$, then $TN^{1-\alpha_{m+1}}h_{NT}^{\mathrm{BN}} \asymp TN^{1-\alpha_{m+1}}(\ln N)/N \to \infty$. Therefore, $h_{NT}^{\mathrm{BN}}$ \textit{necessarily} satisfies \eqref{cond:underestimate_gfe}, which implies that $\widehat{K}(h_{NT}^{\mathrm{BN}})$ asymptotically underestimates in model \eqref{model:with_gfe}. Furthermore, $h_{NT}^{\mathrm{BN}}$ satisfies \eqref{cond:underestimate_gfe} even when $\alpha_{m+1}=\alpha_{K^0}=1$ (i.e., there is no small group), although we exclude this case by Assumption \ref{asm:group_size_gfe}. Indeed, our simulation shows that $\widehat{K}(h_{NT}^{\mathrm{BN}})$ underestimates the number of groups in GFE model \eqref{model:with_gfe} even when all groups have a size proportional to $N$, indicating that underestimation by $\widehat{K}(h_{NT}^{\mathrm{BN}})$ is not a matter confined to the small-group situation.

Next, consider $h_{NT}^{\mathrm{BIC}}$. When $N/T\to c\in(0,\infty]$, we have $TN^{1-\alpha_{m+1}}h_{NT}^{\mathrm{BIC}}=TN^{1-\alpha_{m+1}}\ln(NT)/NT \leq \ln(N^2)/N^{\alpha_{m+1}}\to0$, and hence \eqref{cond:underestimate_gfe} does not hold in this case. On the other hand, $h_{NT}^{\mathrm{BIC}}$ can satisfy \eqref{cond:underestimate_gfe} and hence underestimate the number of groups when $N/T\to0$. For example, it is straightforward to check that $TN^{1-\alpha_{m+1}}h_{NT}^{\mathrm{BIC}} \to \infty$ if $T=\exp(N^{\alpha_{m+1}+\epsilon})$ for some $\epsilon>0$. In our simulation reported in Section \ref{sec:mc}, $\widehat{K}(h_{NT}^{\mathrm{BIC}})$ performs well in a setting where $T$ is not too small or too large relative to $N$.

Finally, we consider constructing a penalty that fits model \eqref{model:with_gfe}. Since $Th_{NT}$ essentially behaves as a penalty, we suggest the following penalty, which is obtained by simply dividing $h_{NT}^{\mathrm{MIC}1}$ by $T$:
\begin{align}
    h_{NT}^{\mathrm{MIC}2} \coloneqq \begin{cases}\displaystyle
        2\frac{\ln N}{NT} & \mathrm{if} \ N\leq T \\[2ex] \displaystyle
        \frac{\ln (NT)}{NT} & \mathrm{if} \ N> T 
    \end{cases}.
\end{align}

When $N>T$, $h_{NT}^{\mathrm{MIC}2}=h_{NT}^{\mathrm{BIC}}$. When $N\leq T$, $h_{NT}^{\mathrm{MIC}2}$ is still asymptotically equivalent to $h_{NT}^{\mathrm{BIC}}$ if $\ln T = O(\ln N)$. However, if $T$ satisfies $\ln T/\ln N \to\infty$, then $h_{NT}^{\mathrm{MIC}2} \ll h_{NT}^{\mathrm{BIC}}$ and thus $\widehat{K}(h_{NT}^{\mathrm{MIC}2})$ tends to be larger than $\widehat{K}(h_{NT}^{\mathrm{BIC}})$. In a typical application where $T$ is not too large relative to $N$, one will often have $\widehat{K}(h_{NT}^{\mathrm{MIC}2})=\widehat{K}(h_{NT}^{\mathrm{BIC}})$.

\section{Monte Carlo Simulation}\label{sec:mc}

In this section, we evaluate the finite-sample performance of the information criteria for the number of groups and the LS estimator calculated under $K=K^0$. We consider the following three data generating processes (DGPs) in which the number of groups is $K^0=3$:

DGP 1 (Static panel). $y_{it}=\theta_{k_i,1}x_{it,1} + \theta_{k_i,2}x_{it,2} + \varepsilon_{it}$, where $x_{it,1},x_{it,2},\varepsilon_{it}\sim \ \mathrm{i.i.d.} \ N(0,1)$, $(\theta_{1,1},\theta_{1,2})=(3,-3)$, $(\theta_{2,1},\theta_{2,2})=(1,-2)$, and $(\theta_{3,1},\theta_{3,2})=(4,-1)$.

DGP 2 (Dynamic panel). $y_{it}=\theta_{k_i,1}x_{it,1} + \theta_{k_i,2}y_{i,t-1} + \varepsilon_{it}$, where $(\theta_{1,1},\theta_{1,2})=(3,0.2)$, $(\theta_{2,1},\theta_{2,2})=(1,0.5)$, and $(\theta_{3,1},\theta_{3,2})=(4,0.8)$.

DGP 3 (GFE). $y_{it}=\theta_{k_i,1}x_{it,1} + \theta_{k_i,2}x_{it,2} + \mu_{k_it} + \varepsilon_{it}$, where $(\theta_{1,1},\theta_{1,2})=(3,-3)$, $(\theta_{2,1},\theta_{2,2})=(1,-2)$, $(\theta_{3,1},\theta_{3,2})=(4,-1)$, $\mu_{1t}=4t/T$, $\mu_{2t}=2t/T$, and $\mu_{3t}=4$ for all $t$.

For each DGP, we consider $N\in\{60,90,120\}$ and set $N_1^0=N/K^0$, $N_3=\lfloor c_\alpha N^{\alpha}\rfloor$, where $\alpha\in\{0.2,0.3,\ldots,1\}$, and
\begin{align}
    c_\alpha= \begin{cases}
        0.4 & \mathrm{if} \ \alpha=1 \\
        0.6 & \mathrm{if} \ \alpha=0.9 \\
        0.8 & \mathrm{if} \ \alpha=0.8 \\
        1 & \mathrm{otherwise}
    \end{cases},
\end{align}
and $N_2=N - (N_1+N_3)$. Here, scaling constant $c_\alpha$ is chosen so that $N_2$ and $N_3$ are not too small when $\alpha=1$ and $N_3$ is strictly increasing in $\alpha$. Group 3 is the smallest group unless $\alpha=1$. For DGPs 1 and 3, we apply within-transformation to the data before conducting K-means clustering to demonstrate that the theoretical results derived in the present work are robust to the presence of individual fixed-effects and the routinely applied transformation, as long as the regressors does not include a lagged dependent variable.\footnote{If the regressors include a lagged dependent variable as in DGP 2, and if the model includes individual fixed-effects, then one will need to apply bias correction or IV estimation; see, e.g., \citet{hahnAsymptoticallyUnbiasedInference2002}, \citet{phillipsBiasDynamicPanel2007}, and the supplementary material to \citet{bonhommeGroupedPatternsHeterogeneity2015}, Section S.4.1. Models that include both individual fixed-effects and a lagged dependent variable on the right-hand side are beyond our scope, so we do not apply within-transformation for DGP 2.} 
The number of replications is 100.

In the computational aspect, we follow Algorithm 1 of \citet{bonhommeGroupedPatternsHeterogeneity2015} to numerically solve \eqref{def:LS_estimator} and \eqref{def:LS_estimator_gfe}. Specifically, we employ the following iterative algorithm:
\begin{algo}\ 

    \noindent1. Randomly generate initial group memberships $\boldsymbol{\gamma}_N^{(K,0)}=(k_1^{(K,0)},\ldots,k_N^{(K,0)})\in[K]^N$ from the uniform distribution on $[K]$. Set $s=0$.

    \noindent2. Compute
    \begin{align}
        (\boldsymbol{\theta}^{(K,s+1)},\boldsymbol{\mu}^{(K,s+1)}) = \argmin_{(\boldsymbol{\theta},\boldsymbol{\mu})\in\Theta^K\times\mathcal{M}^{KT}} \sum_{i=1}^N\sum_{t=1}^T\left(y_{it}-x_{it}'\theta_{k_i^{(K,s)}} - \mu_{k_i^{(K,s)}t}\right)^2
    \end{align}

    \noindent3. Update $\boldsymbol{\gamma}_N^{(K,s)}$ by
    \begin{align}
        k_i^{(K,s+1)} = \argmin_{k\in[K]}\sum_{t=1}^T\left(y_{it}-x_{it}'\theta_{k}^{(K,s+1)} - \mu_{kt}^{(K,s+1)}\right)^2
    \end{align}
    for $i\in[N]$.

    \noindent4. If $\boldsymbol{\gamma}_N^{(K,s+1)} = \boldsymbol{\gamma}_N^{(K,s)}$ (after suitable relabeling), then terminate the iteration. Otherwise, set $s=s+1$ and go to Step 2.
    \label{algo:k-means}
\end{algo}
\noindent Since it is well-known that the above iterative procedure depends on the initial value and can converge to a local minimum, we repeat Algorithm \ref{algo:k-means} over 1000 initializations.

\subsection{Finite-sample performance of information criteria}

For DGPs 1-2, we analyze the performance of $\widehat{K}(h_{NT}^{\mathrm{BN}})$, $\widehat{K}(h_{NT}^{\mathrm{BIC}})$, and $\widehat{K}(h_{NT}^{\mathrm{MIC}1})$. For DGP 3, we consider $\widehat{K}(h_{NT}^{\mathrm{BN}})$, $\widehat{K}(h_{NT}^{\mathrm{BIC}})$, and $\widehat{K}(h_{NT}^{\mathrm{MIC}2})$. Information criteria are calculated over $2\leq K \leq K_{\max} = 5$. 

\subsubsection{Small $T$ case}\label{subsubsec:mc_smallT}

First, we study the performance of information criteria when $T$ is small relative to $N$. Specifically, we consider $T\in\{10,20\}$. Figure \ref{fig:dgp1_small_T} displays the mean of estimated number of groups for each penalty as a function of $\alpha$, calculated for DGP 1. When $h_{NT}^{\mathrm{BN}}$ is used, $\widehat{K}(h_{NT}^{\mathrm{BN}})$ correctly estimates the number of groups to be 3 on average for $\alpha\geq 0.7$ and $T=10$, but it tends to underestimate when $\alpha\leq 0.6$ and $T=10$. This tendency to underestimate is somewhat mitigated when $T=20$, but $\widehat{K}(h_{NT}^{\mathrm{BN}})$ still underestimates the number of groups for $\alpha\leq0.5$. This implies that $\widehat{K}(h_{NT}^{\mathrm{BN}})$ tends to overlook small groups. It is noteworthy that, if $T$ is fixed, the problem of underestimation is more severe for larger $N$, implying that only increasing the number of cross-sections does not improve (in fact, aggravates) the estimation of the number of groups. These observations corroborate the theoretical analysis given in Section \ref{sec:without_gfe}. In contrast, $\widehat{K}(h_{NT}^{\mathrm{BIC}})$ overestimates the number of groups on average for all $(N,T,\alpha)$. In particular, it estimates the number of groups to be 5 in \textit{all} replications for $N=90$ and $N=120$ (irrespective of the value of $T$). For the MIC, $\widehat{K}(h_{NT}^{\mathrm{MIC}1})$ shows reasonably good performance for all $(N,T,\alpha)$. Although it can overestimate or underestimate when $\alpha$ is small or $T$ is small relative to $N$, its mean stays within a close neighborhood of the true number 3, unlike $\widehat{K}(h_{NT}^{\mathrm{BN}})$ and $\widehat{K}(h_{NT}^{\mathrm{BIC}})$.

Figure \ref{fig:dgp2_small_T} shows the simulation results for DGP 2. The general tendency is the same as in the case of DGP 1. $\widehat{K}(h_{NT}^{\mathrm{BN}})$ performs well only when $\alpha$ is close to 1 and $T$ is not too small compared to $N$, and tends to underestimate the number of groups otherwise. $\widehat{K}(h_{NT}^{\mathrm{BIC}})$ overestimates for all $(N,T,\alpha)$. $\widehat{K}(h_{NT}^{\mathrm{MIC}1})$ can correctly estimate the number of groups on average.

In Figure \ref{fig:dgp3_small_T}, we show the performance of information criteria for DGP 3. Note that $h_{NT}^{\mathrm{BIC}}=h_{NT}^{\mathrm{MIC}2}$ because we consider the model with GFE and cases with $N>T$. As predicted in Section \ref{sec:with_gfe}, $\widehat{K}(h_{NT}^{\mathrm{BN}})$ underestimates the number of groups for all $(N,T,\alpha)$. In fact, it estimates the number of groups to be 2 in \textit{all} replications. For $h_{NT}^{\mathrm{BIC}}$, $\widehat{K}(h_{NT}^{\mathrm{BIC}})$ performs well for $(N,T)=(60,10)$ but tends to overestimate for $(N,T)=(90,10)$ and $(N,T)=(120,10)$, with this tendency stronger for larger $N$. Therefore, $\widehat{K}(h_{NT}^{\mathrm{BIC}})$ can overestimate the number of groups when $T$ is small relative to $N$, although this tendency is weaker than in models without GFE.\footnote{Overestimation by the BIC estimator in the GFE model with $N$ large relative to $T$ is also reported in \citet{janysMentalHealthAbortions2024}.} When $T$ increases to 20, $\widehat{K}(h_{NT}^{\mathrm{BIC}})$ shows good performance for all $(N,T,\alpha)$, indicating that the BIC-type penalty is a reasonable choice in models with GFE when $T$ is not too small compared to $N$.

\subsubsection{Large $T$ case}

In this subsection, we analyze the behavior of $\widehat{K}(h_{NT})$ when $T$ is larger than $N$. To do this, we consider cases with $T/N\in\{1.5,3\}$ for each DGP and $N$, and fix the size of the third group at $\alpha=0.3$. Because $h_{NT}^{\mathrm{MIC}1}=h_{NT}^{\mathrm{BN}}$ for DGPs 1-2, and $h_{NT}^{\mathrm{MIC}2}$ is asymptotically equivalent to $h_{NT}^{\mathrm{BIC}}$ for DGP 3, we only consider $\widehat{K}(h_{NT}^{\mathrm{BN}})$ and $\widehat{K}(h_{NT}^{\mathrm{BIC}})$ in this subsection. The results are summarized in Table \ref{tab:large_T}.

For DGP 1, we observe three facts. First, $\widehat{K}(h_{NT}^{\mathrm{BN}})=\widehat{K}(h_{NT}^{\mathrm{MIC}1})$ is close or equal to the true value 3 on average thanks to the simulation design where $T$ is larger than $N$. Second, increasing $N$ and $T$ \textit{simultaneously} and increasing $T$ only (with $N$ fixed) improve the performance of $\widehat{K}(h_{NT}^{\mathrm{BN}})=\widehat{K}(h_{NT}^{\mathrm{MIC}1})$. Third, the mean of $\widehat{K}(h_{NT}^{\mathrm{BIC}})$ approaches the true value $K^0=3$ as $T$ increases with $N$ fixed but approaches the upper bound, $K_{\max}=5$, as $N$ increases. This indicates that the overestimation by $\widehat{K}(h_{NT}^{\mathrm{BIC}})$ cannot be solved even in moderately large samples if GFE is not incorporated into the model. The same observations hold for DGP 2.

In DGP 3, which includes GFE, $\widehat{K}(h_{NT}^{\mathrm{BN}})$ underestimates the number of groups for all $(N,T)$ considered. For the BIC, $\widehat{K}(h_{NT}^{\mathrm{BIC}})$ precisely estimates the number of groups to be 3 on average. This, in conjunction with the results given in Section \ref{subsubsec:mc_smallT}, leads us to conclude that $h_{NT}=h_{NT}^{\mathrm{BIC}}$ is a reasonable choice in the GFE model \eqref{model:with_gfe} if $T$ is not too small compared to $N$. However, recall that $\widehat{K}(h_{NT}^{\mathrm{BIC}})$ can underestimate the number of groups if $T$ is too large relative to $N$, in which case $h_{NT}=h_{NT}^{\mathrm{MIC}2}$ is preferred; see Section \ref{subsec:ic_gfe}.

\subsection{Performance of the K-means estimator given $K=K^0$}

Next, we examine how the group size affects the performance of the K-means estimator calculated given the correct number of groups. Since (true) Groups 1 and 2 have sizes proportional to $N$ for all $\alpha\in\{0.2,\ldots,1\}$, we present simulation results for Group 3 only. For each DGP, we calculate the root mean squared error (RMSE). Specifically, for DGPs 1-2, the RMSE is defined by
\begin{align}
    \mathrm{RMSE} &= \left[\frac{1}{N_3}\sum_{i\in G_3^0}\left\{\left(\widehat{\theta}_{\widehat{k}_i,1} - \theta_{3,1}\right)^2 + \left(\widehat{\theta}_{\widehat{k}_i,2} - \theta_{3,2}\right)^2\right\}\right]^{1/2},
\end{align}
and for DGP 3,
\begin{align}
    \mathrm{RMSE} &= \left[\frac{1}{N_3}\sum_{i\in G_3^0}\left\{\left(\widehat{\theta}_{\widehat{k}_i,1} - \theta_{3,1}\right)^2 + \left(\widehat{\theta}_{\widehat{k}_i,2} - \theta_{3,2}\right)^2\right\} + \frac{1}{N_3T}\sum_{i\in G_3^0}\sum_{t=1}^T\left(\widehat{\mu}_{\widehat{k}_it} - \mu_{3t}\right)^2\right]^{1/2}.
\end{align}
Figure \ref{fig:rmse} reports the mean RMSE over 100 replications. For DGP 1 (Figure \ref{fig:rmse_dgp1}), there are three points to be noted. First, increasing $N$ with $T$ fixed improves the estimation accuracy when $\alpha\geq0.7$, but this improvement is only marginal. Second, increasing $N$ while fixing $T$ can lead to a larger estimation error for small $\alpha$. In particular, $N=120$ gives by far the largest RMSE on average when $T=10$ and $\alpha=0.2$. Third, this large estimation error can be mitigated by increasing $T$. These observations are consistent with our theoretical results. The same comment applies to the cases of DGPs 2-3.

We also evaluate the effect of the presence of a small group on classification performance. We calculate the proportion of perfect classification (PPC) defined as
\begin{align}
    \mathrm{PPC} = \frac{\text{The number of times when all $N$ units are classified correctly}}{\text{The number of replications}}.
\end{align}
See Figure \ref{fig:ppc}. We observe three facts. First, increasing $N$ while fixing $T=10$ decreases PPC irrespective of the value of $\alpha$. Second, for models without GFE, a smaller $\alpha$ can lead to a smaller PPC when $T=10$, but the effect of $\alpha$ is not remarkable. In contrast, PPC is highly sensitive to the group size when $T=10$ in the model with GFE. Third, PPC takes values close to 1 for all $\alpha$ when $T=20$, indicating that the group size has a minor effect on classification performance if $T$ is not too small relative to $N$.

\section{Empirical Application}\label{sec:empirics}

In this section, we apply K-means clustering to investigate the determinants of sales growth of Japanese firms. Our empirical application is inspired by \citet{lumsdaineEstimationPanelGroup2023}, who explain that sales growth is a key measure of corporate performance and hence it is important for economists, investors, and decision makers to understand how firms' variables affect sales growth. In particular, the authors are interested in the relationship between leverage and sales growth because there are competing views on the significance of leverage in relation to corporate performance \citep{myersDeterminantsCorporateBorrowing1977,millerLeverage1991}.

As \citet{lumsdaineEstimationPanelGroup2023} argue, it is documented in the literature that firms' responses to market-wide shocks exhibit a group pattern of heterogeneity. Motivated by this fact, we study the determinants of sales growth of Japanese firms using the following model:
\begin{align}
    y_{it} = x_{i,t-1}'\theta_{k_i} + \eta_i + \varepsilon_{it},
\end{align}
where $y_{it}$ is sales growth, $x_{it}$ is the set of regressors, and $\eta_i$ is the individual fixed effects. Following \citet{lumsdaineEstimationPanelGroup2023}, we let $x_{it}$ include leverage (LEV), the logarithm of total assets (TA), Tobin's q (TQ), cash flow (CF), property, plant and equipment (PPE), and return on assets (ROA). Lagged regressors $x_{i,t-1}$ are used in the regression to alleviate possible endogeneity \citep{lumsdaineEstimationPanelGroup2023}.

The variables used in this section are obtained from the eol database. After excluding firms that contain missing variables and outlying sales growth,\footnote{Outliers are defined as sales growth that fall outside $3$ standard deviations neighborhood of its mean.} 417 firms remain in our dataset. The sample period is 2005-2020.

We first determine the number of groups using the information criterion, $\mathrm{IC}(K,h_{NT})$. $\mathrm{IC}(K,h_{NT})$ is calculated over $2\leq K \leq K_{\max} = 8$ using different choices for $h_{NT}$. Table \ref{tab:empirics_group_size} shows the selected number of groups and the sizes of the estimated groups obtained from each $\widehat{K}(h_{NT})$-group clustering. When $h_{NT}=h_{NT}^{\mathrm{BN}}$, the estimated number of groups is 2, the minimum possible value, while $h_{NT}=h_{NT}^{\mathrm{BIC}}$ yields the upper bound value $\widehat{K}(h_{NT}^{\mathrm{BIC}})=8$, as expected from our numerical experiment. In contrast, our proposed $h_{NT}=h_{NT}^{\mathrm{MIC}1}$ leads to a 3-group model. Furthermore, the smallest group (labeled Group 3) contains only 3 firms. The fact that this small group is overlooked when we use $h_{NT}=h_{NT}^{\mathrm{BN}}$ is consistent with the analyses given in the preceding sections. Although the 8-group model suggested by the BIC leads to unbiased estimation of group heterogeneity (as long as $8\geq K^0$), the estimates may be unnecessarily inefficient (unless $K^0=8$) and interpretation of the group structure is difficult. Because $h_{NT}=h_{NT}^{\mathrm{MIC}1}$ leads to a reasonably parsimonious model and allows us to detect a small group, we will now proceed with the 3-group model.

Shown in Table \ref{tab:empirics_est_mic} are the slope coefficient estimates. For each regressor, the sign of the estimated coefficient is generally identical across groups, while the magnitude of the estimated slope differs remarkably across groups. For Group 1, all regressors but PPE have a significant effect on sales growth. While leverage, total assets and ROA have a negative effect, Tobin's q affects sales growth positively. The facts that total assets and ROA negatively affect sales growth and that Tobin's q exhibits a positive effect are largely consistent with the results of \citet{lumsdaineEstimationPanelGroup2023}, who investigate the determinants of sales growth using US firms data. For Group 2, all estimated slopes are significant and of larger magnitude than those of Group 1. For Group 3, only the coefficients on the log of total assets and ROA are significant, probably due to the small sample size for Group 3. Group 3 is notably differentiated from the other groups by the magnitudes of the coefficients on the total assets and ROA. For example, compared with Group 1, the coefficient on the total assets is more than 40 times larger, and that on ROA is more than 30 times larger, which reveals that there is remarkable heterogeneity across groups.

An interesting question arising from the above 3-group model is how we can characterize the small group (Group 3). Group 3 consists of (ferrous and nonferrous) metal producers, and so our clustering analysis reveals that there is a small group in the metal industry. Let us also characterize Group 3 based on observables. Figure \ref{fig::empirics_cluster_variables} presents the time path of the averages of sales growth and regressors, where the averages are taken within each group. Up to the sampling error (mainly due to the extremely small sample size of Group 3 with $N_3=3$), there are notable differences between the group averages of total assets and PPE; namely, Group 3 can be characterized by high total assets and low PPE, or a low PPE/total assets ratio.

\begin{rem}
    We also ran the regression model with GFE. To determine the number of groups, we used $h_{NT}=h_{NT}^{\mathrm{BN}}$ and $h_{NT}=h_{NT}^{\mathrm{BIC}}$ for the information criterion\footnote{Note that $h_{NT}^{\mathrm{MIC2}}=h_{NT}^{\mathrm{BIC}}$ since $N>T$.} and obtained $\widehat{K}(h_{NT}^{\mathrm{BN}})=2$ (the lower bound value) and $\widehat{K}(h_{NT}^{\mathrm{BIC}})=8$ (the upper bound value). Recalling the theoretical and numerical results given in the previous sections, $\widehat{K}(h_{NT}^{\mathrm{BN}})$ certainly underestimates the number of groups, and we suspect that $\widehat{K}(h_{NT}^{\mathrm{BIC}})$ overestimates the number of groups because $T$ is relatively small compared to $N$ in our dataset. One possible solution to the overestimation by the BIC would be to impose a larger penalty on the number of groups, as considered in \citet{janysMentalHealthAbortions2024}. Determining the number of groups in the GFE model with a large $N/T$ ratio is a nontrivial issue and will be an important topic for future research.
\end{rem}

\section{Conclusion}\label{sec:conclusion}

In this article, we analyzed the behavior of the K-means/LS estimator relaxing the conventional assumption of all groups having a size proportional to $N$. Our framework allows for smaller groups whose sizes are of order $N^{\alpha}$ with $\alpha<1$, and we derived conditions under which the LS estimators for the slope parameter and GFE are consistent and asymptotically normal, assuming that the true number of groups is known. In the case of an unknown number of groups, we derived conditions under which the information criterion for the number of groups is inconsistent. Our theoretical and numerical analyses showed that an information criterion proposed by \citet{baiDeterminingNumberFactors2002} can underestimate the number of groups, particularly when there are small groups or the model includes GFE. The BIC proposed by \citet{bonhommeGroupedPatternsHeterogeneity2015} was found to overestimate the number of groups in models without GFE, although it is a reasonable choice if the model includes GFE and $T$ is not too small or too large relative to $N$. We proposed information criteria that are designed not to satisfy the derived conditions for inconsistency and to improve the above precursors. A simulation study confirmed their good performance in finite samples. In the simulation study, we also evaluated the finite-sample performance of the LS estimator given the correct number of groups. We found that increasing $N$ with $T$ fixed does not necessarily improve, or can worsen, classification performance and the accuracy of parameter estimation for small groups, a problem that becomes more severe as the sizes of those groups are smaller. Our empirical application illustrates that our modified information criterion enables detecting and characterizing small groups in a reasonably parsimonious group structure.

\bibliographystyle{standard}

\bibliography{clustering}

\clearpage
\begin{table} \caption{Mean of $\widehat{K}(h_{NT})$, large $T$, $\alpha=0.3$}
	\centering
	\begin{threeparttable}
		\renewcommand{\arraystretch}{1.3}	\begin{tabular}{ccccccccccc}
			\hline & & \multicolumn{2}{c}{DGP 1} & & \multicolumn{2}{c}{DGP 2} & & \multicolumn{2}{c}{DGP 3} & \\
            \cline{2-11} & & $T/N$ & & & $T/N$ & & & $T/N$ & & \\
			\cline{2-11}  & $N$ & 1.5 & 3 & & 1.5 & 3 & & 1.5 & 3 &  \\
            \hline \multirow{3}{*}{$h_{NT}^{\mathrm{BN}}$} & \multicolumn{1}{l}{60} & 2.85 &3 & & 3 & 3 & & 2 & 2 & \\ 
             &\multicolumn{1}{l}{90} & 2.57 &3 & & 3 & 3 & & 2 & 2 & \\ 
             &\multicolumn{1}{l}{120} & 3 & 3 & & 3 & 3 & & 2 & 2 & \\ 
             \hline \multirow{3}{*}{$h_{NT}^{\mathrm{BIC}}$} & \multicolumn{1}{l}{60} & 4.58 &4.34 & &4.57 &4.46 & & 3 & 3 & \\ 
             &\multicolumn{1}{l}{90} & 4.88 &4.75 & &4.96 &4.93 & & 3 & 3 & \\ 
             &\multicolumn{1}{l}{120} & 4.99 &4.96 & & 5 & 5 & & 3 & 3 & \\ 
             \hline 
		\end{tabular}
        \begin{tablenotes}
			\footnotesize
			\item[]Note: Each entry is based on 100 replications.
		\end{tablenotes}
	\end{threeparttable} \label{tab:large_T}
\end{table} 

\clearpage
\begin{table} \caption{Estimated group sizes for different penalties for the information criterion}
	\centering
	\begin{threeparttable}
		\renewcommand{\arraystretch}{1.7}	\begin{tabular}{ccccccccc}
			\hline & 1 & 2 & 3 & 4 & 5 & 6 & 7 & 8 \\
            \hline $h_{NT}^{\mathrm{BN}}$ & 334 & 83 & - & - & - & - &	- & - \\ 
             $h_{NT}^{\mathrm{MIC}1}$ & 320 & 94 & 3 & - & - & - & - & - \\ 
             $h_{NT}^{\mathrm{BIC}}$ & 130 &	92 & 77 & 40 & 39 & 21 & 17 & 1 \\ 
             \hline 
		\end{tabular}
        \begin{tablenotes}
			\footnotesize
			\item[]Note: Group labels are ordered so that Group 1 is the largest and Group $\widehat{K}(h_{NT})$ is the smallest.
		\end{tablenotes}
	\end{threeparttable} \label{tab:empirics_group_size}
\end{table} 

\clearpage
\begin{table} \caption{Slope estimates obtained under $\widehat{K}(h_{NT}^{\mathrm{MIC}1})=3$ groups}
	\centering
	\begin{threeparttable}
		\renewcommand{\arraystretch}{1.6}	\begin{tabular}{ccccccc}
			\hline & LEV & TA & TQ & CF & PPE & ROA \\
            \hline Group 1 & $-0.476^{***}$ &	$-6.402^{***}$ &	$4.448^{***}$ &	$0.000^{***}$ &	$-0.000$ &	$-0.301^{***}$ \\ 
              & (0.119) &	(0.932) &	(0.407) &	(0.000) &	(0.000) &	(0.046) \\ 
             Group 2 & $-0.706^{***}$ &	$-9.793^{***}$ &	$46.018^{***}$ &	$0.000^{***}$ &	$-0.000^{***}$ &	$-2.729^{***}$ \\ 
             & (0.035) &	(2.930) &	(1.954) &	(0.000) &	(0.000) &	(0.145) \\
             Group 3 & $-12.359$ &	$-286.237^{***}$ &	$15.621$ &	$-0.000$ &	$0.003$ &	$-10.162^{**}$ \\
             & (18.183) &	(86.101) &	(47.307) &	(0.000) &	(0.002) &	(4.614) \\
             \hline 
		\end{tabular}
        \begin{tablenotes}
			\footnotesize
			\item[]Note: Group labels are ordered so that Group 1 is the largest and Group 3 is the smallest. Standard errors are in parentheses. $^{***}$, $^{**}$, and $^{*}$ denote significance at 1\%, 5\%, and 10\%, respectively.
		\end{tablenotes}
	\end{threeparttable} \label{tab:empirics_est_mic}
\end{table}

\clearpage
\begin{sidewaysfigure}
	\centering
	    \begin{subfigure}{0.48\columnwidth}
        \includegraphics[width=\columnwidth]{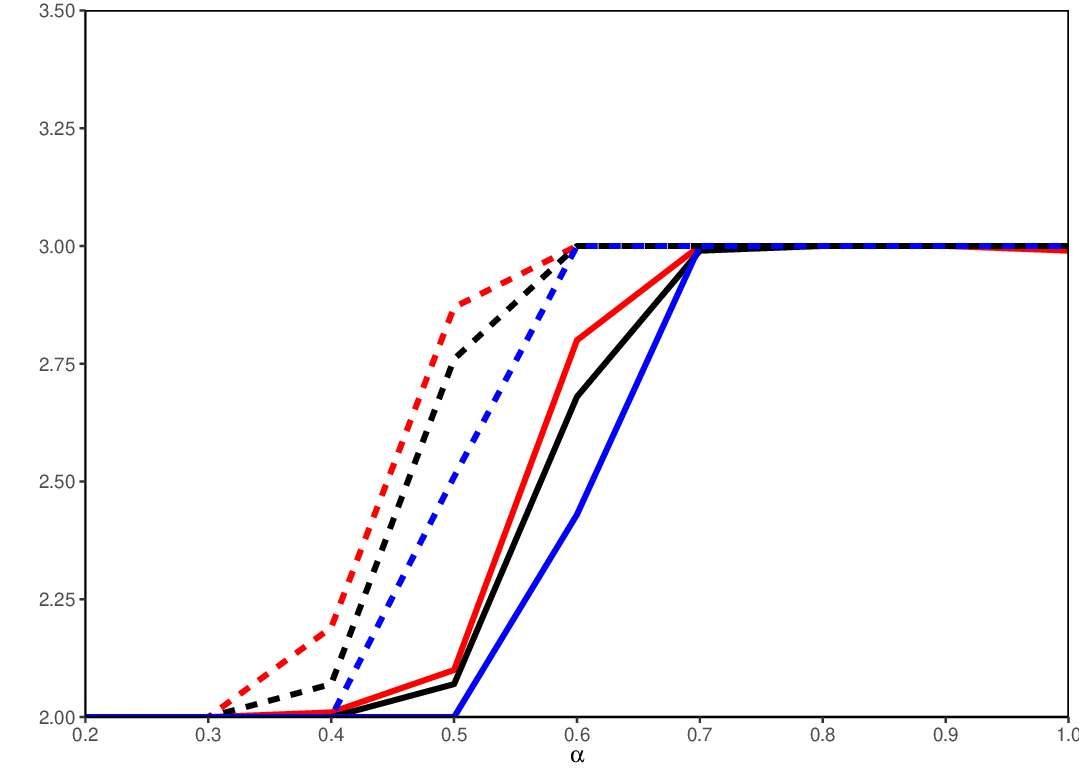}
    \caption{$h_{NT}=h_{NT}^{\mathrm{BN}}$}
	\label{fig:bn_dgp1}
    \end{subfigure} \quad
    \begin{subfigure}{0.48\columnwidth}
        \includegraphics[width=\columnwidth]{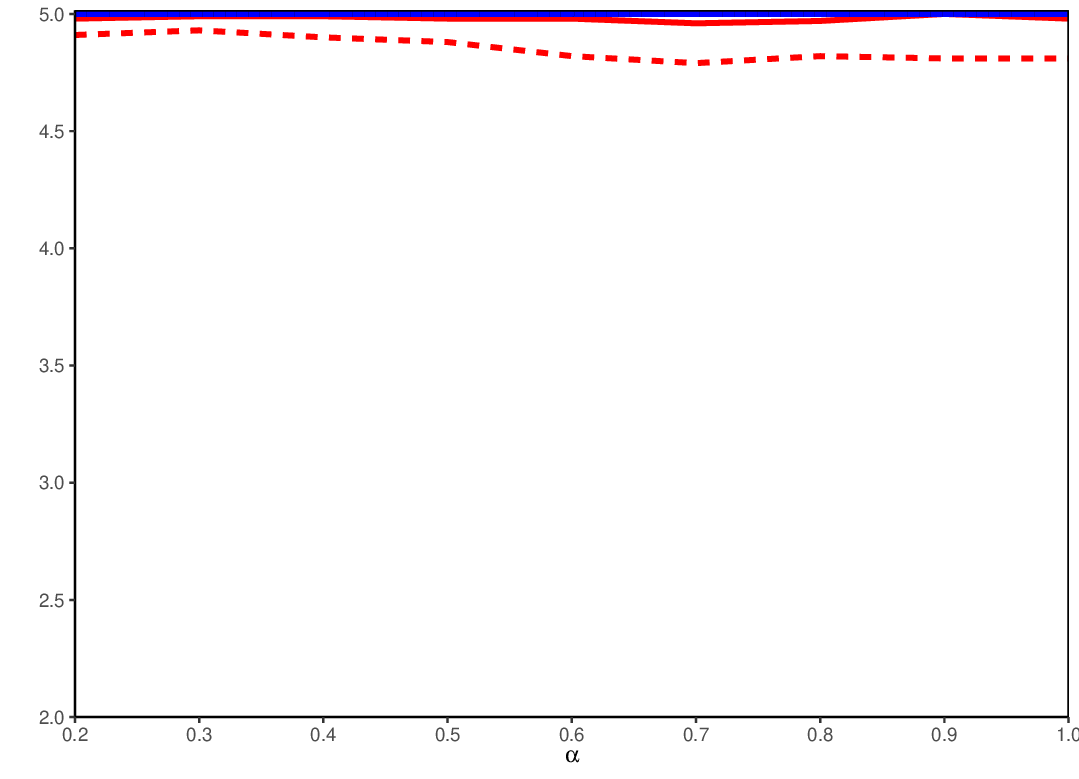}
    \caption{$h_{NT}=h_{NT}^{\mathrm{BIC}}$}
	\label{fig:bic_dgp1}
    \end{subfigure} \quad
    \hfill
    \begin{subfigure}{0.48\columnwidth}
        \includegraphics[width=\columnwidth]{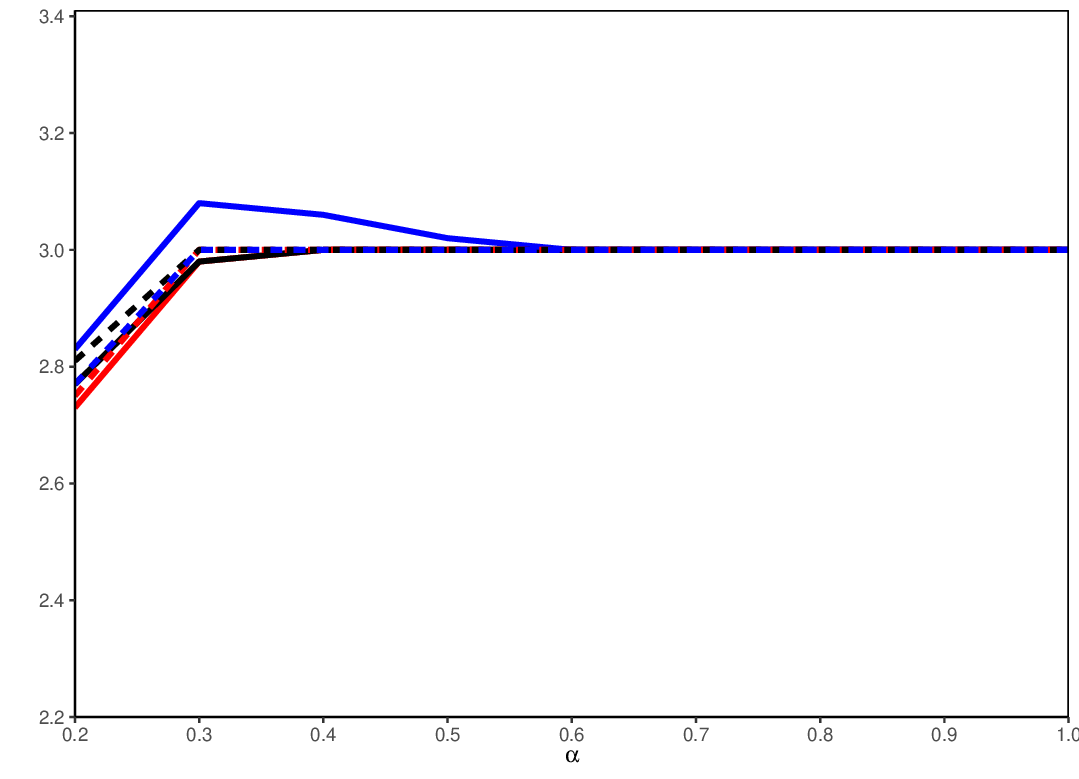}
    \caption{$h_{NT}=h_{NT}^{\mathrm{MIC}1}$}
	\label{fig:mic_dgp1}
    \end{subfigure} \quad
    \caption{Mean of $\widehat{K}(h_{NT})$ over 100 replications, DGP 1}
    \caption*{\color{red}\full\color{black}: $N=60$, \quad \color{black}\full\color{black}: $N=90$, \quad \color{blue}\full\color{black}: $N=120$, \quad \color{black}\full\color{black}: $T=10$, \color{black}\dashed\color{black}: $T=20$}
    \label{fig:dgp1_small_T}
\end{sidewaysfigure}

\clearpage
\begin{sidewaysfigure}
	\centering
	    \begin{subfigure}{0.48\columnwidth}
        \includegraphics[width=\columnwidth]{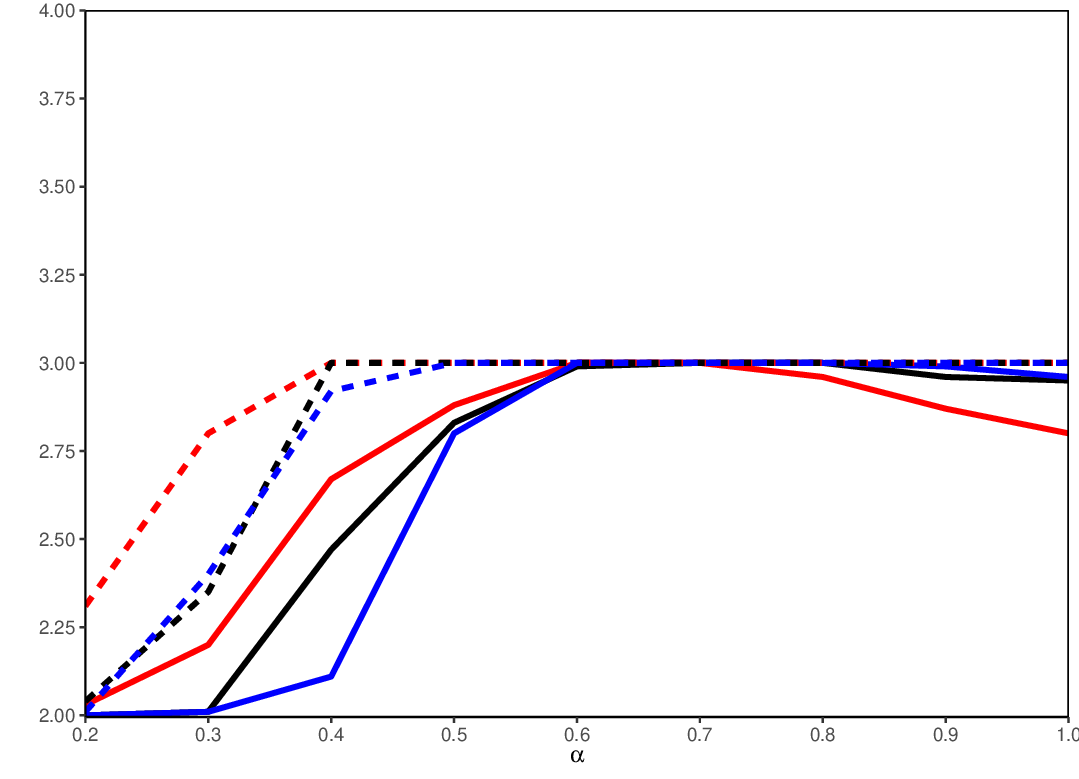}
    \caption{$h_{NT}=h_{NT}^{\mathrm{BN}}$}
	\label{fig:bn_dgp2}
    \end{subfigure} \quad
    \begin{subfigure}{0.48\columnwidth}
        \includegraphics[width=\columnwidth]{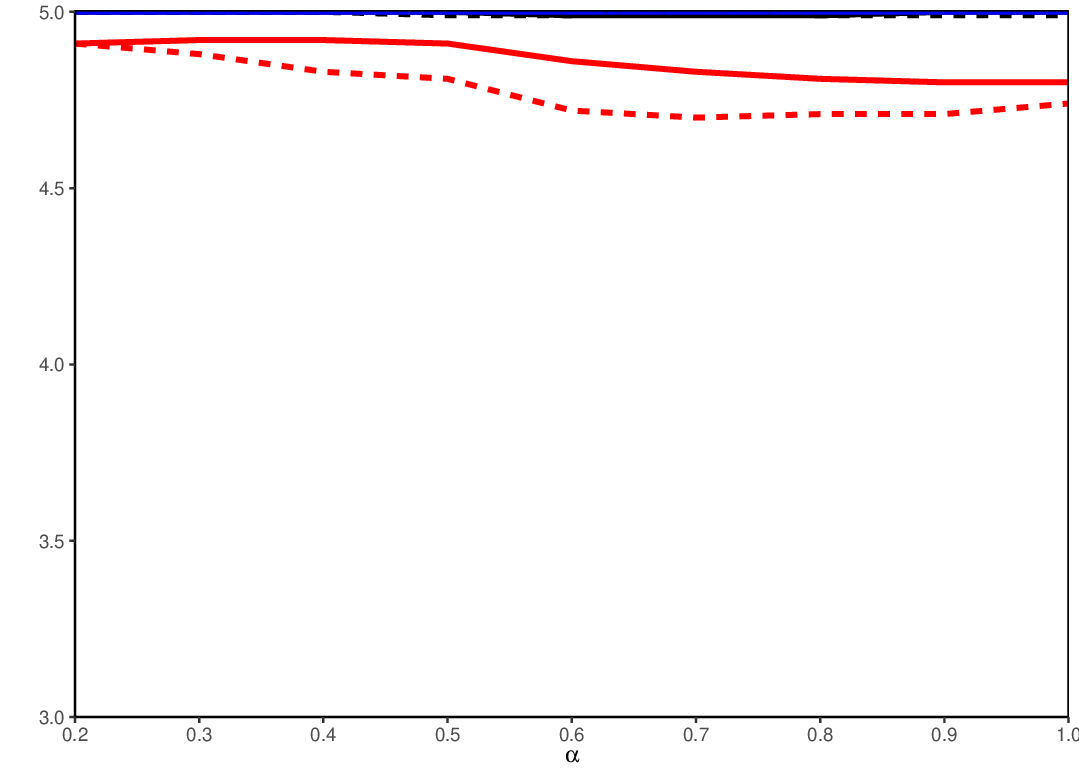}
    \caption{$h_{NT}=h_{NT}^{\mathrm{BIC}}$}
	\label{fig:bic_dgp2}
    \end{subfigure} \quad
    \hfill
    \begin{subfigure}{0.48\columnwidth}
        \includegraphics[width=\columnwidth]{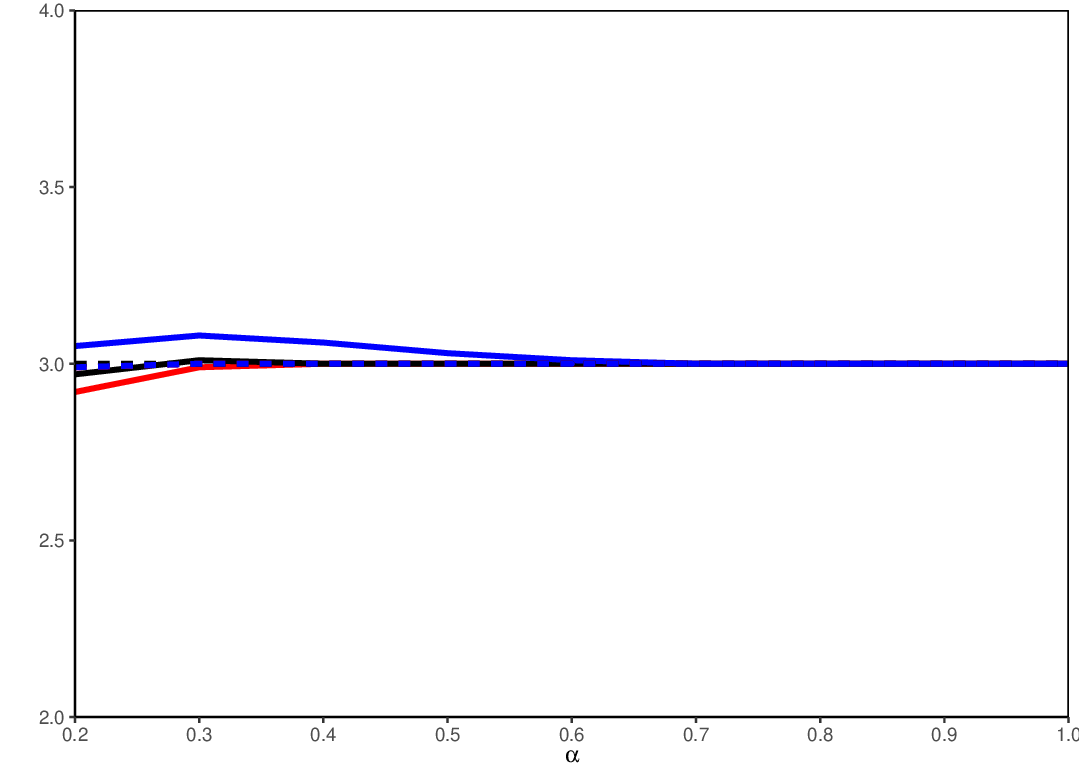}
    \caption{$h_{NT}=h_{NT}^{\mathrm{MIC}1}$}
	\label{fig:mic_dgp2}
    \end{subfigure} \quad
    \caption{Mean of $\widehat{K}(h_{NT})$ over 100 replications, DGP 2}
    \caption*{\color{red}\full\color{black}: $N=60$, \quad \color{black}\full\color{black}: $N=90$, \quad \color{blue}\full\color{black}: $N=120$, \quad \color{black}\full\color{black}: $T=10$, \color{black}\dashed\color{black}: $T=20$}
    \label{fig:dgp2_small_T}
\end{sidewaysfigure}

\clearpage
\begin{figure}
	\centering
	\begin{subfigure}{0.7\columnwidth}
    \includegraphics[width=\columnwidth]{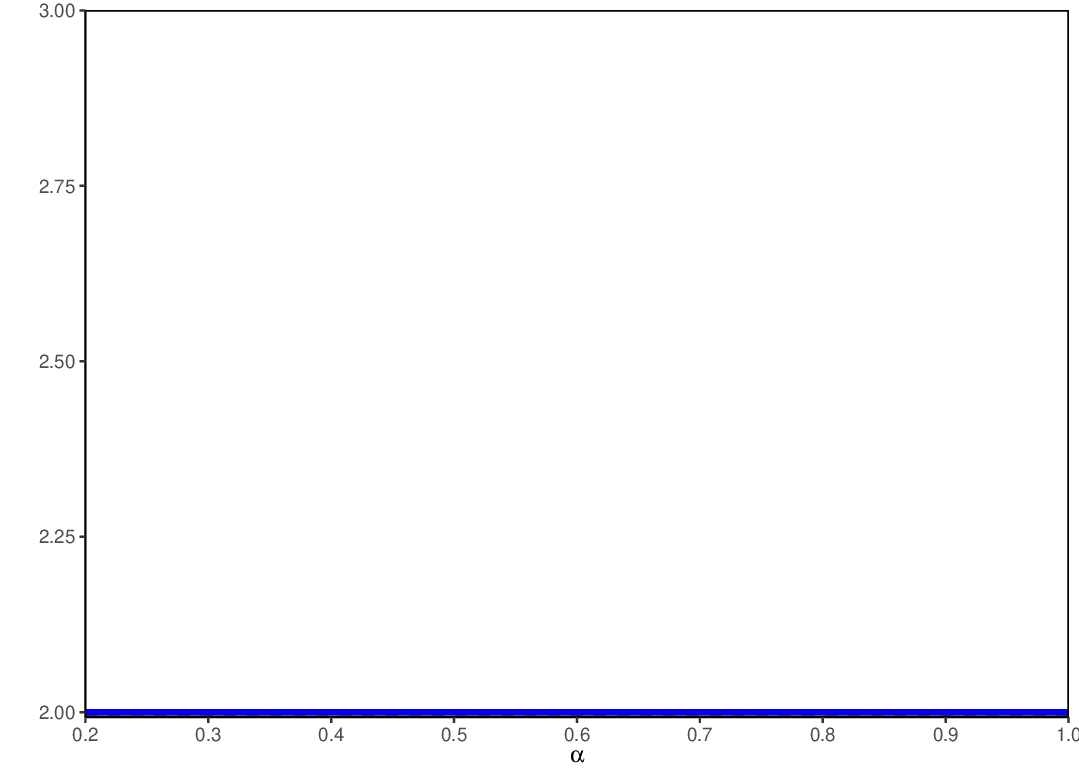}
    \caption{$h_{NT}=h_{NT}^{\mathrm{BN}}$}
	\label{fig:bn_dgp3}
    \end{subfigure}
    \hfill
    \begin{subfigure}{0.7\columnwidth}
        \includegraphics[width=\columnwidth]{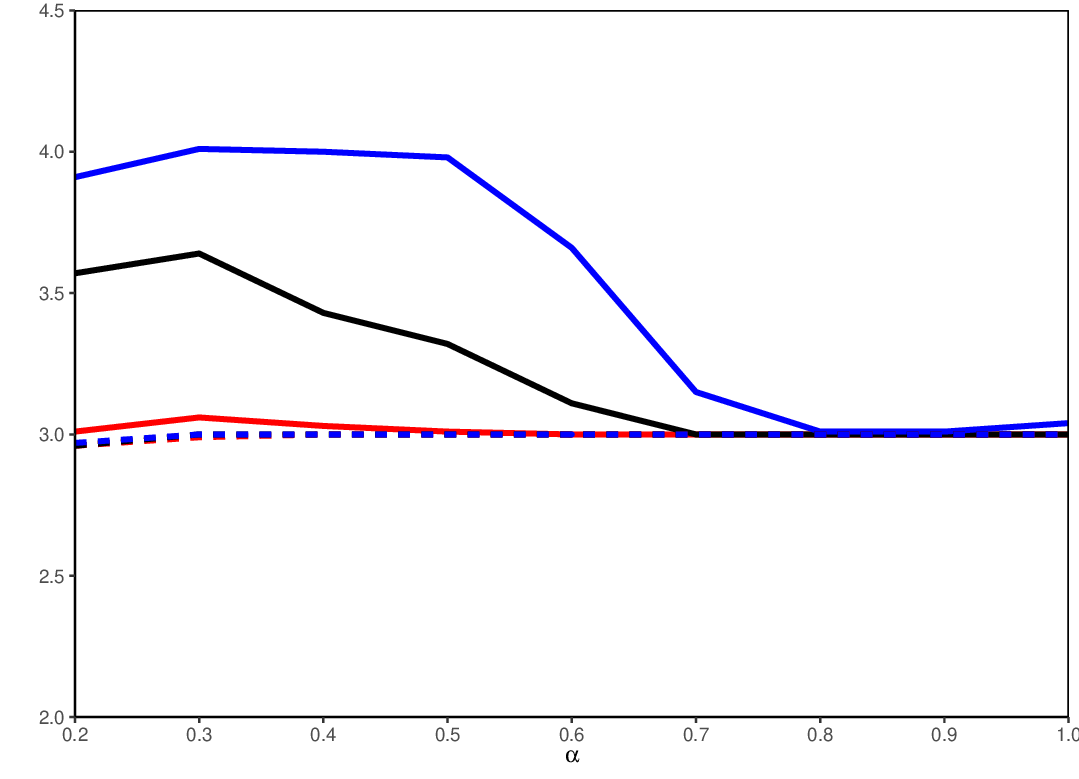}
    \caption{$h_{NT}=h_{NT}^{\mathrm{BIC}}=h_{NT}^{\mathrm{MIC}2}$}
	\label{fig:bic_dgp3}
    \end{subfigure}
    \caption{Mean of $\widehat{K}(h_{NT})$ over 100 replications, DGP 3}
    \caption*{\color{red}\full\color{black}: $N=60$, \quad \color{black}\full\color{black}: $N=90$, \quad \color{blue}\full\color{black}: $N=120$, \quad \color{black}\full\color{black}: $T=10$, \color{black}\dashed\color{black}: $T=20$}
    \label{fig:dgp3_small_T}
\end{figure}

\clearpage
\begin{sidewaysfigure}
	\centering
	    \begin{subfigure}{0.48\columnwidth}
        \includegraphics[width=\columnwidth]{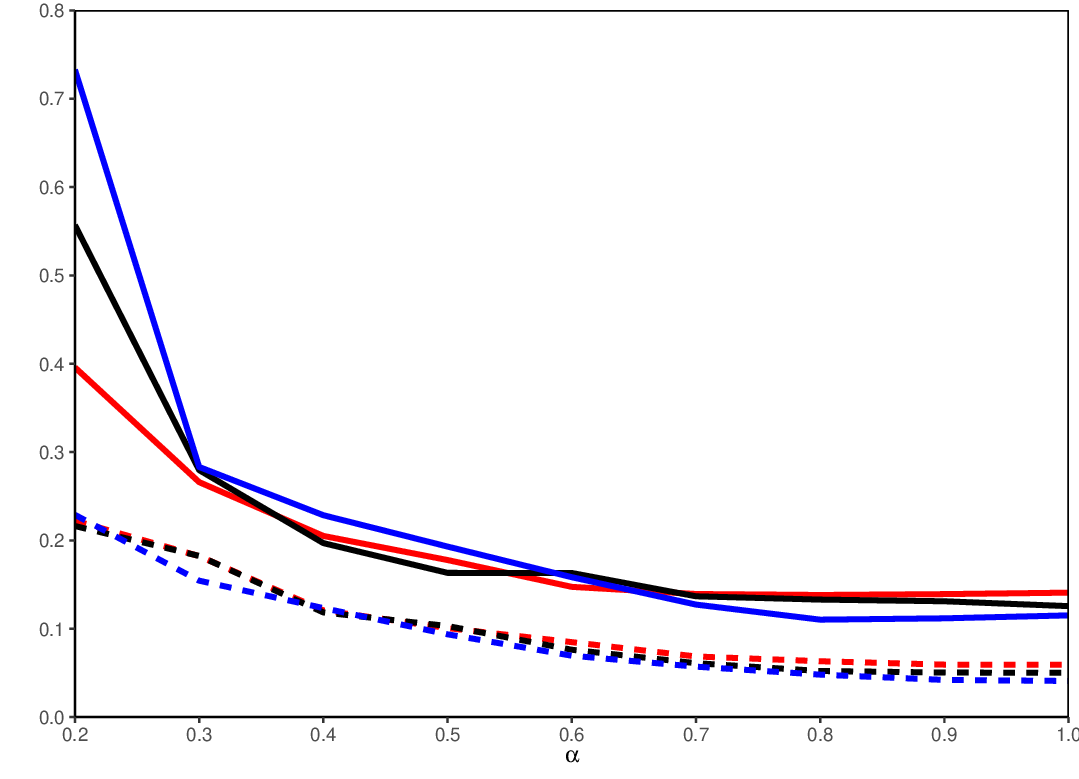}
    \caption{DGP 1}
	\label{fig:rmse_dgp1}
    \end{subfigure} \quad
    \begin{subfigure}{0.48\columnwidth}
        \includegraphics[width=\columnwidth]{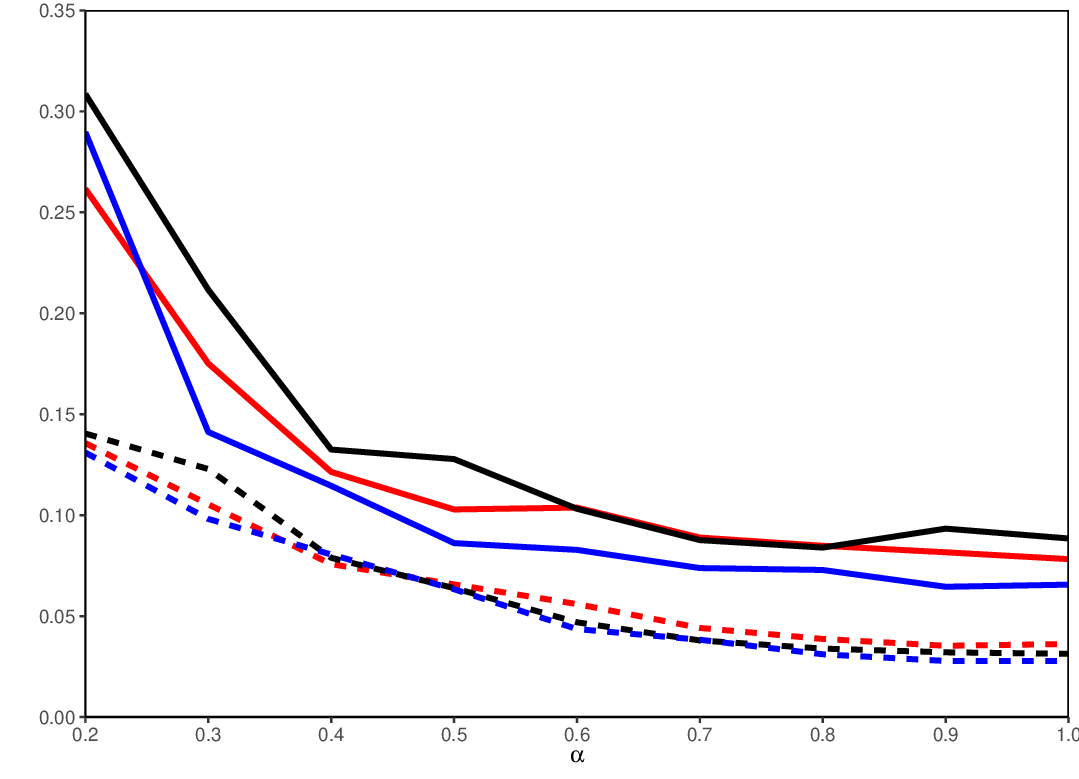}
    \caption{DGP 2}
	\label{fig:rmse_dgp2}
    \end{subfigure} \quad
    \hfill
    \begin{subfigure}{0.48\columnwidth}
        \includegraphics[width=\columnwidth]{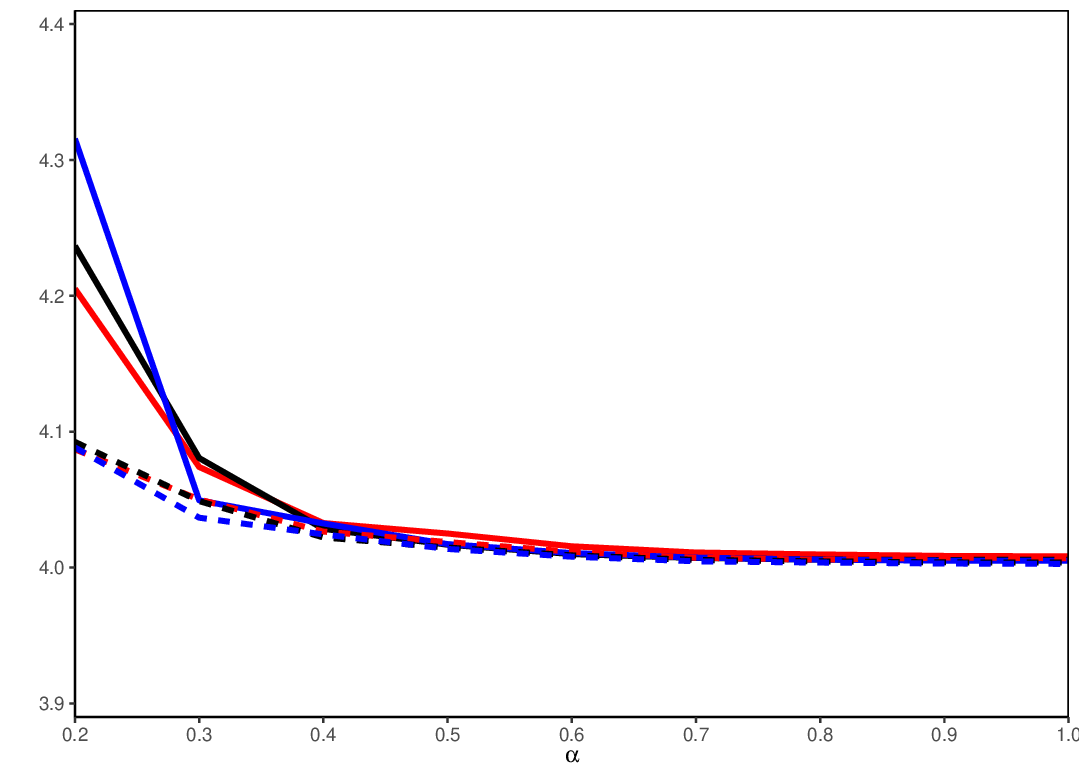}
    \caption{DGP 3}
	\label{fig:rmse_dgp3}
    \end{subfigure} \quad
    \caption{Mean of RMSE over 100 replications, Group 3}
    \caption*{\color{red}\full\color{black}: $N=60$, \quad \color{black}\full\color{black}: $N=90$, \quad \color{blue}\full\color{black}: $N=120$, \quad \color{black}\full\color{black}: $T=10$, \color{black}\dashed\color{black}: $T=20$}
    \label{fig:rmse}
\end{sidewaysfigure}

\clearpage
\begin{sidewaysfigure}
	\centering
	    \begin{subfigure}{0.48\columnwidth}
        \includegraphics[width=\columnwidth]{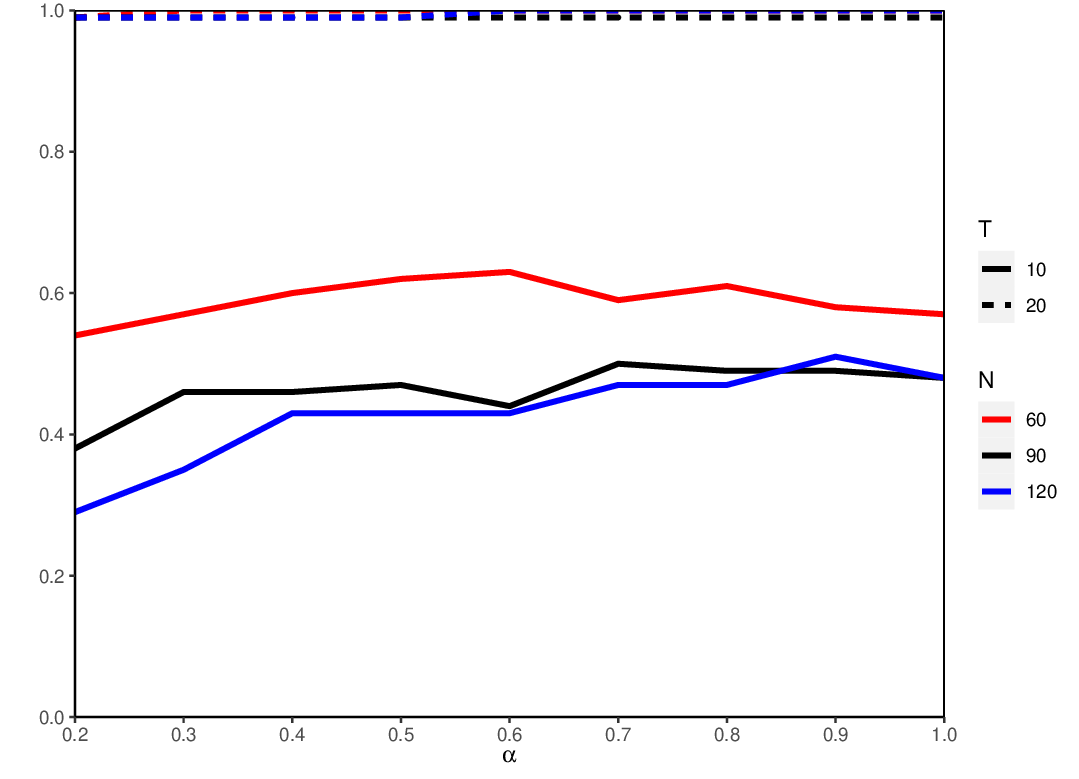}
    \caption{DGP 1}
	\label{fig:ppc_dgp1}
    \end{subfigure} \quad
    \begin{subfigure}{0.48\columnwidth}
        \includegraphics[width=\columnwidth]{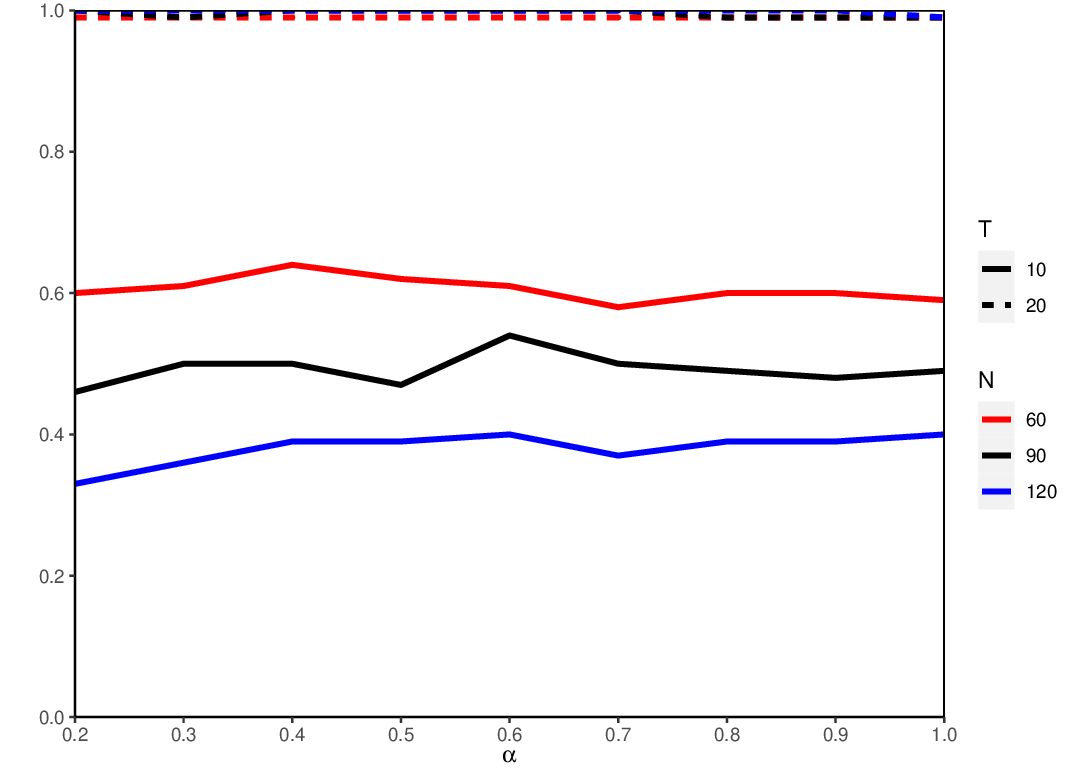}
    \caption{DGP 2}
	\label{fig:ppc_dgp2}
    \end{subfigure} \quad
    \hfill
    \begin{subfigure}{0.48\columnwidth}
        \includegraphics[width=\columnwidth]{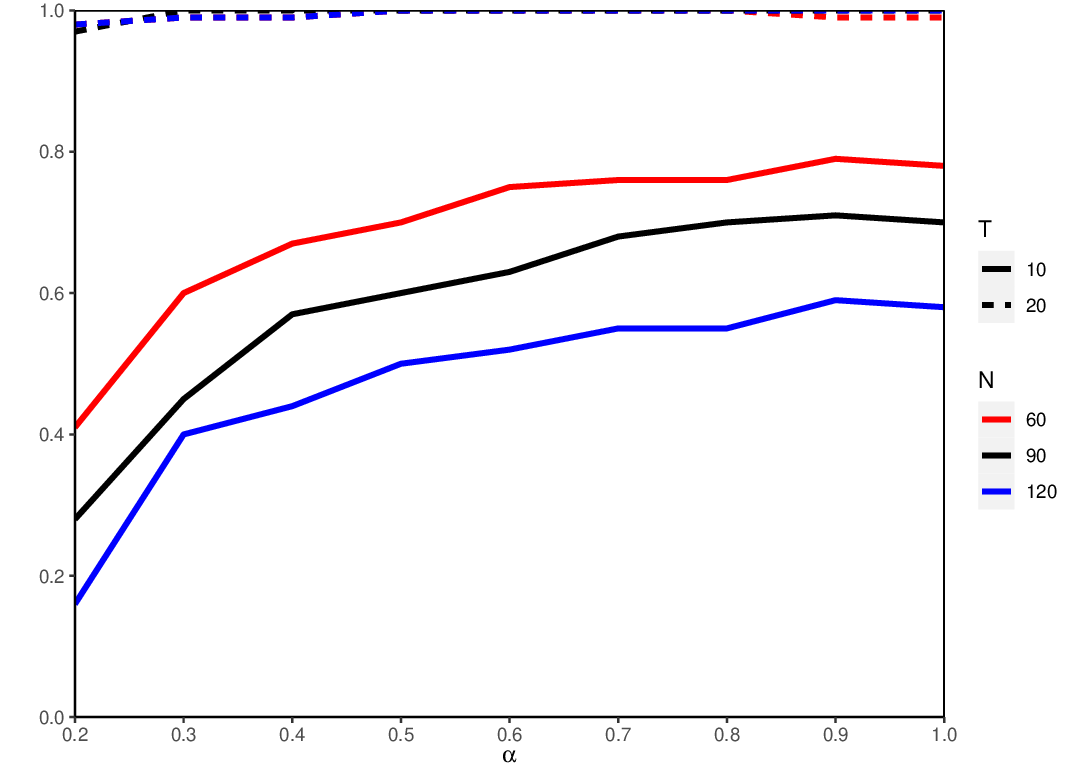}
    \caption{DGP 3}
	\label{fig:ppc_dgp3}
    \end{subfigure} \quad
    \caption{PPC over 100 replications}
    \caption*{\color{red}\full\color{black}: $N=60$, \quad \color{black}\full\color{black}: $N=90$, \quad \color{blue}\full\color{black}: $N=120$, \quad \color{black}\full\color{black}: $T=10$, \color{black}\dashed\color{black}: $T=20$}
    \label{fig:ppc}
\end{sidewaysfigure}

\clearpage
\begin{figure}
        \centering
	\begin{subfigure}{0.27\textheight}
		\includegraphics[width=\columnwidth]{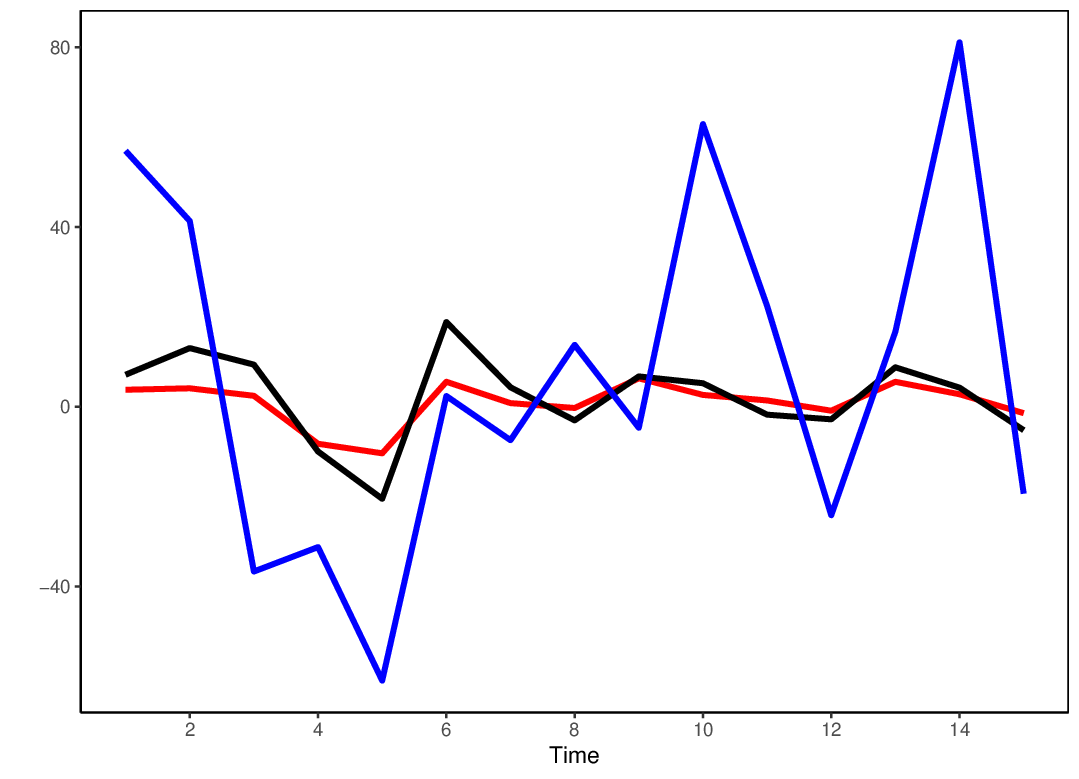}
		\caption{Sales Growth}
	\end{subfigure} \quad
	\begin{subfigure}{0.27\textheight}
		\includegraphics[width=\columnwidth]{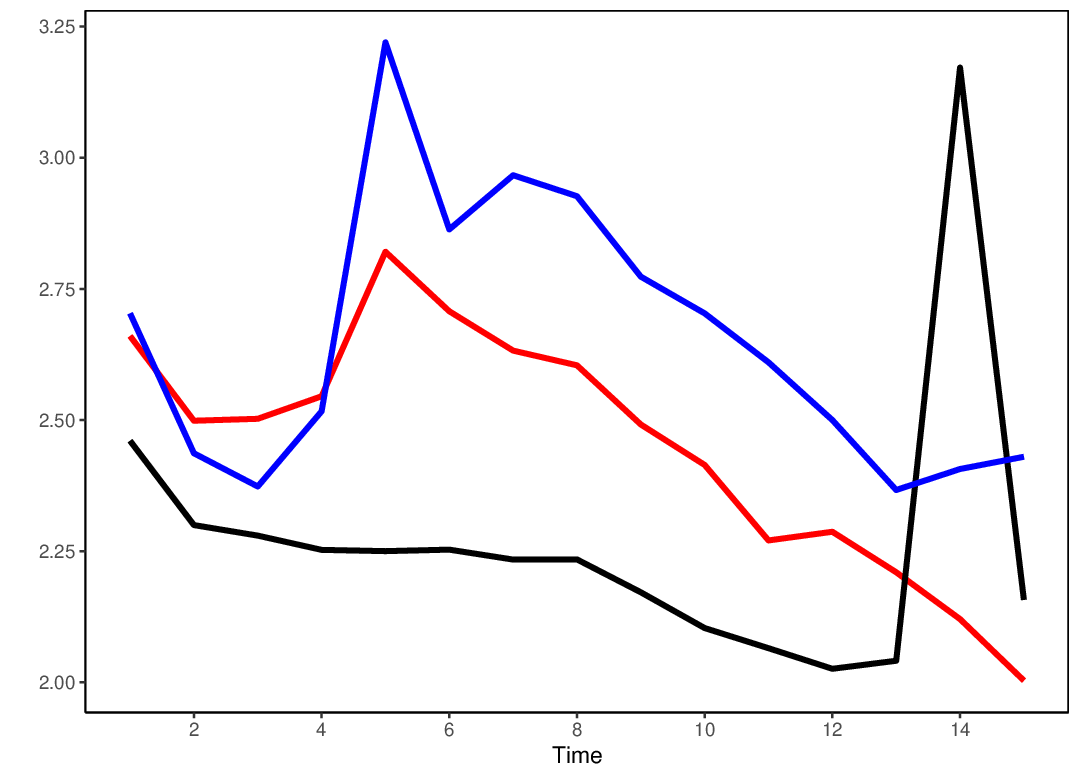}
		\caption{Leverage}
	\end{subfigure} \quad
	\begin{subfigure}{0.27\textheight}
		\includegraphics[width=\columnwidth]{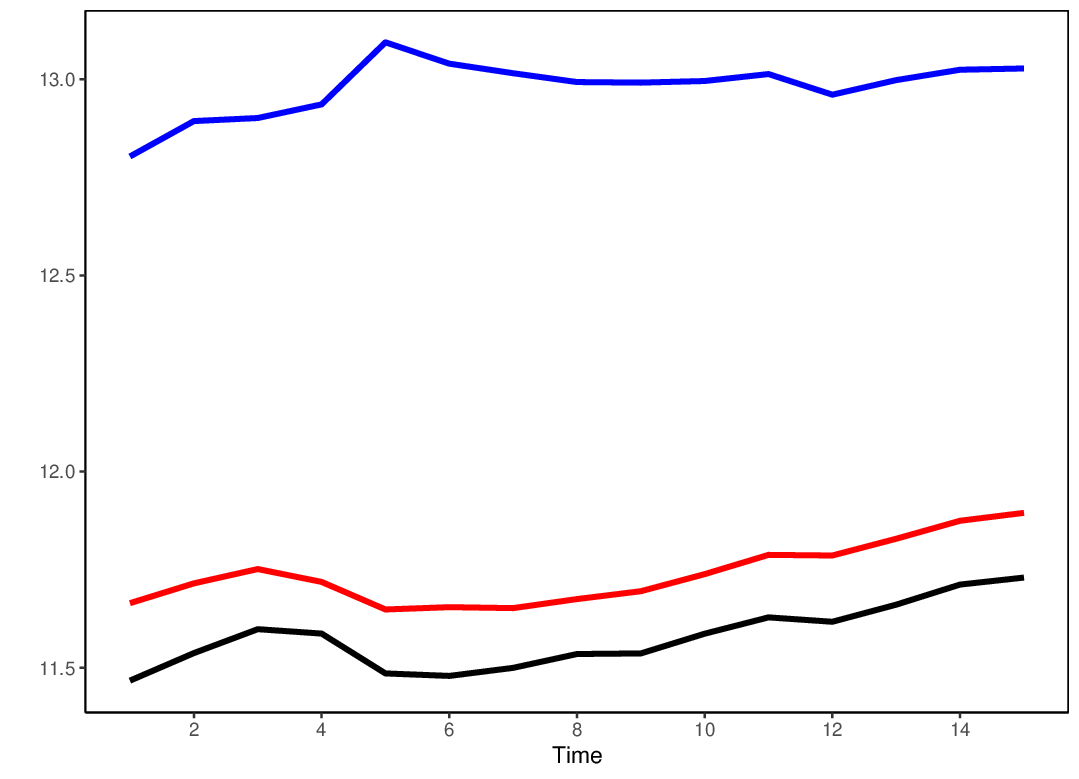}
		\caption{Log of total assets}
	\end{subfigure} \quad
	\begin{subfigure}{0.27\textheight}
		\includegraphics[width=\columnwidth]{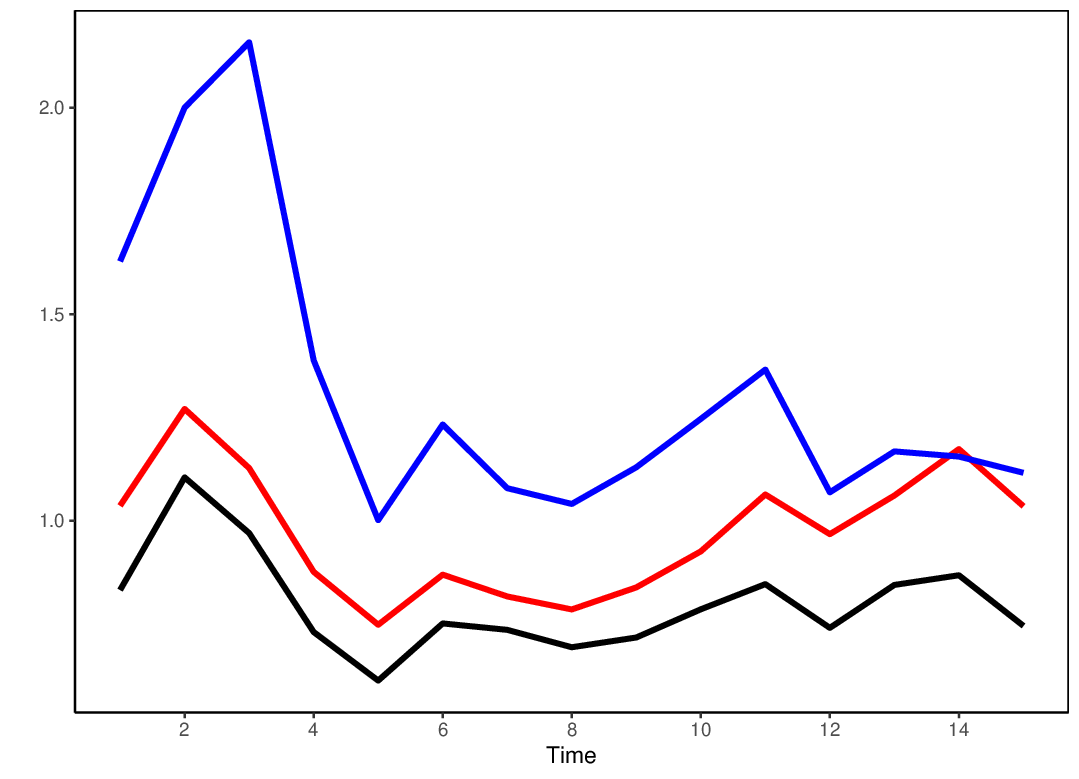}
		\caption{Tobin's q}
	\end{subfigure} \quad
        \centering
	\begin{subfigure}{0.27\textheight}
		\includegraphics[width=\columnwidth]{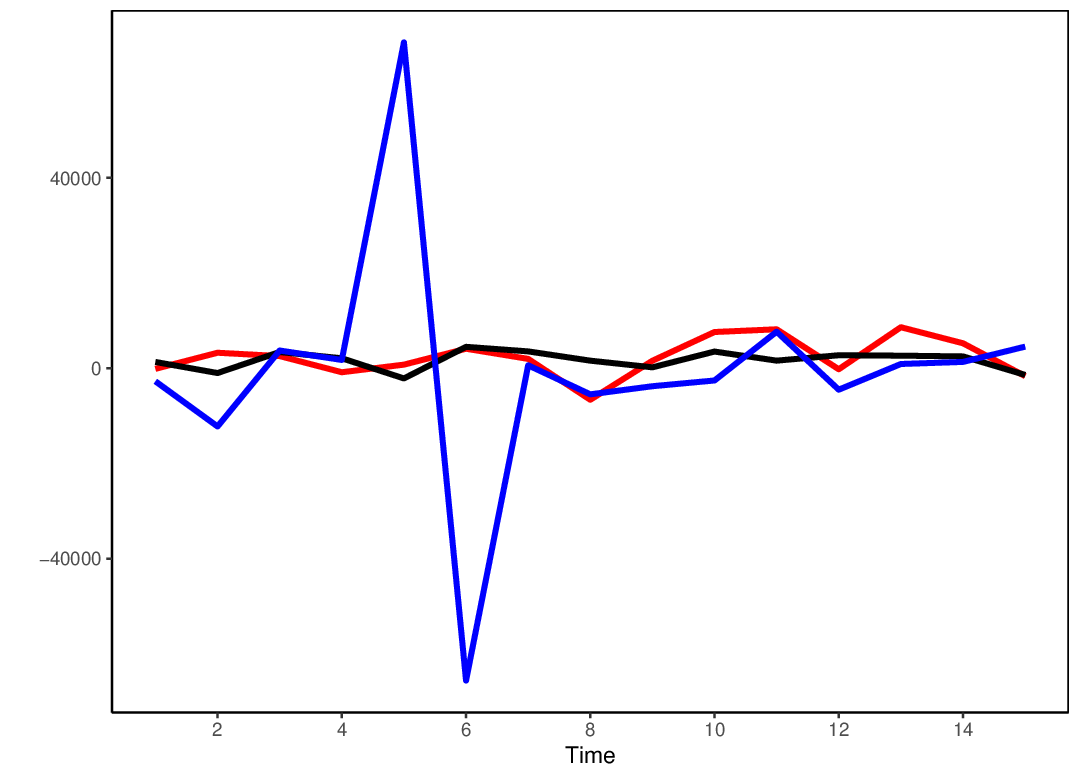}
		\caption{Cash flow}
	\end{subfigure} \quad
	\begin{subfigure}{0.27\textheight}
		\includegraphics[width=\columnwidth]{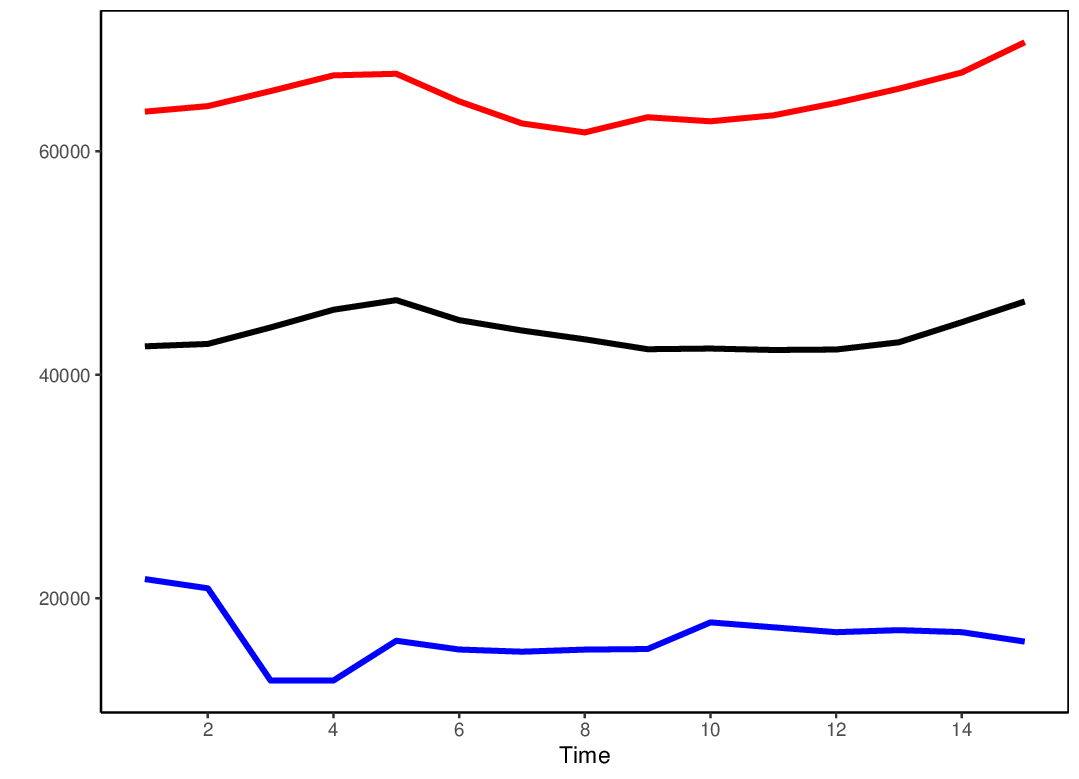}
		\caption{PPE}
	\end{subfigure} \quad
	\begin{subfigure}{0.27\textheight}
		\includegraphics[width=\columnwidth]{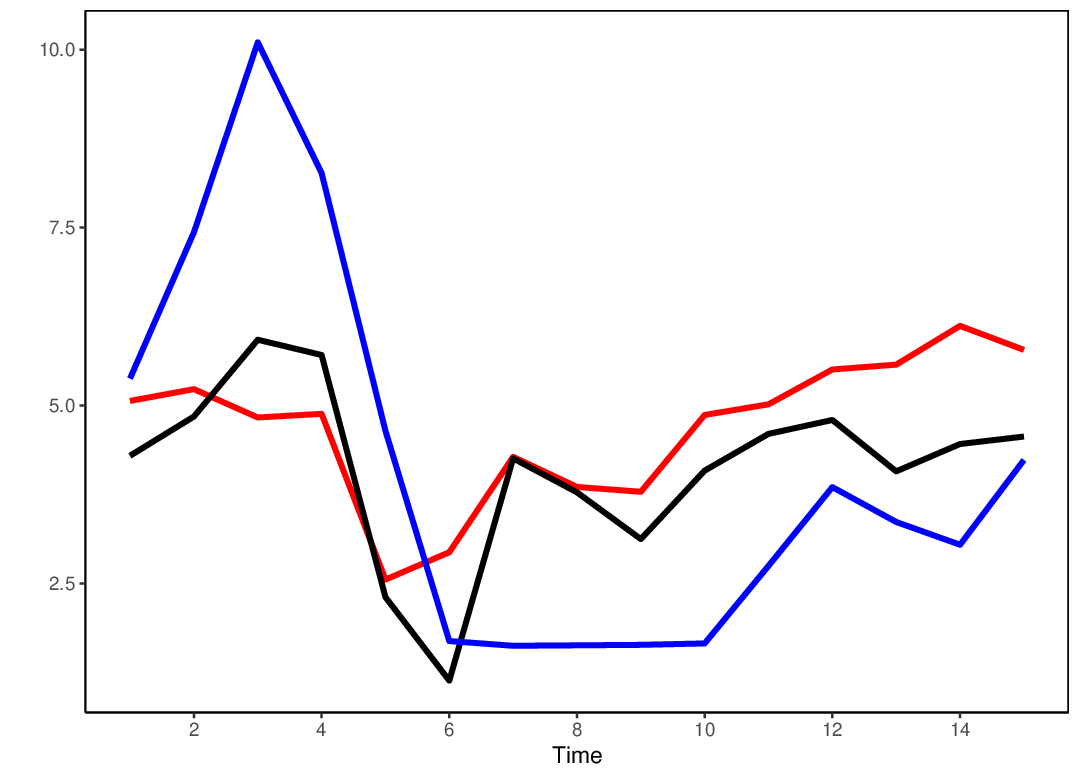}
		\caption{ROA}
	\end{subfigure} \quad
	\caption{Plots of the group averages of variables}
    \caption*{\color{red}\full\color{black}: Group 1, \quad \color{black}\full\color{black}: Group 2, \quad \color{blue}\full\color{black}: Group 3}
	\label{fig::empirics_cluster_variables}
\end{figure}

\clearpage
\appendix
\setcounter{secnumdepth}{0}

\section{Appendix A: Proofs of Results of Section \ref{sec:without_gfe}}

\setcounter{equation}{0}
\renewcommand{\theequation}{A.\arabic{equation}}

\setcounter{lem}{0}
\renewcommand{\thelem}{A.\arabic{lem}}

\setcounter{prop}{0}
\renewcommand{\theprop}{A.\arabic{prop}}

For $(\boldsymbol{\theta},\boldsymbol{\gamma}_N)\in\Theta^{K^0}\times[K^0]^N$, where $\boldsymbol{\gamma}_N = (k_1,\ldots,k_N)$, define 
\begin{align}
    \widehat{Q}(\boldsymbol{\theta},\boldsymbol{\gamma}_N)&\coloneqq (NT)^{-1}\sum_{i=1}^N\sum_{t=1}^T\left(y_{it}-x_{it}'\theta_{k_i}\right)^2,
    \intertext{and}
    \widetilde{Q}(\boldsymbol{\theta},\boldsymbol{\gamma}_N)&\coloneqq (NT)^{-1}\sum_{i=1}^N\sum_{t=1}^T\left\{x_{it}'\left(\theta_{k_i^0}^0-\theta_{k_i}\right)\right\}^2 + (NT)^{-1}\sum_{i=1}^N\sum_{t=1}^T\varepsilon_{it}^2.
\end{align}
Note that $(\boldsymbol{\theta}^0,\boldsymbol{\gamma}_N^0)$ is the minimizer of $\widetilde{Q}(\boldsymbol{\theta},\boldsymbol{\gamma}_N)$.

\begin{lem}
    Consider model \eqref{model:without_gfe}. Suppose that Assumptions \ref{asm:basic}(a) and (d) hold. Then we have
    \begin{align}
        \sup_{(\boldsymbol{\theta},\boldsymbol{\gamma}_N)\in\Theta^{K^0}\times[K^0]^N}\left|\widehat{Q}(\boldsymbol{\theta},\boldsymbol{\gamma}_N) - \widetilde{Q}(\boldsymbol{\theta},\boldsymbol{\gamma}_N)\right| = O_p(1/\sqrt{NT}).
    \end{align}
    \label{lem:Qhat_Qtilde}
\end{lem}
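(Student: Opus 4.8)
The plan is to substitute the data-generating equation $y_{it}=x_{it}'\theta_{k_i^0}^0+\varepsilon_{it}$ into $\widehat{Q}$, expand the square, and check that the quadratic-bias piece $(NT)^{-1}\sum_{i,t}\{x_{it}'(\theta_{k_i^0}^0-\theta_{k_i})\}^2$ together with the noise piece $(NT)^{-1}\sum_{i,t}\varepsilon_{it}^2$ reproduce $\widetilde{Q}(\boldsymbol{\theta},\boldsymbol{\gamma}_N)$ exactly. The difference then collapses to the single cross-product term
\[
\widehat{Q}(\boldsymbol{\theta},\boldsymbol{\gamma}_N)-\widetilde{Q}(\boldsymbol{\theta},\boldsymbol{\gamma}_N)=\frac{2}{NT}\sum_{i=1}^N\sum_{t=1}^T\varepsilon_{it}x_{it}'\bigl(\theta_{k_i^0}^0-\theta_{k_i}\bigr),
\]
so it remains to bound the supremum of the absolute value of this term over $\Theta^{K^0}\times[K^0]^N$.

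Next I would strip away as much of the combinatorial dependence on $\boldsymbol{\gamma}_N$ as possible by noting that, although there are $(K^0)^N$ candidate groupings, the weight $\theta_{k_i^0}^0-\theta_{k_i}$ carried by unit $i$ can only take one of at most $(K^0)^2$ values, indexed by the pair $(g,l)=(k_i^0,k_i)$. Writing $S_i:=\sum_{t=1}^T\varepsilon_{it}x_{it}$ and partitioning $[N]$ into the cells $C_{gl}=\{i:k_i^0=g,\ k_i=l\}$, the cross term equals
\[
\frac{2}{NT}\sum_{g=1}^{K^0}\sum_{l=1}^{K^0}(\theta_g^0-\theta_l)'\sum_{i\in C_{gl}}S_i=\frac{2}{NT}\Bigl[\sum_{g=1}^{K^0}(\theta_g^0)'\sum_{i\in G_g^0}S_i-\sum_{l=1}^{K^0}\theta_l'\sum_{i:k_i=l}S_i\Bigr].
\]
The first bracketed term does not involve the optimization variables at all; by Assumption \ref{asm:basic}(a) the true slopes lie in a bounded set, and the weak-dependence content of Assumption \ref{asm:basic}(d) applied group by group gives $\|\sum_{i\in G_g^0}S_i\|=O_p(\sqrt{N_g^0 T})=O_p(\sqrt{NT})$, so after dividing by $NT$ this piece is $O_p((NT)^{-1/2})$.

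The remaining term, $\sup_{\boldsymbol{\theta},\boldsymbol{\gamma}_N}(NT)^{-1}\bigl|\sum_{l=1}^{K^0}\theta_l'\sum_{i:k_i=l}S_i\bigr|$, is where I expect the real difficulty to lie. Compactness of $\Theta$ lets one replace each $\|\theta_l\|$ by a uniform constant, so the task reduces to controlling $\sum_{l}\bigl\|\sum_{i:k_i=l}S_i\bigr\|$ uniformly over all partitions of $[N]$ into $K^0$ blocks; the crude bound $\sum_l\|\sum_{i:k_i=l}S_i\|\le\sum_{i=1}^N\|S_i\|$ only yields a rate of $O_p(T^{-1/2})$, so to obtain the stated $(NT)^{-1/2}$ one has to exploit cancellation across units within each block. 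This amounts to a maximal-inequality argument over the family of subset sums $\{\sum_{i\in A}S_i:A\subseteq[N]\}$, driven by the cross-sectional and temporal weak dependence behind Assumption \ref{asm:basic}(d), ensuring every block sum stays of order $\sqrt{NT}$ rather than $N\sqrt{T}$. Combining this with the bound on the first term delivers the uniform rate $O_p((NT)^{-1/2})$, and reconciling the $\sqrt{NT}$ normalization with a supremum over exponentially many groupings is the crux of the argument.
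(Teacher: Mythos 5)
Your algebraic reduction of $\widehat{Q}-\widetilde{Q}$ to the cross term $\tfrac{2}{NT}\sum_{i,t}\varepsilon_{it}x_{it}'(\theta_{k_i^0}^0-\theta_{k_i})$ is exactly the paper's first (and essentially only) step; the paper then concludes in one line that compactness of $\Theta$ together with Assumption \ref{asm:basic}(d) gives the stated rate. Your proposal, however, does not finish the argument: after the cell decomposition you explicitly defer the bound on $\sup_{\boldsymbol{\gamma}_N}\max_{l}\bigl\|\sum_{i:k_i=l}S_i\bigr\|$ to an unproved maximal-inequality argument over the family of subset sums $\{\sum_{i\in A}S_i:A\subseteq[N]\}$. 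That uniform control is the entire content of the lemma --- Assumption \ref{asm:basic}(d) bounds only the single aggregate sum $\sum_{i=1}^N S_i$, and uniformity over the $(K^0)^N$ groupings is precisely what must be supplied --- so as written the proposal has a genuine gap. (A secondary point: your treatment of the first bracketed term invokes a group-by-group version of Assumption \ref{asm:basic}(d) for the fixed true groups, which is a mild but not literal consequence of \ref{asm:basic}(d) as stated.)

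That said, your diagnosis of where the difficulty sits is sharper than the paper's own one-line proof, and your worry is well founded: the missing maximal inequality cannot hold at the $\sqrt{NT}$ scale. Fix a direction $c$ with $\theta_1^0\pm c\in\Theta$, set $\theta_1=\theta_1^0+c$ and $\theta_2=\theta_1^0-c$, and for $i\in G_1^0$ choose $k_i$ so that $(\theta_1^0-\theta_{k_i})'S_i=|c'S_i|$; the cross term then contains $\tfrac{2}{NT}\sum_{i\in G_1^0}|c'S_i|$, which is of order $N\sqrt{T}/(NT)=T^{-1/2}$ in any nondegenerate setting consistent with Assumption \ref{asm:basic}(d) (e.g.\ $x_{it}\equiv 1$ and i.i.d.\ Gaussian errors). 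So the ``crude'' bound $O_p(T^{-1/2})$ that you dismiss is in fact the sharp uniform rate, and the supremum in the lemma is not $O_p((NT)^{-1/2})$ in general. The paper's proof does not address this either: it implicitly factors the $i$-varying weight $\theta_{k_i^0}^0-\theta_{k_i}$ out of the sum before applying Assumption \ref{asm:basic}(d), which is not legitimate when the grouping is being supremized over. A defensible version of the lemma would assert the rate $O_p(T^{-1/2})$ via your crude bound (under a uniform second-moment condition on $T^{-1/2}\sum_{t}x_{it}\varepsilon_{it}$), at the cost of correspondingly stronger rate conditions in the downstream argument of Lemma \ref{lem:theta_consistency}.
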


\noindent\begin{proof}
    Since $\widehat{Q}(\boldsymbol{\theta},\boldsymbol{\gamma}_N) - \widetilde{Q}(\boldsymbol{\theta},\boldsymbol{\gamma}_N) = 2(NT)^{-1}\sum_{i=1}^N\sum_{t=1}^T\varepsilon_{it}x_{it}'(\theta_{k_i^0}^0-\theta_{k_i})$ and $\sup\|\theta_k\|<\infty$ by assumption \ref{asm:basic}(a), the claimed result follows from Assumption \ref{asm:basic}(d).
\end{proof}

For $\boldsymbol{\theta},\overline{\boldsymbol{\theta}}\in\Theta^{K^0}$, define the Hausdorff distance between $\boldsymbol{\theta}$ and $\overline{\boldsymbol{\theta}}$:
\begin{align}
    d_H\left(\boldsymbol{\theta},\overline{\boldsymbol{\theta}}\right)\coloneqq\max\left\{\max_{k\in[K^0]}\min_{\widetilde{k}\in[K^0]}\left\|\theta_{\widetilde{k}}-\overline{\theta}_k\right\|,\max_{\widetilde{k}\in[K^0]}\min_{k\in[K^0]}\left\|\theta_{\widetilde{k}}-\overline{\theta}_k\right\|\right\}.
\end{align}

\begin{lem}
    Consider model \eqref{model:without_gfe}. Suppose that Assumptions \ref{asm:group_size} and \ref{asm:basic}(a)-(h) hold. Then we have $d_H\left(\widehat{\boldsymbol{\theta}}^{(K^0)},\boldsymbol{\theta}^0\right) \stackrel{p}{\to} 0$.
    \label{lem:theta_consistency}
\end{lem}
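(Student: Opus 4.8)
The plan is to establish consistency of $\widehat{\boldsymbol{\theta}}^{(K^0)}$ in the Hausdorff sense via a standard M-estimation argument, using Lemma \ref{lem:Qhat_Qtilde} to pass from the sample objective $\widehat Q$ to the population-like criterion $\widetilde Q$, and then exploiting the identification built into Assumptions \ref{asm:basic}(e)--(h). First I would note that $(\widehat{\boldsymbol{\theta}}^{(K^0)},\widehat{\boldsymbol{\gamma}}_N^{(K^0)})$ minimizes $\widehat Q$, so $\widehat Q(\widehat{\boldsymbol{\theta}}^{(K^0)},\widehat{\boldsymbol{\gamma}}_N^{(K^0)}) \le \widehat Q(\boldsymbol{\theta}^0,\boldsymbol{\gamma}_N^0)$. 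Applying Lemma \ref{lem:Qhat_Qtilde} on both sides and recalling that $\widetilde Q(\boldsymbol{\theta}^0,\boldsymbol{\gamma}_N^0) = (NT)^{-1}\sum_{i,t}\varepsilon_{it}^2$ (the minimal value of $\widetilde Q$), this yields
\begin{align}
    (NT)^{-1}\sum_{i=1}^N\sum_{t=1}^T\left\{x_{it}'\bigl(\theta_{k_i^0}^0 - \widehat\theta_{\widehat k_i}^{(K^0)}\bigr)\right\}^2 = O_p(1/\sqrt{NT}).
\end{align}
So the ``fitted-value discrepancy'' averaged over all units is $o_p(1)$.

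The next step is to convert this averaged smallness into a statement about each true group. Fix a true group $k$ and look at the subsum over $i\in G_k^0$. Since every group with $\alpha_k = 1$ carries a fraction of $N$ bounded away from zero, and the smaller groups only help, one obtains that for the $m$ large groups, $(N_k^0 T)^{-1}\sum_{i\in G_k^0}\sum_t\{x_{it}'(\theta_k^0 - \widehat\theta_{\widehat k_i}^{(K^0)})\}^2 = o_p(1)$; for the small groups a little more care is needed because $N_k^0 = o(N)$, but Assumption \ref{asm:basic}(g) is precisely what guarantees the rate $1/\sqrt{NT}$ is still small enough relative to $N_k^0$ — indeed $N^{-1}\cdot(NT)^{-1/2}\cdot N = (NT)^{-1/2}\to 0$ handles the worst case crudely, and sharper bookkeeping with $N_k^0 = \tau_k N^{\alpha_k}$ and (g) closes it. Within group $G_k^0$, partition units by their estimated label: for each $\widetilde k$ in the support of $\{\widehat k_i : i\in G_k^0\}$, on that subcollection the quadratic form is $(\text{something}) \cdot \|\theta_k^0 - \widehat\theta_{\widetilde k}^{(K^0)}\|^2$ up to the design matrix, and Assumption \ref{asm:basic}(e) (or (f)) gives a uniform lower bound $\rho > 0$ on the relevant minimum eigenvalue. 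Hence $\|\theta_k^0 - \widehat\theta_{\widetilde k}^{(K^0)}\| = o_p(1)$ for at least the dominant label $\widetilde k$ assigned within $G_k^0$ — which gives $\min_{\widetilde k}\|\theta_k^0 - \widehat\theta_{\widetilde k}^{(K^0)}\| = o_p(1)$, i.e.\ one half of the Hausdorff bound.

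For the other half — every estimated $\widehat\theta_{\widetilde k}^{(K^0)}$ is close to \emph{some} true $\theta_k^0$ — I would argue by counting: there are exactly $K^0$ estimated parameters and $K^0$ true parameters; the first half shows each true $\theta_k^0$ has an estimated parameter within $o_p(1)$, and the group-separation Assumption \ref{asm:basic}(h) ($\|\theta_k^0 - \theta_{\widetilde k}^0\| > c$) ensures these ``matches'' are distinct w.p.a.1, so the map $k \mapsto$ (its matched estimated label) is injective, hence bijective, hence the inverse map shows every estimated parameter is matched to a true one within $o_p(1)$. Combining the two halves gives $d_H(\widehat{\boldsymbol{\theta}}^{(K^0)},\boldsymbol{\theta}^0)\stackrel{p}{\to}0$.

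The main obstacle is the small-group bookkeeping in the second step: one must verify that the $O_p((NT)^{-1/2})$ slack in the averaged fitted-value discrepancy, after being localized to a group of size $N_k^0 = \tau_k N^{\alpha_k}$ with $\alpha_k$ possibly close to $0$ and then to a sub-label within it, still goes to zero — this is exactly where Assumption \ref{asm:basic}(g) enters and must be invoked with care (and where (i)--(j), controlling $T^{-1}\sum_t\|x_{it}\|^2$ and $T^{-1}\sum_t x_{it}\varepsilon_{it}$ uniformly in $i$ with stretched-exponential rates, may be needed to rule out a single ill-behaved small-group unit contaminating the bound). The eigenvalue and separation arguments are routine once that rate is pinned down.
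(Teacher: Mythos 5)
Your proposal is correct and follows essentially the same route as the paper's proof: the basic inequality plus Lemma \ref{lem:Qhat_Qtilde} gives $\widetilde{Q}(\widehat{\boldsymbol{\theta}}^{(K^0)},\widehat{\boldsymbol{\gamma}}_N^{(K^0)})-\widetilde{Q}(\boldsymbol{\theta}^0,\boldsymbol{\gamma}_N^0)=O_p(1/\sqrt{NT})$, Assumption \ref{asm:basic}(e) lower-bounds this by $\widehat{\rho}_{NT}\sum_k (N_k^0/N)\min_{\widetilde k}\|\theta_k^0-\widehat\theta_{\widetilde k}^{(K^0)}\|^2$ so that the worst case is $O_p(N^{1/2-\alpha_{K^0}}/\sqrt{T})=o_p(1)$ by Assumption \ref{asm:basic}(g), and the reverse Hausdorff direction follows from the same injectivity-of-the-matching argument via Assumption \ref{asm:basic}(h). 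The only blemish is your ``crude'' intermediate bound (which as written would give $N^{1/2}T^{-1/2}$, not $(NT)^{-1/2}$), but your stated sharper bookkeeping with $N_k^0=\tau_kN^{\alpha_k}$ and (g) is exactly what the paper does, so nothing is missing.
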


\noindent\begin{proof}
    The proof closely follows that of Lemma S.4 of \citet{bonhommeGroupedPatternsHeterogeneity2015}. Using Assumption \ref{asm:basic}(e), we have
    \begin{align}
        \widetilde{Q}\left(\boldsymbol{\theta},\boldsymbol{\gamma}_N\right) - \widetilde{Q}\left(\boldsymbol{\theta}^0,\boldsymbol{\gamma}_N^0\right) 
        &= \frac{1}{N}\sum_{i=1}^N\left(\theta_{k_i^0}^0-\theta_{k_i}\right)'\frac{1}{T}\sum_{t=1}^Tx_{it}x_{it}'\left(\theta_{k_i^0}^0-\theta_{k_i}\right) \\
        &= \sum_{k=1}^{K^0}\frac{N_k^0}{N}\sum_{\widetilde{k}=1}^{K^0}\left(\theta_k^0-\theta_{\widetilde{k}}\right)'\left\{\frac{1}{N_k^0}\sum_{i=1}^N1\{k_i^0=k\}1\{k_i=\widetilde{k}\}\frac{1}{T}\sum_{t=1}^Tx_{it}x_{it}'\right\}\left(\theta_k^0-\theta_{\widetilde{k}}\right) \\
        &\geq \sum_{k=1}^{K^0}\frac{N_k^0}{N}\sum_{\widetilde{k}=1}^{K^0}\rho_{NT}(\boldsymbol{\gamma}_N,k,\widetilde{k})\left\|\theta_k^0-\theta_{\widetilde{k}}\right\|^2 \\
        &\geq \widehat{\rho}_{NT}\sum_{k=1}^{K^0}\frac{N_k^0}{N}\min_{\widetilde{k}\in[K^0]}\left\|\theta_k^0-\theta_{\widetilde{k}}\right\|^2.
        \label{bound:Qtilde_diff_lower}
    \end{align}
    By Lemma \ref{lem:Qhat_Qtilde} and the definitions of $(\widehat{\boldsymbol{\theta}}^{(K^0)}, \widehat{\boldsymbol{\gamma}}_N^{(K^0)})$ and $\widetilde{Q}(\boldsymbol{\theta},\boldsymbol{\gamma}_N)$, we have
    \begin{align}
        0\leq \widetilde{Q}\left(\widehat{\boldsymbol{\theta}}^{(K^0)}, \widehat{\boldsymbol{\gamma}}_N^{(K^0)}\right) 
        &= \widehat{Q}\left(\widehat{\boldsymbol{\theta}}^{(K^0)}, \widehat{\boldsymbol{\gamma}}_N^{(K^0)}\right) + O_p(1/\sqrt{NT}) \\
        &\leq \widehat{Q}\left(\boldsymbol{\theta}^0,\boldsymbol{\gamma}_N^0\right) + O_p(1/\sqrt{NT}) \\
        &=\widetilde{Q}\left(\boldsymbol{\theta}^0,\boldsymbol{\gamma}_N^0\right) + O_p(1/\sqrt{NT}) \\
        &\leq \widetilde{Q}\left(\widehat{\boldsymbol{\theta}}^{(K^0)}, \widehat{\boldsymbol{\gamma}}_N^{(K^0)}\right) + O_p(1/\sqrt{NT}),
    \end{align}
    which implies
    \begin{align}
        \widetilde{Q}\left(\widehat{\boldsymbol{\theta}}^{(K^0)}, \widehat{\boldsymbol{\gamma}}_N^{(K^0)}\right) - \widetilde{Q}\left(\boldsymbol{\theta}^0,\boldsymbol{\gamma}_N^0\right) = O_p(1/\sqrt{NT}).
        \label{order:Qtilde_true_ols_diff}
    \end{align}

    Combining \eqref{bound:Qtilde_diff_lower}, \eqref{order:Qtilde_true_ols_diff}, and the fact that $\widehat{\rho}_{NT}\stackrel{p}{\to}\rho>0$ by Assumption \ref{asm:basic}(e), we have w.p.a.1
    \begin{align}
        O_p(1/\sqrt{NT}) 
        &= \widetilde{Q}\left(\widehat{\boldsymbol{\theta}}^{(K^0)},\widehat{\boldsymbol{\gamma}}_N^{(K^0)}\right) - \widetilde{Q}\left(\boldsymbol{\theta}^0,\boldsymbol{\gamma}^0\right) \\
        &\geq \frac{\rho}{2} \sum_{k=1}^{K^0}\frac{N_k^0}{N}\min_{\widetilde{k}\in[K^0]}\left\|\theta_k^0-\widehat{\theta}_{\widetilde{k}}^{(K^0)}\right\|^2 \\
        &\geq \frac{\rho}{2}\frac{N_{K^0}^0}{N}\max_{k\in[K^0]}\min_{\widetilde{k}\in[K^0]}\left\|\theta_k^0-\widehat{\theta}_{\widetilde{k}}^{(K^0)}\right\|^2,
    \end{align}
    and hence
    \begin{align}
        \max_{k\in[K^0]}\min_{\widetilde{k}\in[K^0]}\left\|\theta_k^0-\widehat{\theta}_{\widetilde{k}}^{(K^0)}\right\|^2 = O_p\left(N^{1/2-\alpha_{K^0}}/\sqrt{T}\right) = o_p(1), \label{order:hausdorff} 
    \end{align}
    by Assumption \ref{asm:basic}(g).

    Define $\sigma(k)\coloneqq \argmin_{\widetilde{k}\in[K^0]}\|\theta_k^0-\widehat{\theta}_{\widetilde{k}}^{(K^0)}\|^2$. Using the triangle inequality and Assumption \ref{asm:basic}(h), we have, for all $k\neq\widetilde{k}$,
    \begin{align}
        \left\|\widehat{\theta}_{\sigma(k)}^{(K^0)} - \widehat{\theta}_{\sigma(\widetilde{k})}^{(K^0)}\right\| \geq \left\|\theta_k^0-\theta_{\widetilde{k}}^0\right\| - \left\|\widehat{\theta}_{\sigma(k)}^{(K^0)} - \theta_k^0\right\| - \left\|\widehat{\theta}_{\sigma(\widetilde{k})}^{(K^0)} - \theta_{\widetilde{k}}^0\right\| \geq c/2,        
    \end{align}
    where the last inequality holds w.p.a.1 in view of \eqref{order:hausdorff}. This implies that $\sigma(k)\neq\sigma(\widetilde{k})$ for $k\neq\widetilde{k}$ w.p.a.1; that is, $\sigma$ is a permutation of $[K^0]$ and bijective w.p.a.1. Therefore, we can take a well-defined inverse $\sigma^{-1}$, and for all $\widetilde{k}\in[K^0]$,
    \begin{align}
        \min_{k\in[K^0]}\left\|\theta_k^0-\widehat{\theta}_{\widetilde{k}}^{(K^0)}\right\| 
        \leq \left\|\theta_{\sigma^{-1}(\widetilde{k})}^0-\widehat{\theta}_{\widetilde{k}}^{(K^0)}\right\|=\min_{h\in[K^0]}\left\|\theta_{\sigma^{-1}(\widetilde{k})}^0-\widehat{\theta}_h^{(K^0)}\right\| = o_p(1)
    \end{align}
    by \eqref{order:hausdorff}. This, together with \eqref{order:hausdorff}, implies $d_H(\widehat{\boldsymbol{\theta}}^{(K^0)},\boldsymbol{\theta}^0)\stackrel{p}{\to}0.$
\end{proof}

In what follows, we take $\sigma(k)=k$ by relabeling the groups, which yields $\|\widehat{\theta}_k^{(K^0)}-\theta_k^0\|\stackrel{p}{\to}0$ for all $k\in[K^0]$. For any $\eta>0$, let $\mathcal{N}_\eta$ denote the set of parameters $\boldsymbol{\theta}\in\Theta^{K^0}$ that satisfy $\|\theta_k-\theta_k^0\|^2<\eta$ for all $k\in[K^0]$.

\begin{lem}
    Consider model \eqref{model:without_gfe}. Suppose that Assumptions \ref{asm:group_size} and \ref{asm:basic} hold. Then for $\eta>0$ small enough, and for all $\delta>0$, we have
    \begin{align}
        \sup_{\boldsymbol{\theta}\in\mathcal{N}_\eta}\frac{1}{N}\sum_{i=1}^N1\left\{\widehat{k}_i^{(K^0)}\left(\boldsymbol{\theta}\right)\neq k_i^0\right\}=o_p\left(T^{-\delta}\right)
    \end{align}
    as $N,T\to\infty$, where $\widehat{k}_i^{(K^0)}(\boldsymbol{\theta})=\argmin_{k\in[K^0]}\sum_{t=1}^T(y_{it} - x_{it}'\theta_k)^2$ for given $\boldsymbol{\theta}=(\theta_1,\ldots,\theta_{K^0})$.
    \label{lem:prob_missclassify}
\end{lem}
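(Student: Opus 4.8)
The plan is to bound, for each fixed unit $i$, the probability that $i$ is misclassified at \emph{any} $\boldsymbol{\theta}\in\mathcal{N}_\eta$, with a bound that is uniform in $i$, and then to average over $i$ and apply Markov's inequality. The crux is to reduce the comparison of fits across all candidate groups to just two scalar statistics, $T^{-1}\sum_{t=1}^T\|x_{it}\|^2$ and $\|T^{-1}\sum_{t=1}^T\varepsilon_{it}x_{it}\|$, whose tails are controlled by Assumptions \ref{asm:basic}(i) and (j).

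First I would fix $i$ with true membership $k_i^0$ and a candidate group $k\neq k_i^0$. Misclassification of $i$ to $k$ at $\boldsymbol{\theta}$ implies $\sum_t(y_{it}-x_{it}'\theta_k)^2\le\sum_t(y_{it}-x_{it}'\theta_{k_i^0})^2$; substituting $y_{it}=x_{it}'\theta_{k_i^0}^0+\varepsilon_{it}$, expanding the squares, cancelling the common $\sum_t\varepsilon_{it}^2$, and dividing by $T$ turns this into
\begin{align}
  \frac1T\sum_{t=1}^T\bigl(x_{it}'(\theta_{k_i^0}^0-\theta_k)\bigr)^2
  &\le \frac1T\sum_{t=1}^T\bigl(x_{it}'(\theta_{k_i^0}^0-\theta_{k_i^0})\bigr)^2 \notag\\
  &\quad{}+ \frac2T\sum_{t=1}^T\varepsilon_{it}x_{it}'(\theta_{k_i^0}^0-\theta_{k_i^0})
          - \frac2T\sum_{t=1}^T\varepsilon_{it}x_{it}'(\theta_{k_i^0}^0-\theta_k).
\end{align}
Since $\boldsymbol{\theta}\in\mathcal{N}_\eta$, the ``wrong'' difference $\theta_{k_i^0}^0-\theta_{k_i^0}$ has squared norm below $\eta$, while by the separation condition (Assumption \ref{asm:basic}(h)) $\|\theta_{k_i^0}^0-\theta_k\|\ge\|\theta_{k_i^0}^0-\theta_k^0\|-\sqrt{\eta}>c/2$ once $\eta<c^2/4$. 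Bounding the left side below via Assumption \ref{asm:basic}(f), the first right-hand term via Cauchy--Schwarz, and the two cross terms via the compactness of $\Theta$ (Assumption \ref{asm:basic}(a)), I obtain the deterministic implication: misclassification of $i$ at any $\boldsymbol{\theta}\in\mathcal{N}_\eta$ forces
\begin{align}
  \widehat{c}_T\,(c/2)^2 \le \eta\cdot\frac1T\sum_{t=1}^T\|x_{it}\|^2 + C\left\|\frac1T\sum_{t=1}^T\varepsilon_{it}x_{it}\right\|,
\end{align}
for a constant $C$ depending only on $\mathrm{diam}(\Theta)$, uniformly in $k\neq k_i^0$ and $\boldsymbol{\theta}\in\mathcal{N}_\eta$.

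Next I would choose constants in the right order. Because $\lim_{T\to\infty}\widehat{c}_T>c>0$, for all large $T$ the left side above exceeds a fixed $c_0>0$; pick $\eta$ (depending only on $c$ and $M$) so small that $M\eta<c_0/2$. Then on the event $\{T^{-1}\sum_t\|x_{it}\|^2<M\}$ the displayed inequality forces $\|T^{-1}\sum_t\varepsilon_{it}x_{it}\|>c_0/(2C)$; call this positive constant $\epsilon_0$. Hence, for all large $T$, every $\boldsymbol{\theta}\in\mathcal{N}_\eta$, and every $i$, the event $\{\widehat{k}_i^{(K^0)}(\boldsymbol{\theta})\neq k_i^0\}$ is contained in $\mathcal{A}_i\cup\mathcal{B}_i$, where $\mathcal{A}_i\coloneqq\{T^{-1}\sum_t\|x_{it}\|^2\ge M\}$ and $\mathcal{B}_i\coloneqq\{\|T^{-1}\sum_t\varepsilon_{it}x_{it}\|\ge\epsilon_0\}$. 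Therefore $\sup_{\boldsymbol{\theta}\in\mathcal{N}_\eta}N^{-1}\sum_i 1\{\widehat{k}_i^{(K^0)}(\boldsymbol{\theta})\neq k_i^0\}\le N^{-1}\sum_i(1\{\mathcal{A}_i\}+1\{\mathcal{B}_i\})$, whose expectation is at most $\sup_i P(\mathcal{A}_i)+\sup_i P(\mathcal{B}_i)=o(T^{-\delta})$ for every $\delta>0$ by Assumptions \ref{asm:basic}(i)--(j), and this is uniform in $N$. Markov's inequality then gives $\sup_{\boldsymbol{\theta}\in\mathcal{N}_\eta}N^{-1}\sum_i 1\{\widehat{k}_i^{(K^0)}(\boldsymbol{\theta})\neq k_i^0\}=o_p(T^{-\delta})$.

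The step I expect to require the most care is the uniformity: the final bound must hold simultaneously over the uncountable set $\mathcal{N}_\eta$ and over all $i\in[N]$ while $N,T\to\infty$. This dictates the order of the argument --- first extract a \emph{deterministic} inequality in which all dependence on $\boldsymbol{\theta}$ and $k$ is absorbed into $\eta$ and the two pathwise statistics, and only then invoke the tail bounds, which Assumptions \ref{asm:basic}(i)--(j) supply uniformly in $i$. The cancellation of the $\varepsilon_{it}^2$ terms together with the group-separation lower bound $c/2$ is what makes the reduction to those two statistics possible, and fixing $c_0$ before $\eta$ before $\epsilon_0$ is the ordering the argument needs; the remaining algebra is routine.
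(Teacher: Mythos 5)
Your proposal is correct and follows essentially the same route as the paper's proof: both reduce the misclassification event, uniformly over $\boldsymbol{\theta}\in\mathcal{N}_\eta$ and over the finitely many candidate groups, to a $\boldsymbol{\theta}$-free event controlled by the group-separation condition, the eigenvalue bound in Assumption \ref{asm:basic}(f), and the two tail statistics of Assumptions \ref{asm:basic}(i)--(j), and then conclude via a uniform-in-$i$ bound and Markov's inequality. The only cosmetic difference is that you absorb the leading noise term $(\theta_{k_i^0}^0-\theta_k)'T^{-1}\sum_t x_{it}\varepsilon_{it}$ into the same tail event $\mathcal{B}_i$, whereas the paper keeps it on the left of its $\widetilde{Z}_{ik}$ inequality before invoking Assumption \ref{asm:basic}(j); the two are equivalent.
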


\noindent\begin{proof}
    Noting that for all $k\in[K^0]$,
    \begin{align}
        1\left\{\widehat{k}_i^{(K^0)}(\boldsymbol{\theta})=k\right\}\leq 1\left\{\sum_{t=1}^T(y_{it}-x_{it}'\theta_k)^2 \leq \sum_{t=1}^T(y_{it}-x_{it}'\theta_{k_i^0})^2\right\},
    \end{align}
    we have
    \begin{align}
        \frac{1}{N}\sum_{i=1}^N1\left\{\widehat{k}_i^{(K^0)}(\boldsymbol{\theta})\neq k_i^0\right\} 
        &= \sum_{k=1}^{K^0}\frac{1}{N}\sum_{i=1}^N1\left\{k_i^0\neq k\right\}1\left\{\widehat{k}_i^{(K^0)}(\boldsymbol{\theta})=k\right\} \\
        &\leq\sum_{k=1}^{K^0}\frac{1}{N}\sum_{i=1}^N1\left\{k_i^0\neq k\right\}1\left\{\sum_{t=1}^T(y_{it}-x_{it}'\theta_k)^2 \leq \sum_{t=1}^T(y_{it}-x_{it}'\theta_{k_i^0})^2\right\} \\
        &=\sum_{k=1}^{K^0}\frac{1}{N}\sum_{i=1}^NZ_{ik}(\boldsymbol{\theta}),
        \label{bound:average_misclassification}
    \end{align}
    where
    \begin{align}
        Z_{ik}(\boldsymbol{\theta})
        &\coloneqq 1\{k_i^0\neq k\}1\left\{\sum_{t=1}^T(y_{it}-x_{it}'\theta_k)^2 \leq \sum_{t=1}^T(y_{it}-x_{it}'\theta_{k_i^0})^2\right\} \\
        &=1\left\{k_i^0\neq k\right\}1\left\{\sum_{t=1}^T\left(\theta_{k_i^0}-\theta_k\right)'\left\{x_{it}x_{it}'\left(2\theta_{k_i^0}^0-\theta_k-\theta_{k_i^0}\right)+2x_{it}\varepsilon_{it}\right\}\leq 0\right\} \\
        &\leq \max_{\widetilde{k}\neq k}1\left\{\sum_{t=1}^T(\theta_{\widetilde{k}}-\theta_k)'x_{it}\left\{x_{it}'\left(\theta_{\widetilde{k}}^0- \frac{\theta_k+\theta_{\widetilde{k}}}{2}\right)+\varepsilon_{it}\right\}\leq 0\right\}.
        \label{bound:z_ik}
    \end{align}

    Define $A_T\coloneqq \sum_{t=1}^T(\theta_{\widetilde{k}}-\theta_k)'x_{it}\{x_{it}'(\theta_{\widetilde{k}}^0- (\theta_k+\theta_{\widetilde{k}})/2+\varepsilon_{it}\}$. We decompose $A_T$ as follows:
    \begin{align}
        A_T
        &=\left(\theta_{\widetilde{k}}-\theta_{\widetilde{k}}^0\right)'\sum_{t=1}^Tx_{it}x_{it}'\left(\theta_{\widetilde{k}}^0-\frac{\theta_k+\theta_{\widetilde{k}}}{2}\right) +\left(\theta_{\widetilde{k}}^0-\theta_k^0\right)'\sum_{t=1}^Tx_{it}x_{it}'\left(\theta_{\widetilde{k}}^0-\frac{\theta_k+\theta_{\widetilde{k}}}{2}\right) \\
        &\quad -\left(\theta_k-\theta_k^0\right)'\sum_{t=1}^Tx_{it}x_{it}'\left(\theta_{\widetilde{k}}^0-\frac{\theta_k+\theta_{\widetilde{k}}}{2}\right) - \left(\theta_{\widetilde{k}}^0-\theta_k^0\right)'\sum_{t=1}^Tx_{it}x_{it}'\left(\theta_{\widetilde{k}}^0-\frac{\theta_k^0+\theta_{\widetilde{k}}^0}{2}\right) \\
        &\quad +\left(\theta_{\widetilde{k}}^0-\theta_k^0\right)'\sum_{t=1}^Tx_{it}x_{it}'\left(\theta_{\widetilde{k}}^0-\frac{\theta_k^0+\theta_{\widetilde{k}}^0}{2}\right) + \left(\theta_{\widetilde{k}}-\theta_k\right)'\sum_{t=1}^Tx_{it}\varepsilon_{it} \\
        &=\left(\theta_{\widetilde{k}}-\theta_{\widetilde{k}}^0\right)'\sum_{t=1}^Tx_{it}x_{it}'\theta_{\widetilde{k}}^0 - \frac{1}{2}\left(\theta_{\widetilde{k}}-\theta_{\widetilde{k}}^0\right)'\sum_{t=1}^Tx_{it}x_{it}'\left(\theta_k+\theta_{\widetilde{k}}\right) \\
        &\quad-\frac{1}{2}\left(\theta_{\widetilde{k}}^0-\theta_k^0\right)'\sum_{t=1}^Tx_{it}x_{it}'\left(\theta_k-\theta_k^0\right) - \frac{1}{2}\left(\theta_{\widetilde{k}}^0-\theta_k^0\right)'\sum_{t=1}^Tx_{it}x_{it}'\left(\theta_{\widetilde{k}}-\theta_{\widetilde{k}}^0\right) \\
        &\quad-\left(\theta_k-\theta_k^0\right)'\sum_{t=1}^Tx_{it}x_{it}'\theta_{\widetilde{k}}^0 + \frac{1}{2}\left(\theta_k-\theta_k^0\right)'\sum_{t=1}^Tx_{it}x_{it}'\left(\theta_k+\theta_{\widetilde{k}}\right) \\
        &\quad+\left(\theta_{\widetilde{k}}^0-\theta_k^0\right)'\sum_{t=1}^Tx_{it}x_{it}'\left(\theta_{\widetilde{k}}^0 - \frac{\theta_k^0+\theta_{\widetilde{k}}^0}{2}\right) + \left(\theta_{\widetilde{k}}-\theta_{\widetilde{k}}^0\right)'\sum_{t=1}^Tx_{it}\varepsilon_{it} \\
        &\quad+\left(\theta_{\widetilde{k}}^0 - \theta_k^0\right)'\sum_{t=1}^Tx_{it}\varepsilon_{it} - \left(\theta_k-\theta_k^0\right)'\sum_{t=1}^Tx_{it}\varepsilon_{it} \\
        &=A_{T,1} + A_{T,2},
    \end{align}
    where
    \begin{align}
        A_{T,1} &\coloneqq \left(\theta_{\widetilde{k}}-\theta_{\widetilde{k}}^0\right)'\sum_{t=1}^Tx_{it}x_{it}'\left(\theta_{\widetilde{k}}^0 - \frac{\theta_{\widetilde{k}} + \theta_{\widetilde{k}}^0}{2}\right) - \left(\theta_k-\theta_k^0\right)\sum_{t=1}^Tx_{it}x_{it}'\left(\theta_{\widetilde{k}}^0-\frac{\theta_k+\theta_k^0}{2}\right) \\
        &\quad +\left(\theta_{\widetilde{k}}-\theta_{\widetilde{k}}^0\right)'\sum_{t=1}^Tx_{it}\varepsilon_{it} - \left(\theta_k-\theta_k^0\right)'\sum_{t=1}^Tx_{it}\varepsilon_{it},
        \intertext{and}
        A_{T,2} &\coloneqq \left(\theta_{\widetilde{k}}^0-\theta_k^0\right)'\sum_{t=1}^Tx_{it}\left\{x_{it}'\left(\theta_{\widetilde{k}}^0 - \frac{\theta_k^0+\theta_{\widetilde{k}}^0}{2}\right)+\varepsilon_{it}\right\}.
    \end{align}
    Substituting this into \eqref{bound:z_ik} and using the Cauchy-Schwarz inequality yield
    \begin{align}
        Z_{ik}(\boldsymbol{\theta})
        &\leq \max_{\widetilde{k}\neq k}1\{A_T\leq 0\} \\
        &\leq \max_{\widetilde{k}\neq k}1\{A_{T,2} \leq |A_{T,1}|\} \\
        &\leq \max_{\widetilde{k}\neq k}1\left\{\left(\theta_{\widetilde{k}}^0-\theta_k^0\right)'\sum_{t=1}^Tx_{it}\left\{x_{it}'\left(\theta_{\widetilde{k}}^0 - \frac{\theta_k^0+\theta_{\widetilde{k}}^0}{2}\right)+\varepsilon_{it}\right\} \right. \\
        &\left. \hspace{3cm} \leq TC\sqrt{\eta}\left(T^{-1}\sum_{t=1}^T\|x_{it}\|^2 + \left\|T^{-1}\sum_{t=1}^Tx_{it}\varepsilon_{it}\right\|\right)\right\}
    \end{align}
    for $\boldsymbol{\theta}\in\mathcal{N}_{\eta}$ and some constant $C>0$ independent of $\eta$ and $T$.

    Noting that the most right hand side of the above inequality does not depend on $\boldsymbol{\theta}$, we deduce $\sup_{\boldsymbol{\theta}\in\mathcal{N}_{\eta}}Z_{ik}(\boldsymbol{\theta})\leq\widetilde{Z}_{ik}$, where
    \begin{align}
        \widetilde{Z}_{ik} \coloneqq \max_{\widetilde{k}\neq k}1\Biggl\{\left(\theta_{\widetilde{k}}^0-\theta_k^0\right)'\sum_{t=1}^Tx_{it}\varepsilon_{it} &\leq 
        - \frac{1}{2}\left(\theta_{\widetilde{k}}^0-\theta_k^0\right)'\sum_{t=1}^Tx_{it}x_{it}'\left(\theta_{\widetilde{k}}^0 - \theta_k^0\right)  \\
        &+ TC\sqrt{\eta}\left(T^{-1}\sum_{t=1}^T\|x_{it}\|^2 + \left\|T^{-1}\sum_{t=1}^Tx_{it}\varepsilon_{it}\right\|\right)\Biggr\}. \\
        \label{def:z_ik_tilde}
    \end{align}
    It follows from \eqref{bound:average_misclassification} and \eqref{def:z_ik_tilde} that
    \begin{align}
        \sup_{\boldsymbol{\theta}\in\mathcal{N}_{\eta}}\frac{1}{N}\sum_{i=1}^N1\left\{\widehat{k}_i^{(K^0)}(\boldsymbol{\theta})\neq k_i^0\right\} \leq \sum_{k=1}^{K^0}\frac{1}{N}\sum_{i=1}^N\widetilde{Z}_{ik}.
        \label{bound:uniform_average_misclass}
    \end{align}

    By Assumptions \ref{asm:basic}(f) and (h), we have for all $k\in[K^0]$,
    \begin{align}
        P\left(\widetilde{Z}_{ik}=1\right) 
        &\leq \sum_{k\neq \widetilde{k}}P\Biggl(\left(\theta_{\widetilde{k}}^0 - \theta_k^0\right)'\sum_{t=1}^Tx_{it}\varepsilon_{it} \leq - \frac{1}{2}\left(\theta_{\widetilde{k}}^0-\theta_k^0\right)'\sum_{t=1}^Tx_{it}x_{it}'\left(\theta_{\widetilde{k}}^0 - \theta_k^0\right) \\
        &\hspace{5.5cm} + TC\sqrt{\eta}\left(T^{-1}\sum_{t=1}^T\|x_{it}\|^2 + \left\|T^{-1}\sum_{t=1}^Tx_{it}\varepsilon_{it}\right\|\right)\Biggr) \\
        &\leq\sum_{k\neq\widetilde{k}}\Biggl[P\Biggl(\left(\theta_{\widetilde{k}}^0 - \theta_k^0\right)'\sum_{t=1}^Tx_{it}\varepsilon_{it} \leq - \frac{Tc^3}{4} + 2TC'\sqrt{\eta}\Biggr) \\
        &\hspace{2.5cm} + P\left(T^{-1}\sum_{t=1}^T\|x_{it}\|^2 \geq M\right) + P\left(\left\|T^{-1}\sum_{t=1}^Tx_{it}\varepsilon_{it}\right\| \geq M\right)\Biggr],
    \end{align}
    where $C'=MC$. Taking $\eta$ such that $0<\eta\leq (c^3/16C')^2$, we get
    \begin{align}
        P\Biggl(\left(\theta_{\widetilde{k}}^0 - \theta_k^0\right)'\sum_{t=1}^Tx_{it}\varepsilon_{it} \leq - \frac{Tc^3}{4} + 2TC'\sqrt{\eta}\Biggr) 
        &\leq P\left(\left(\theta_{\widetilde{k}}^0 - \theta_k^0\right)'T^{-1}\sum_{t=1}^Tx_{it}\varepsilon_{it} \leq - \frac{c^3}{8}\right) \\
        &\leq P\left(\left\|T^{-1}\sum_{t=1}^Tx_{it}\varepsilon_{it}\right\| \geq  \frac{c^3}{16\sup_{\theta\in\Theta}\|\theta\|}\right) \\
        &= o\left(T^{-\delta}\right)
    \end{align}
    for all $\delta>0$ uniformly in $i$ given Assumption \ref{asm:basic}(j). With this and Assumptions \ref{asm:basic}(i) and (j), we arrive at
    \begin{align}
        P\left(\widetilde{Z}_{ik}=1\right)\leq K^0\left(K^0-1\right)\times o\left(T^{-\delta}\right) = o\left(T^{-\delta}\right) \label{prob:tilde_z_ik}
    \end{align}
    for all $\delta>0$ uniformly in $i$. Using \eqref{bound:uniform_average_misclass}, \eqref{prob:tilde_z_ik}, and the Markov inequality, we have for any $\epsilon>0$ and any $\delta>0$,
    \begin{align}
        P\left(\sup_{\boldsymbol{\theta}\in\mathcal{N}_{\eta}}\frac{1}{N}\sum_{i=1}^N1\left\{\widehat{k}_i^{(K^0)}(\boldsymbol{\theta})\neq k_i^0\right\} > \epsilon T^{-\delta}\right)  &\leq P\left(\sum_{k=1}^{K^0}\frac{1}{N}\sum_{i=1}^N\widetilde{Z}_{ik} > \epsilon T^{-\delta}\right) \\
        &\leq \frac{N^{-1}\sum_{k=1}^{K^0}\sum_{i=1}^NE\left[\widetilde{Z}_{ik}\right]}{\epsilon T^{-\delta}} \\
        &= o(1)
    \end{align}
    This ends the proof of Lemma \ref{lem:prob_missclassify}.
\end{proof}

For $k\in[K^0]$, let $\widetilde{\theta}_k^{(K^0)}\coloneqq (\sum_{i\in G_k^0}\sum_{t=1}^Tx_{it}x_{it}')^{-1}\sum_{i\in G_k^0}\sum_{t=1}^Tx_{it}y_{it}$ denote the oracle estimator of $\theta_k^0$ calculated using cross-sections belonging to $G_k^0$.

\begin{prop}
    Consider model \eqref{model:without_gfe}. Suppose that Assumptions \ref{asm:group_size} and \ref{asm:basic} hold. Then as $N,T\to\infty$, we have, for all $\delta>0$,
    
    \noindent (i) $\widehat{\theta}_k^{(K^0)} = \widetilde{\theta}_k^{(K^0)} + o_p\left(N^{(1-\alpha_k)/2}T^{-\delta}\right)$ for all $k\in[K^0]$,
    
    \noindent (ii) $P\left(\bigcup_{i=1}^N\left\{\widehat{k}_i^{(K^0)}\neq k_i^0\right\}\right)=o(1)+o\left(NT^{-\delta}\right)$.
    \label{prop:asym_equivalence}
\end{prop}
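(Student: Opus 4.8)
The plan is to derive both parts from Lemmas \ref{lem:theta_consistency} and \ref{lem:prob_missclassify}, after observing that the estimated labels are pinned down by the assignment step. Since $(\widehat{\boldsymbol{\theta}}^{(K^0)},\widehat{\boldsymbol{\gamma}}_N^{(K^0)})$ jointly minimize \eqref{def:LS_estimator}, for each $i$ the label $\widehat{k}_i^{(K^0)}$ is a minimizer over $k\in[K^0]$ of $\sum_{t=1}^T(y_{it}-x_{it}'\widehat{\theta}_k^{(K^0)})^2$, so $\widehat{k}_i^{(K^0)}=\widehat{k}_i^{(K^0)}(\widehat{\boldsymbol{\theta}}^{(K^0)})$ in the notation of Lemma \ref{lem:prob_missclassify}. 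By Lemma \ref{lem:theta_consistency} (with the relabeling fixed there), $\|\widehat{\theta}_k^{(K^0)}-\theta_k^0\|\stackrel{p}{\to}0$, so for any $\eta>0$ the event $E_\eta\coloneqq\{\widehat{\boldsymbol{\theta}}^{(K^0)}\in\mathcal{N}_\eta\}$ has probability $1-o(1)$; I fix $\eta$ small enough that the estimates inside the proof of Lemma \ref{lem:prob_missclassify} apply.

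For part (ii): the proof of Lemma \ref{lem:prob_missclassify} supplies, for each $i$ and $\boldsymbol{\theta}\in\mathcal{N}_\eta$, the pointwise bound $1\{\widehat{k}_i^{(K^0)}(\boldsymbol{\theta})\neq k_i^0\}\le\sum_{k=1}^{K^0}Z_{ik}(\boldsymbol{\theta})\le\sum_{k=1}^{K^0}\widetilde{Z}_{ik}$, where $\widetilde{Z}_{ik}$ does not depend on $\boldsymbol{\theta}$ and satisfies $P(\widetilde{Z}_{ik}=1)=o(T^{-\delta})$ uniformly in $i$ for every $\delta>0$. Evaluating at $\boldsymbol{\theta}=\widehat{\boldsymbol{\theta}}^{(K^0)}$ on $E_\eta$ gives $\bigl(\bigcup_{i=1}^N\{\widehat{k}_i^{(K^0)}\neq k_i^0\}\bigr)\cap E_\eta\subseteq\bigcup_{i=1}^N\bigcup_{k=1}^{K^0}\{\widetilde{Z}_{ik}=1\}$, whence
\[
P\Bigl(\bigcup_{i=1}^N\{\widehat{k}_i^{(K^0)}\neq k_i^0\}\Bigr)\le P(E_\eta^c)+\sum_{i=1}^N\sum_{k=1}^{K^0}P(\widetilde{Z}_{ik}=1)=o(1)+NK^0\cdot o(T^{-\delta})=o(1)+o(NT^{-\delta}),
\]
which is (ii).

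For part (i): write $\widehat{G}_k\coloneqq\{i:\widehat{k}_i^{(K^0)}=k\}$, $\widehat{A}_k\coloneqq\sum_{i\in\widehat{G}_k}\sum_{t=1}^Tx_{it}x_{it}'$ and $A_k^0\coloneqq\sum_{i\in G_k^0}\sum_{t=1}^Tx_{it}x_{it}'$. Substituting $y_{it}=x_{it}'\theta_{k_i^0}^0+\varepsilon_{it}$ into the normal equations defining $\widehat{\theta}_k^{(K^0)}$ and $\widetilde{\theta}_k^{(K^0)}$ and subtracting yields
\[
\widehat{\theta}_k^{(K^0)}-\widetilde{\theta}_k^{(K^0)}=\widehat{A}_k^{-1}\Bigl(\sum_{i\in\widehat{G}_k\setminus G_k^0}-\sum_{i\in G_k^0\setminus\widehat{G}_k}\Bigr)\sum_{t=1}^Tx_{it}\bigl(y_{it}-x_{it}'\widetilde{\theta}_k^{(K^0)}\bigr),
\]
so the difference is driven entirely by the misclassified set $\widehat{G}_k\triangle G_k^0$. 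I bound $|\widehat{G}_k\triangle G_k^0|$ two ways: (a) Lemma \ref{lem:prob_missclassify} at $\widehat{\boldsymbol{\theta}}^{(K^0)}\in\mathcal{N}_\eta$ gives $|\widehat{G}_k\triangle G_k^0|=o_p(NT^{-\delta})$ for every $\delta>0$; (b) the Hausdorff-consistency rate from the proof of Lemma \ref{lem:theta_consistency}, combined with the rank conditions \ref{asm:basic}(e)--(f) and the separation \ref{asm:basic}(h), forces $|\widehat{G}_k\triangle G_k^0|=o_p(N_k^0)$, since a small group cannot absorb a non-negligible fraction of outside units (nor lose a non-negligible fraction of its own) without pulling the OLS fit on $\widehat{G}_k$ away from $\theta_k^0$. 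In particular $\widehat{A}_k$ has minimum eigenvalue of order $N_k^0T$ on $E_\eta$, so $\|\widehat{A}_k^{-1}\|=O_p((N_k^0T)^{-1})$ by Assumption \ref{asm:basic}(f). Next, Assumptions \ref{asm:basic}(b), (i), (j) bound $\|\sum_t x_{it}x_{it}'\|$ and $\|\sum_t x_{it}\varepsilon_{it}\|$ uniformly over the $N$ cross-sections by $O_p(T)$ on an event of probability $1-o(NT^{-\delta})$, and $\theta_{k_i^0}^0$, $\widetilde{\theta}_k^{(K^0)}$ are bounded, so each summand in the display is $O_p(T)$. Combining, $\|\widehat{\theta}_k^{(K^0)}-\widetilde{\theta}_k^{(K^0)}\|=O_p((N_k^0T)^{-1})\cdot|\widehat{G}_k\triangle G_k^0|\cdot O_p(T)=O_p((N_k^0)^{-1})\cdot o_p\bigl(\sqrt{N_k^0\cdot NT^{-\delta}}\bigr)=o_p(N^{(1-\alpha_k)/2}T^{-\delta})$ (using $N_k^0\asymp N^{\alpha_k}$ and relabeling $\delta$), which is (i).

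The step I expect to be the main obstacle is bound (b) in part (i): converting the \emph{parameter} consistency of Lemma \ref{lem:theta_consistency} into the \emph{classification} bound $|\widehat{G}_k\triangle G_k^0|=o_p(N_k^0)$. This is delicate because the outside units attached to $\widehat{G}_k$ may a priori come from several source groups whose centroids could partially cancel in the OLS fit, so the argument has to lean jointly on the rank and separation conditions and on the fact that $\widehat{\theta}_k^{(K^0)}$ is the global K-means minimizer (via the preliminary classification bound of Lemma \ref{lem:prob_missclassify}); this is also where the $\sqrt{\cdot}$-improvement over the naive rate $o_p(N^{1-\alpha_k}T^{-\delta})$ comes from. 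The remaining uniform-over-$i$ control of $\widehat{A}_k^{-1}$ and of the per-unit moment terms is routine given the thin-tail Assumptions \ref{asm:basic}(i)--(j), and part (ii) is a direct union-bound consequence of the estimates already assembled for Lemma \ref{lem:prob_missclassify}.
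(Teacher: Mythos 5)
Part (ii) of your proposal is correct and is essentially the paper's own argument: the union bound $P(\bigcup_i\{\widehat{k}_i^{(K^0)}\neq k_i^0\})\le P(\widehat{\boldsymbol{\theta}}^{(K^0)}\notin\mathcal{N}_\eta)+N\sup_iP(\widehat{\boldsymbol{\theta}}^{(K^0)}\in\mathcal{N}_\eta,\ \widehat{k}_i^{(K^0)}\neq k_i^0)$ combined with $\sup_{\boldsymbol{\theta}\in\mathcal{N}_\eta}1\{\widehat{k}_i^{(K^0)}(\boldsymbol{\theta})\neq k_i^0\}\le\sum_k\widetilde{Z}_{ik}$ and $P(\widetilde{Z}_{ik}=1)=o(T^{-\delta})$ uniformly in $i$. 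For part (i), however, you take a genuinely different route from the paper. The paper never touches the normal equations or the misclassified sets directly: it defines the profile objective $\widehat{Q}(\boldsymbol{\theta})$ and the infeasible objective $\widetilde{Q}(\boldsymbol{\theta})$ (with true memberships), shows $\sup_{\boldsymbol{\theta}\in\mathcal{N}_\eta}|\widehat{Q}(\boldsymbol{\theta})-\widetilde{Q}(\boldsymbol{\theta})|=o_p(T^{-\delta})$ via Lemma \ref{lem:prob_missclassify} and Cauchy--Schwarz, deduces $\widetilde{Q}(\widehat{\boldsymbol{\theta}}^{(K^0)})-\widetilde{Q}(\widetilde{\boldsymbol{\theta}}^{(K^0)})=o_p(T^{-\delta})$ from the two minimizing properties, and then reads off the rate from the quadratic lower bound $\widehat{\rho}_{NT}\sum_k(N_k^0/N)\|\widetilde{\theta}_k^{(K^0)}-\widehat{\theta}_k^{(K^0)}\|^2$; the factor $N^{(1-\alpha_k)/2}$ falls out of the weight $N_k^0/N$ with no need to control $\widehat{A}_k^{-1}$ or count swapped units group by group.

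The genuine gap in your part (i) is bound (b), $|\widehat{G}_k\triangle G_k^0|=o_p(N_k^0)$, which you flag as the main obstacle, describe as requiring a delicate argument leaning on the separation and rank conditions, and then do not prove. Your argument needs it twice: for the interpolation $|\widehat{G}_k\triangle G_k^0|=o_p(\sqrt{N_k^0\,NT^{-\delta}})$ that produces the claimed rate, and for $\lambda_{\min}(\widehat{A}_k)\gtrsim N_k^0T$ (which requires $|G_k^0\setminus\widehat{G}_k|$ to be a vanishing fraction of $N_k^0$). The good news is that the gap is easy to close, and not by the route you sketch: your own bound (a) gives $|\widehat{G}_k\triangle G_k^0|=o_p(NT^{-\delta})$ for \emph{every} $\delta>0$, and Assumption \ref{asm:basic}(g) forces $N\le T^{C}$ eventually for some finite $C$ (in both the $\alpha_{K^0}\ge1/2$ and $\alpha_{K^0}<1/2$ branches), so taking $\delta>C(1-\alpha_k)$ yields $NT^{-\delta}=o(N^{\alpha_k})=o(N_k^0)$. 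In other words, (b) is a corollary of (a) plus the $N$-vs-$T$ restriction, not a separate classification fact needing the separation condition. For the same reason you could dispense with the interpolation altogether: the ``naive'' rate $o_p(N^{1-\alpha_k}T^{-\delta})$ for all $\delta$ already implies $o_p(N^{(1-\alpha_k)/2}T^{-\delta})$ for all $\delta$, since the surplus factor $N^{(1-\alpha_k)/2}$ is bounded by a fixed power of $T$ and can be absorbed by enlarging $\delta$. With that repair your decomposition goes through (the identity from the normal equations is correct, and the uniform $O_p(T)$ control of the per-unit terms via Assumptions \ref{asm:basic}(i)--(j) is fine), but as written the key quantitative step rests on an unproven assertion whose proposed justification points in the wrong direction.
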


\noindent\begin{proof}
    (i) Define 
    \begin{align}
        \widehat{Q}(\boldsymbol{\theta})&\coloneqq (NT)^{-1}\sum_{i=1}^N\sum_{t=1}^T\left(y_{it}-x_{it}'\theta_{\widehat{k}_i(\boldsymbol{\theta})}\right)^2
        \intertext{and}
        \widetilde{Q}(\boldsymbol{\theta})&\coloneqq (NT)^{-1}\sum_{i=1}^N\sum_{t=1}^T\left(y_{it}-x_{it}'\theta_{k_i^0}\right)^2.
    \end{align}
    Note that $\widehat{Q}(\boldsymbol{\theta})$ is minimized at $\boldsymbol{\theta}=\widehat{\boldsymbol{\theta}}^{(K^0)}$, and $\widetilde{Q}(\boldsymbol{\theta})$ is minimized at $\boldsymbol{\theta}=\widetilde{\boldsymbol{\theta}}^{(K^0)}$. Let $\eta>0$ be sufficiently small that Lemma \ref{lem:prob_missclassify} holds. Because $\{\widehat{Q}(\boldsymbol{\theta})-\widetilde{Q}(\boldsymbol{\theta})\}1\{\widehat{k}_i^{(K^0)}(\boldsymbol{\theta})=k_i^0\}=0$, $E[\varepsilon_{it}^4]\leq M$, $E[\|x_{it}\|^4]\leq M$, and $\sup_{\boldsymbol{\theta}\in\Theta^{K^0}}\|\boldsymbol{\theta}\|\leq M$ under Assumption \ref{asm:basic}, applying the Cauchy-Schwarz inequality and Lemma \ref{lem:prob_missclassify} gives
    \begin{align}
        \sup_{\boldsymbol{\theta}\in\mathcal{N}_{\eta}}\left|\widehat{Q}(\boldsymbol{\theta})-\widetilde{Q}(\boldsymbol{\theta})\right| 
        &\leq \sup_{\boldsymbol{\theta}\in\mathcal{N}_{\eta}}\frac{1}{NT}\sum_{i=1}^N\sum_{t=1}^T\left|\left(y_{it}-x_{it}'\theta_{\widehat{k}_i^{(K^0)}(\boldsymbol{\theta})}\right)^2 - \left(y_{it}-x_{it}'\theta_{k_i^0}\right)^2\right|1\left\{\widehat{k}_i^{(K^0)}(\boldsymbol{\theta})\neq k_i^0\right\} \\
        &\leq O_p(1)\times \sup_{\boldsymbol{\theta}\in\mathcal{N}_{\eta}}\left(\frac{1}{N}\sum_{i=1}^N 1\left\{\widehat{k}_i^{(K^0)}(\boldsymbol{\theta})\neq k_i^0\right\}\right)^{1/2} \\
        &=o_p\left(T^{-\delta}\right)
        \label{bound:qhat_qtilde_diff}
    \end{align}
    for all $\delta>0$.

    Because $\|\widehat{\theta}_k^{(K^0)} -\theta_k^0\|\stackrel{p}{\to}0$ for all $k\in[K^0]$, we have
    \begin{align}
        P\left(\widehat{\boldsymbol{\theta}}^{(K^0)}\notin\mathcal{N}_{\eta}\right)\to0 \ \mathrm{as} \ N,T\to\infty.
        \label{prob:theta_hat_consistency}
    \end{align}
    Furthermore, it is straightforward to prove
    \begin{align}
        P\left(\widetilde{\boldsymbol{\theta}}^{(K^0)}\notin\mathcal{N}_{\eta}\right)\to0 \ \mathrm{as} \ N,T\to\infty.
        \label{prob:theta_tilde_consistency}
    \end{align}
    \eqref{bound:qhat_qtilde_diff} and \eqref{prob:theta_hat_consistency} imply that for every $\epsilon>0$,
    \begin{align}
        &P\left(\left|\widehat{Q}\left(\widehat{\boldsymbol{\theta}}^{(K^0)}\right)-\widetilde{Q}\left(\widehat{\boldsymbol{\theta}}^{(K^0)}\right)\right| > \epsilon T^{-\delta}\right) \\
        &\leq P\left(\sup_{\boldsymbol{\theta}\in \mathcal{N}_{\eta}}\left|\widehat{Q}(\boldsymbol{\theta})-\widetilde{Q}(\boldsymbol{\theta})\right| + \left|\widehat{Q}\left(\widehat{\boldsymbol{\theta}}^{(K^0)}\right)-\widetilde{Q}\left(\widehat{\boldsymbol{\theta}}^{(K^0)}\right)\right|1\left\{\widehat{\boldsymbol{\theta}}^{(K^0)}\notin\mathcal{N}_{\eta}\right\}>\epsilon T^{-\delta}\right) \\
        &\leq P\left(\sup_{\boldsymbol{\theta}\in \mathcal{N}_{\eta}}\left|\widehat{Q}(\boldsymbol{\theta})-\widetilde{Q}(\boldsymbol{\theta})\right| > \frac{\epsilon}{2}T^{-\delta}\right) + P\left(\widehat{\boldsymbol{\theta}}^{(K^0)}\notin\mathcal{N}_{\eta}\right) \to 0,
    \end{align}
    and consequently
    \begin{align}
        \widehat{Q}\left(\widehat{\boldsymbol{\theta}}^{(K^0)}\right)-\widetilde{Q}\left(\widehat{\boldsymbol{\theta}}^{(K^0)}\right) = o_p\left(T^{-\delta}\right).
        \label{order:qhat_qtilde_diff_theta_hat}
    \end{align}
    Similarly, combining \eqref{bound:qhat_qtilde_diff} and \eqref{prob:theta_tilde_consistency} yields
    \begin{align}
        \widehat{Q}\left(\widetilde{\boldsymbol{\theta}}^{(K^0)}\right)-\widetilde{Q}\left(\widetilde{\boldsymbol{\theta}}^{(K^0)}\right) = o_p\left(T^{-\delta}\right).
        \label{order:qhat_qtilde_diff_theta_tilde}
    \end{align}

    Using \eqref{order:qhat_qtilde_diff_theta_hat}, \eqref{order:qhat_qtilde_diff_theta_tilde}, and the fact that $\widetilde{\boldsymbol{\theta}}^{(K^0)}$ and $\widehat{\boldsymbol{\theta}}^{(K^0)}$ minimize $\widetilde{Q}(\boldsymbol{\theta})$ and $\widehat{Q}(\boldsymbol{\theta})$, respectively, we deduce
    \begin{align}
        0\leq \widetilde{Q}\left(\widehat{\boldsymbol{\theta}}^{(K^0)}\right) - \widetilde{Q}\left(\widetilde{\boldsymbol{\theta}}^{(K^0)}\right) 
        = \widehat{Q}\left(\widehat{\boldsymbol{\theta}}^{(K^0)}\right) - \widehat{Q}\left(\widetilde{\boldsymbol{\theta}}^{(K^0)}\right) + o_p\left(T^{-\delta}\right)\leq o_p\left(T^{-\delta}\right).
    \end{align}
    Hence $\widetilde{Q}\left(\widehat{\boldsymbol{\theta}}^{(K^0)}\right) - \widetilde{Q}\left(\widetilde{\boldsymbol{\theta}}^{(K^0)}\right)=o_p(T^{-\delta})$.

    Writing $\widetilde{\varepsilon}_{it}\coloneqq y_{it} - x_{it}'\widetilde{\theta}_{k_i^0}^{(K^0)}$, and noting that $\widetilde{\boldsymbol{\theta}}^{(K^0)}$ is a least-squares estimator, we obtain
    \begin{align}
        \widetilde{Q}\left(\widehat{\boldsymbol{\theta}}^{(K^0)}\right) - \widetilde{Q}\left(\widetilde{\boldsymbol{\theta}}^{(K^0)}\right) 
        &= \frac{1}{NT}\sum_{i=1}^N\sum_{t=1}^T\left[\left\{x_{it}'\left(\widetilde{\theta}_{k_i^0}^{(K^0)} - \widehat{\theta}_{k_i^0}^{(K^0)}\right) + \widetilde{\varepsilon}_{it}\right\}^2 - \widetilde{\varepsilon}_{it}^2\right] \\
        &=\sum_{k=1}^{K^0}\frac{N_k^0}{N}\left(\widetilde{\theta}_k^{(K^0)} - \widehat{\theta}_k^{(K^0)}\right)'\frac{1}{N_k^0}\sum_{i=1}^N1\{k_i^0=k\}\frac{1}{T}\sum_{t=1}^Tx_{it}x_{it}'\left(\widetilde{\theta}_k^{(K^0)} - \widehat{\theta}_k^{(K^0)}\right) \\
        &\geq \widehat{\rho}_{NT}\sum_{k=1}^{K^0}\frac{N_k^0}{N}\left\|\widetilde{\theta}_k^{(K^0)} - \widehat{\theta}_k^{(K^0)}\right\|^2.
    \end{align}
    This implies that $\|\widetilde{\theta}_k^{(K^0)} - \widehat{\theta}_k^{(K^0)}\|^2=o_p(N^{1-\alpha_k}T^{-\delta})$ for all $\delta>0$ and $k\in[K^0]$.

    \noindent (ii) By the union bound, we obtain
    \begin{align}
        P\left(\bigcup_{i=1}^N\left\{\widehat{k}_i^{(K^0)}\left(\widehat{\boldsymbol{\theta}}^{(K^0)}\right)\neq k_i^0\right\}\right) \leq P\left(\widehat{\boldsymbol{\theta}}^{(K^0)}\notin\mathcal{N}_{\eta}\right) + N\sup_{i\in[N]}P\left(\widehat{\boldsymbol{\theta}}^{(K^0)}\in\mathcal{N}_{\eta}, \ \widehat{k}_i^{(K^0)}\left(\widehat{\boldsymbol{\theta}}^{(K^0)}\right)\neq k_i^0\right).
    \end{align}
    Letting $\eta>0$ be small enough, we have $P(\widehat{\boldsymbol{\theta}}^{(K^0)}\notin\mathcal{N}_\eta)=o(1)$. Moreover, by \eqref{bound:average_misclassification} and the fact that $\sup_{\boldsymbol{\theta}\in\mathcal{N}_\eta}Z_{ik}(\boldsymbol{\theta})\leq \widetilde{Z}_{ik}$, we have $\sup_{\boldsymbol{\theta}\in\mathcal{N}_\eta}1\{\widehat{k}_i^{(K^0)}(\boldsymbol{\theta})\neq k_i^0\}\leq\sum_{k=1}^{K^0}\widetilde{Z}_{ik}$. It follows from \eqref{prob:tilde_z_ik} that
    \begin{align}
    N\sup_{i\in[N]}P\left(\widehat{\boldsymbol{\theta}}^{(K^0)}\in\mathcal{N}_{\eta}, \ \widehat{k}_i^{(K^0)}\left(\widehat{\boldsymbol{\theta}}^{(K^0)}\right)\neq k_i^0\right)
    &=N\sup_{i\in[N]}E\left[1\left\{\widehat{\boldsymbol{\theta}}^{(K^0)}\in\mathcal{N}_{\eta}\right\}1\left\{\widehat{k}_i^{(K^0)}\left(\widehat{\boldsymbol{\theta}}^{(K^0)}\right)\neq k_i^0\right\}\right] \\
    &\leq N\sup_{i\in[N]}\sum_{k=1}^{K^0}E\left[\widetilde{Z}_{ik}\right] \\
    &=N\sup_{i\in[N]}\sum_{k=1}^{K^0}P\left(\widetilde{Z}_{ik}=1\right)=o\left(NT^{-\delta}\right)
    \end{align}
    for all $\delta>0$.
\end{proof}

\noindent\begin{proof}[Proof of Theorem \ref{thm:consistency_normality}]

\noindent(i) Part (i) immediately follows from Proposition \ref{prop:asym_equivalence}(ii) and Assumption \ref{asm:basic}(g).

\noindent(ii) From Proposition \ref{prop:asym_equivalence}(i), we have, for each $k\in[K^0]$, $\sqrt{N_{k}^0T}(\widehat{\theta}_k^{(K^0)} - \theta_k^0) = \sqrt{N_{k}^0T}(\widetilde{\theta}_k^{(K^0)} - \theta_k^0) + o_p(N^{1/2}T^{1/2-\delta})$ for any $\delta>0$. The $o_p(N^{1/2}T^{1/2-\delta})$ term is $o_p(1)$ by Assumption \ref{asm:basic}(g). Since we have
\begin{align}
    \sqrt{N_{k}^0T}\left(\widetilde{\theta}_k^{(K^0)} - \theta_k^0\right) = \left(\frac{1}{N_k^0T}\sum_{i\in G_k^0}\sum_{t=1}^Tx_{it}x_{it}'\right)^{-1}\frac{1}{\sqrt{N_k^0T}}\sum_{i\in G_k^0}\sum_{t=1}^Tx_{it}\varepsilon_{it},
\end{align}
the weak convergence result readily follows from Assumption \ref{asm:dist}.

\noindent(iii) For each $k\in[K^0]$, let $\widehat{G}_k^{(K^0)} \coloneqq \{i\in [N]:\widehat{k}_i^{(K^0)}=k\}$. Given the result in part (i), we may assume without loss of generality that $\widehat{G}_k^{(K^0)}$ is the estimate of the true $k$-th group, $G_k^0$. Noting that $1\{i\in\widehat{G}_k^{(K^0)}\} = 1\{i\in G_k^0\} + 1\{i\in \widehat{G}_k^{(K^0)} \setminus G_k^0\} - 1\{i\in G_k^0 \setminus \widehat{G}_k^{(K^0)}\}$, we get
\begin{align}
    \widehat{\sigma}^2\left(K^0, \widehat{\boldsymbol{\gamma}}_N^{(K^0)}\right)
    &=\frac{1}{NT}\sum_{k=1}^{K^0}\sum_{\quad i\in \widehat{G}_k^{(K^0)}}\sum_{t=1}^T \left(y_{it}-x_{it}'\widehat{\theta}_k^{(K^0)}\right)^2 \\
    &=\frac{1}{NT}\sum_{k=1}^{K^0}\sum_{i\in G_k^0}\sum_{t=1}^T \left(y_{it}-x_{it}'\widehat{\theta}_k^{(K^0)}\right)^2 + B_{NT,1} - B_{NT,2},
\end{align}
where $B_{NT,1} \coloneqq (NT)^{-1}\sum_{k=1}^{K^0}\sum_{i\in \widehat{G}_k^{(K^0)} \setminus G_k^0}\sum_{t=1}^T(y_{it}-x_{it}'\widehat{\theta}_k^{(K^0)})^2$, and $B_{NT,2} \coloneqq (NT)^{-1}\sum_{k=1}^{K^0}$

\noindent$\sum_{i\in G_k^0 \setminus \widehat{G}_k^{(K^0)}}\sum_{t=1}^T(y_{it}-x_{it}'\widehat{\theta}_k^{(K^0)})^2$. Applying part (i) yields, for any $\epsilon>0$,
\begin{align}
    &P\left(NT\times B_{NT,1}\geq\epsilon\right) \leq P\left(\bigcup_{k=1}^{K^0}\bigcup_{\quad i\in\widehat{G}_k^{(K^0)}}\left\{i\notin G_k^0\right\}\right) \to 0,
    \intertext{and}
    &P\left(NT\times B_{NT,2}\geq\epsilon\right) \leq P\left(\bigcup_{k=1}^{K^0}\bigcup_{i\in G_k^0}\left\{i\notin \widehat{G}_k^{(K^0)}\right\}\right) \to 0,
\end{align}
which implies
\begin{align}
    \widehat{\sigma}^2\left(K^0,\widehat{\boldsymbol{\gamma}}_N\right) = \frac{1}{NT}\sum_{k=1}^{K^0}\sum_{i\in G_k^0}\sum_{t=1}^T \left(y_{it}-x_{it}'\widehat{\theta}_k^{(K^0)}\right)^2 + o_p\left((NT)^{-1}\right),
\end{align}
Noting that $\widehat{\theta}_k^{(K^0)} = \widetilde{\theta}_k^{(K^0)} + a_{NT}$ with $a_{NT}=o_p((NT)^{-1/2})$ from Proposition \ref{prop:asym_equivalence},  and that $\widetilde{\theta}_k^{(K^0)} - \theta_k^0=O_p((N^{\alpha_k}T)^{-1/2})$ from the proof of part (ii), we observe
\begin{align}
    \widehat{\sigma}^2\left(K^0,\widehat{\boldsymbol{\gamma}}_N^{(K^0)}\right) 
    &= \frac{1}{NT}\sum_{k=1}^{K^0}\sum_{i\in G_k^0}\sum_{t=1}^T\varepsilon_{it}^2 - \frac{2}{NT}\sum_{k=1}^{K^0}\left(\widetilde{\theta}_k^{(K^0)} - \theta_k^0\right)'\sum_{i\in G_k^0}\sum_{t=1}^Tx_{it}\varepsilon_{it} - \frac{2a_{NT}'}{NT}\sum_{k=1}^{K^0}\sum_{i\in G_k^0}\sum_{t=1}^Tx_{it}\varepsilon_{it} \\
    &\quad + \frac{1}{NT}\sum_{k=1}^{K^0}\left(\widetilde{\theta}_k^{(K^0)}-\theta_k^0\right)'\sum_{i\in G_k^0}\sum_{t=1}^Tx_{it}x_{it}'\left(\widetilde{\theta}_k^{(K^0)} - \theta_k^0\right) \\
    &\quad + \frac{2}{NT}\sum_{k=1}^{K^0}\left(\widetilde{\theta}_k^{(K^0)} - \theta_k^0\right)'\sum_{i\in G_k^0}\sum_{t=1}^Tx_{it}x_{it}'a_{NT} + \frac{a_{NT}'}{NT}\sum_{k=1}^{K^0}\sum_{i\in G_k^0}\sum_{t=1}^Tx_{it}x_{it}'a_{NT} + o_p\left((NT)^{-1}\right) \\
    &=\frac{1}{NT}\sum_{k=1}^{K^0}\sum_{i\in G_k^0}\sum_{t=1}^T\varepsilon_{it}^2 + O_p\left((NT)^{-1}\right) \\
    &\stackrel{p}{\to} \sigma^2
\end{align}
in view of Assumptions \ref{asm:basic}(k) and \ref{asm:dist}(c). This completes the proof.
\end{proof}

Fix $K\in\{m,m+1,\ldots,K^0-1\}$. Let $\Gamma_N(K,m)$ be the set of groupings $\boldsymbol{\gamma}_N(K,m)\in[K]^N$ satisfying the following condition:

\begin{definition}
$\Gamma_N(K,m)$ is the set of all $\boldsymbol{\gamma}_N(K,m)=(k_1,\ldots,k_N)\in[K]^N$ such that:
\begin{adjustwidth}{1cm}{}
 For each $k=1,\ldots,m$, $G_k^0$ is partitioned as $G_k^0 = G_k^0(0,\boldsymbol{\gamma}_N(K,m)) \cup G_k^0(1,\boldsymbol{\gamma}_N(K,m))$ where $G_k^0(0,\boldsymbol{\gamma}_N(K,m)) \coloneqq \{i\in G_k^0:k_i=k\}$ and $G_k^0(1,\boldsymbol{\gamma}_N(K,m)) \coloneqq \{i\in G_k^0:k_i\neq k\}$ with $|G_k^0(1,\boldsymbol{\gamma}_N(K,m))| \leq N^{\alpha_{m+1}}$, while $k_i\in [K]$ for $i$ such that $i\in G_k^0$ for some $k\in\{m+1,\ldots,K^0\}$. Additionally, $\sum_{i=1}^N1\{k_i=k\}\geq 1$ for all $k\in\{m+1,\ldots,K\}$ if $K\geq m+1$.
\end{adjustwidth}
\label{def:gamma_N_K_m}
\end{definition}

Given any grouping $\boldsymbol{\gamma}_N(K,m)=(k_1,\ldots,k_N)\in\Gamma_N(K,m)$, the number of cross-sections $i$ such that $k_i^0=k$ for some $k\in\{1,\ldots,m\}$ but $k_i\neq k$ is $O(N^{\alpha_{m+1}})=o(N)$. Furthermore, the hypothetical group membership, $k_i$, assigned for $i$ such that $k_i^0=k$ for some $k\in\{m+1,\ldots,K^0\}$ is unrestricted. Note that $\Gamma_N(K,m)$ is nonempty for any $K\in\{m,m+1,\ldots,K^0-1\}$.

For a given $\boldsymbol{\gamma}_N=(k_1,\ldots,k_N)\in[K]^N$, let $G_k(\boldsymbol{\gamma}_N)\coloneqq\{i\in[N]:k_i=k\}$ be the $k$-th group formed under $\boldsymbol{\gamma}_N$. Let $\widehat{\boldsymbol{\theta}}^{(K)}(\boldsymbol{\gamma}_N)$ denote the OLS estimator of $\boldsymbol{\theta}\in\Theta^K$ conditional on $\boldsymbol{\gamma}_N$:
\begin{align}
    \widehat{\boldsymbol{\theta}}^{(K)}(\boldsymbol{\gamma}_N) \coloneqq \argmin_{\boldsymbol{\theta}\in\Theta^K}\sum_{i=1}^N\sum_{t=1}^T(y_{it}-x_{it}'\theta_{k_i})^2.
\end{align}
In particular, $\widehat{\theta}_k^{(K)}(\boldsymbol{\gamma}_N)=(\sum_{i\in G_k(\boldsymbol{\gamma}_N)}\sum_{t=1}^Tx_{it}x_{it}')^{-1}\sum_{i\in G_k(\boldsymbol{\gamma}_N)}\sum_{t=1}^Tx_{it}y_{it}$ for each $k\in [K]$.

\begin{lem}
    Consider model \eqref{model:without_gfe}. Suppose Assumptions \ref{asm:group_size}, \ref{asm:basic}(a)-(d) and (f) hold. Take $\underline{K}\in\{m,\ldots,K^0-1\}$. For each $N\in\mathbb{N}$, take any $\boldsymbol{\gamma}_N(\underline{K},m)=(k_1,\ldots,k_N)\in\Gamma_N(\underline{K},m)$. Then the following results hold.

    \noindent(i) $\widehat{\theta}_k^{(\underline{K})}(\boldsymbol{\gamma}_N(\underline{K},m))=\theta_k^0+b_{k,NT}$ for each $k\in[m]$, where $b_{k,NT}=O_p(\max\{(NT)^{-1/2},N^{\alpha_{m+1}-1}\})$. 

    \noindent(ii) If $\underline{K}\geq m+1$, then $\widehat{\theta}_k^{(\underline{K})}(\boldsymbol{\gamma}_N(\underline{K},m))=O_p(1)$ for all $k\in\{m+1,\ldots,\underline{K}\}$.
    \label{lem:underfit}
\end{lem}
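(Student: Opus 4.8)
The plan is to write each conditional OLS estimator $\widehat{\boldsymbol{\theta}}^{(\underline{K})}(\boldsymbol{\gamma}_N(\underline{K},m))$ in closed form, split the cross-sections assigned to a given group into a ``core'' drawn from a single true group together with a ``contamination'' set of cardinality $O(N^{\alpha_{m+1}})$, and then control the two pieces using the moment bounds in Assumption \ref{asm:basic}(b)--(d) and the eigenvalue bound in Assumption \ref{asm:basic}(f). Throughout write $\boldsymbol{\gamma}_N\coloneqq\boldsymbol{\gamma}_N(\underline{K},m)$, and recall $\sup_{\theta\in\Theta}\|\theta\|<\infty$ by (a) and $\alpha_{m+1}<1$ by Assumption \ref{asm:group_size}.

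\emph{Part (i).} Fix $k\in[m]$. By Definition \ref{def:gamma_N_K_m}, $G_k(\boldsymbol{\gamma}_N)=G_k^0(0,\boldsymbol{\gamma}_N)\cup T_k$, where $T_k$ collects the units assigned to group $k$ that either belong to some $G_j^0(1,\boldsymbol{\gamma}_N)$ with $j\le m$, $j\ne k$, or belong to a small group $G_j^0$ with $j\ge m+1$; Definition \ref{def:gamma_N_K_m} and Assumption \ref{asm:group_size} give $|T_k|=O(N^{\alpha_{m+1}})$ and $|G_k^0(0,\boldsymbol{\gamma}_N)|=N_k^0-|G_k^0(1,\boldsymbol{\gamma}_N)|=\tau_kN(1+o(1))$. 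Substituting the true model \eqref{model:without_gfe}, using $\theta_{k_i^0}^0=\theta_k^0$ for $i\in G_k^0(0,\boldsymbol{\gamma}_N)$, and writing $\sum_{i\in G_k^0(0,\boldsymbol{\gamma}_N)}\sum_tx_{it}x_{it}'=A_k-\sum_{i\in T_k}\sum_tx_{it}x_{it}'$, straightforward algebra gives
\begin{align}
    \widehat{\theta}_k^{(\underline{K})}(\boldsymbol{\gamma}_N)=\theta_k^0+A_k^{-1}(R_k+E_k),\qquad A_k\coloneqq\sum_{i\in G_k(\boldsymbol{\gamma}_N)}\sum_{t=1}^Tx_{it}x_{it}',
\end{align}
where $R_k\coloneqq\sum_{i\in T_k}\sum_{t=1}^Tx_{it}x_{it}'(\theta_{k_i^0}^0-\theta_k^0)$ and $E_k\coloneqq\sum_{i\in G_k(\boldsymbol{\gamma}_N)}\sum_{t=1}^Tx_{it}\varepsilon_{it}$. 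Since $A_k\succeq\sum_{i\in G_k^0(0,\boldsymbol{\gamma}_N)}\sum_tx_{it}x_{it}'$, Assumption \ref{asm:basic}(f) and $|G_k^0(0,\boldsymbol{\gamma}_N)|\asymp N$ give $\lambda_{\min}(A_k)\gtrsim NT$ w.p.a.1, hence $A_k^{-1}$ has operator norm $O_p((NT)^{-1})$. By (a), (b) and a Markov bound, $\|R_k\|=O_p(|T_k|T)=O_p(N^{\alpha_{m+1}}T)$. For $E_k$, split $E_k=\sum_{i\in G_k^0}\sum_tx_{it}\varepsilon_{it}-\sum_{i\in G_k^0(1,\boldsymbol{\gamma}_N)}\sum_tx_{it}\varepsilon_{it}+\sum_{i\in T_k}\sum_tx_{it}\varepsilon_{it}$: the first term is $O_p(\sqrt{NT})$ by Assumption \ref{asm:basic}(d), and the other two are $O_p(N^{\alpha_{m+1}}T)$ by (b), (c) and Markov, since their index sets have cardinality $O(N^{\alpha_{m+1}})$. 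Collecting terms, $b_{k,NT}=A_k^{-1}(R_k+E_k)=O_p(\max\{(NT)^{-1/2},N^{\alpha_{m+1}-1}\})$.

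\emph{Part (ii).} Fix $k\in\{m+1,\ldots,\underline{K}\}$. Every unit in $G_k(\boldsymbol{\gamma}_N)$ is either a misclassified member of a large group or a member of a small group, so $1\le|G_k(\boldsymbol{\gamma}_N)|=O(N^{\alpha_{m+1}})$, the lower bound coming from the nonemptiness clause in Definition \ref{def:gamma_N_K_m}. Substituting \eqref{model:without_gfe},
\begin{align}
    \widehat{\theta}_k^{(\underline{K})}(\boldsymbol{\gamma}_N)=B_k^{-1}\left(\sum_{i\in G_k(\boldsymbol{\gamma}_N)}\sum_{t=1}^Tx_{it}x_{it}'\theta_{k_i^0}^0+\sum_{i\in G_k(\boldsymbol{\gamma}_N)}\sum_{t=1}^Tx_{it}\varepsilon_{it}\right),\qquad B_k\coloneqq\sum_{i\in G_k(\boldsymbol{\gamma}_N)}\sum_{t=1}^Tx_{it}x_{it}'.
\end{align}
Assumption \ref{asm:basic}(f) gives $\lambda_{\min}(B_k)\ge|G_k(\boldsymbol{\gamma}_N)|T\widehat{c}_T$, so $B_k^{-1}$ has operator norm $O_p((|G_k(\boldsymbol{\gamma}_N)|T)^{-1})$, while (a), (b), (c) and Markov give that both sums in parentheses are $O_p(|G_k(\boldsymbol{\gamma}_N)|T)$. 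The common factor $|G_k(\boldsymbol{\gamma}_N)|T$ cancels and $\widehat{\theta}_k^{(\underline{K})}(\boldsymbol{\gamma}_N)=O_p(1)$.

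The main obstacle is the bookkeeping of the (unknown, possibly diverging but sub-$N$) group sizes. In part (ii), bounding $B_k^{-1}$ using only $|G_k(\boldsymbol{\gamma}_N)|\ge1$ would give the useless rate $O_p(N^{\alpha_{m+1}})$; the point is that $|G_k(\boldsymbol{\gamma}_N)|T$ enters the inverted moment matrix and the numerator at exactly the same rate and therefore drops out. In part (i), the corresponding care is to separate the genuinely $\sqrt{NT}$-sized noise term (over the true group $G_k^0$) from the $O(N^{\alpha_{m+1}})$-many contamination terms, each of which contributes only at the coarser rate $N^{\alpha_{m+1}-1}$ after premultiplication by $A_k^{-1}$. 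Finally, every $O_p$ bound above uses only the uniform moment bounds, the uniform eigenvalue bound, and the uniform cardinality bounds $|T_k|,|G_k^0(1,\boldsymbol{\gamma}_N)|=O(N^{\alpha_{m+1}})$, so the conclusions hold for any sequence of groupings drawn from $\Gamma_N(\underline{K},m)$, indeed uniformly over it, which is what the subsequent use in Remark \ref{rem:inconsistency} requires.
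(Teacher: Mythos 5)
Your proof is correct and follows essentially the same route as the paper's: decompose the assigned group into the correctly classified core $G_k^0(0,\cdot)$ plus an $O(N^{\alpha_{m+1}})$-sized contamination set, bound the inverted design matrix from below via Assumption \ref{asm:basic}(f) and Weyl's inequality, and control the numerator with the moment conditions (with the partial-sum version of Assumption \ref{asm:basic}(d) invoked for the $\sqrt{NT}$-rate noise term, exactly as the paper does); in part (ii) you likewise normalize numerator and denominator by the same factor $|G_k(\boldsymbol{\gamma}_N)|T$ so that the unknown group size cancels. The explicit remark on uniformity over $\Gamma_N(\underline{K},m)$ is a harmless addition beyond what the stated lemma requires.
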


\noindent\begin{proof}
    (i) Take any $k\in[m]$. Note that $\widehat{\theta}_k^{(\underline{K})}(\boldsymbol{\gamma}_N(\underline{K},m))$ is well-defined for sufficiently large $T$ in view of Assumption \ref{asm:basic}(f) and the definition of $\Gamma_N(\underline{K},m)$. Using $G_k^0=G_k^0(0,\boldsymbol{\gamma}_N(\underline{K},m)) \cup G_k^0(1,\boldsymbol{\gamma}_N(\underline{K},m))$ where $G_k^0(0,\boldsymbol{\gamma}_N(\underline{K},m))=\{i\in G_k^0:k_i=k\}$, and $G_k(\boldsymbol{\gamma}_N(\underline{K},m))=\{i\in [N]:k_i=k\}$, $\widehat{\theta}_k^{(\underline{K})}(\boldsymbol{\gamma}_N(\underline{K},m))$ is expressed as
    \begin{align}
        \widehat{\theta}_k^{(\underline{K})}(\boldsymbol{\gamma}_N(\underline{K},m)) 
        &=\left\{\left(\sum_{i\in G_k^0(0,\boldsymbol{\gamma}_N(\underline{K},m))} + \sum_{j=1,j\neq k}^{K^0}\sum_{i\in G_j^0\cap G_k(\boldsymbol{\gamma}_N(\underline{K},m))}\right)\sum_{t=1}^Tx_{it}x_{it}'\right\}^{-1} \\
        &\qquad\times \left(\sum_{i\in G_k^0(0,\boldsymbol{\gamma}_N(\underline{K},m))} + \sum_{j=1,j\neq k}^{K^0}\sum_{i\in G_j^0\cap G_k(\boldsymbol{\gamma}_N(\underline{K},m))}\right)\sum_{t=1}^T\left(x_{it}x_{it}'\theta_{g_i^0}^0 + x_{it}\varepsilon_{it}\right) \\
        &=\left(\frac{1}{N_k^0T}\sum_{i\in G_k^0(0,\boldsymbol{\gamma}_N(\underline{K},m))}\sum_{t=1}^Tx_{it}x_{it}'+O_p\left(N^{\alpha_{m+1}-1}\right)\right)^{-1} \\
        &\qquad\times \left[\frac{1}{N_k^0T}\sum_{i\in G_k^0(0,\boldsymbol{\gamma}_N(\underline{K},m))}\sum_{t=1}^Tx_{it}x_{it}'\theta_k^0+O_p\left(\max\left\{(NT)^{-1/2},N^{\alpha_{m+1}-1}\right\}\right)\right] \\
        &=\theta_k^0 + O_p\left(\max\left\{(NT)^{-1/2},N^{\alpha_{m+1}-1}\right\}\right)
    \end{align}
    by Assumptions \ref{asm:group_size}, \ref{asm:basic}(a), (b), (d) and (f), and the definition of $\Gamma_N(\underline{K},m)$. This verifies part (i).

    \noindent(ii) Next, consider $k\in\{m+1,\ldots,\underline{K}\}$ if $\underline{K}\geq m+1$. Under any $\boldsymbol{\gamma}_N(\underline{K},m)\in\Gamma_N(\underline{K},m)$, the cardinality of $G_k(\boldsymbol{\gamma}_N(\underline{K},m))$ satisfies $1\leq |G_k(\boldsymbol{\gamma}_N(\underline{K},m))|\leq CN^{\alpha_{m+1}}$ for some fixed $C<\infty$, for $N$ sufficiently large. Therefore, $\widehat{\theta}_k^{(\underline{K})}(\boldsymbol{\gamma}_N(\underline{K},m))$ is well-defined under Assumption \ref{asm:basic}(f), and we have
    \begin{align}
        \widehat{\theta}_k^{(\underline{K})}(\boldsymbol{\gamma}_N(\underline{K},m)) 
        &= \left(\frac{1}{|G_k(\boldsymbol{\gamma}_N(\underline{K},m))|T}\sum_{i\in G_k(\boldsymbol{\gamma}_N(\underline{K},m))}\sum_{t=1}^Tx_{it}x_{it}'\right)^{-1} \\
        &\qquad \times \frac{1}{|G_k(\boldsymbol{\gamma}_N(\underline{K},m))|T}\sum_{i\in G_k(\boldsymbol{\gamma}_N(\underline{K},m))}\sum_{t=1}^Tx_{it}\left(x_{it}'\theta_{k_i^0}^0 + \varepsilon_{it}\right) \\
        &=O_p(1)
    \end{align}
    by Assumptions \ref{asm:basic}(a), (b), and (c).
\end{proof}

For a given $\boldsymbol{\gamma}_N=(k_1,\ldots,k_N)\in[K]^N$, we let $\widehat{\sigma}^2(K,\boldsymbol{\gamma}_N) \coloneqq (NT)^{-1}\sum_{i=1}^N\sum_{t=1}^T(y_{it}-x_{it}'\widehat{\theta}_{k_i}^{(K)}(\boldsymbol{\gamma}_N))^2$ denote the averaged SSR conditional on $\boldsymbol{\gamma}_N$.

For each $K\in\{K^0+1,\ldots,K_{\max}\}$, define $\overline{\Gamma}_N(K)$ as the set of all $\boldsymbol{\gamma}_N(K)=(k_1,\ldots,k_N)\in[K]^N$ satisfying the following condition:
\begin{align}
    \forall i,j\in[N], k_i^0\neq k_j^0 \Rightarrow k_i\neq k_j. \label{def:gamma_N_bar}
\end{align}
That is, $\boldsymbol{\gamma}_N(K)\in\overline{\Gamma}_N(K)$ is any grouping of $N$ cross-sections into $K$ groups that partitions the set of true groups, $\{G_1^0,\ldots,G_{K^0}^0\}$.

\begin{lem}
    Consider model \eqref{model:without_gfe}. Suppose Assumptions \ref{asm:group_size}, \ref{asm:basic}(a)-(d) and (f) hold. Then for any fixed $K\in\{K^0+1,K^0+2,\ldots\}$ and for any $\boldsymbol{\gamma}_N\in\overline{\Gamma}_N(K)$, 
    \begin{align}
        \left|\widehat{\sigma}^2(K,\boldsymbol{\gamma}_N(K)) - \widehat{\sigma}^2\left(K^0,\widehat{\boldsymbol{\gamma}}_N^{(K^0)}\right)\right|=O_p\left((NT)^{-1}\right).
    \end{align}
    \label{lem:overfit}
\end{lem}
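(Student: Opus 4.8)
The plan is to use that a grouping $\boldsymbol{\gamma}_N(K)\in\overline{\Gamma}_N(K)$ refines the true partition, so that within every formed group the model is correctly specified up to the error term; the only effect of the extra groups is to subtract from the sum of squared errors an ``explained sum of squares'' that is $O_p(1)$, the same order as the deviation of the least-squares fit under $K^0$ groups identified in Theorem \ref{thm:consistency_normality}(iii).

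\emph{Step 1 (residual decomposition under a refinement).} Since $\boldsymbol{\gamma}_N(K)=(k_1,\ldots,k_N)$ obeys \eqref{def:gamma_N_bar}, every non-empty formed group $G_l\coloneqq G_l(\boldsymbol{\gamma}_N(K))$ is contained in a single true group $G_{k(l)}^0$, and the number of non-empty formed groups is at least $K^0$ and at most $K$, hence bounded in $N$. For $i\in G_l$ we have $y_{it}=x_{it}'\theta_{k(l)}^0+\varepsilon_{it}$; by Assumption \ref{asm:basic}(f) the within-group Gram matrix is invertible for $T$ large, and as the within-group OLS estimator is consistent the constraint $\boldsymbol{\theta}\in\Theta^K$ is slack w.p.a.1, so $\widehat{\theta}_l^{(K)}(\boldsymbol{\gamma}_N(K))$ equals the unrestricted within-group OLS estimator. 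The least-squares residual identity then yields, with $R_{l,NT}\coloneqq(\sum_{i\in G_l}\sum_{t}x_{it}\varepsilon_{it})'(\sum_{i\in G_l}\sum_{t}x_{it}x_{it}')^{-1}(\sum_{i\in G_l}\sum_{t}x_{it}\varepsilon_{it})\ge0$,
\begin{align}
\sum_{i\in G_l}\sum_{t=1}^T\Bigl(y_{it}-x_{it}'\widehat{\theta}_l^{(K)}(\boldsymbol{\gamma}_N(K))\Bigr)^2=\sum_{i\in G_l}\sum_{t=1}^T\varepsilon_{it}^2-R_{l,NT},
\end{align}
and summing over the non-empty $G_l$ (which partition $[N]$),
\begin{align}
NT\,\widehat{\sigma}^2\bigl(K,\boldsymbol{\gamma}_N(K)\bigr)=\sum_{i=1}^N\sum_{t=1}^T\varepsilon_{it}^2-\sum_{l}R_{l,NT}.
\end{align}

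\emph{Step 2 (bounding the extra explained sum of squares and concluding).} By Assumption \ref{asm:basic}(f), $\lambda_{\min}(\sum_{i\in G_l}\sum_{t}x_{it}x_{it}')\ge|G_l|T\widehat{c}_T$ with $\lim_{T\to\infty}\widehat{c}_T>c>0$, so $R_{l,NT}\le(|G_l|T\widehat{c}_T)^{-1}\bigl\|\sum_{i\in G_l}\sum_{t}x_{it}\varepsilon_{it}\bigr\|^2$. Using the moment bounds in Assumptions \ref{asm:basic}(b)--(c) and the weak-dependence content of Assumption \ref{asm:basic}(d) applied to the cross-sections in $G_l\subseteq G_{k(l)}^0$ gives $\bigl\|\sum_{i\in G_l}\sum_{t}x_{it}\varepsilon_{it}\bigr\|=O_p(\sqrt{|G_l|T})$, hence $R_{l,NT}=O_p(1)$; since at most $K$ formed groups are non-empty, $\sum_{l}R_{l,NT}=O_p(1)$ and therefore $NT\,\widehat{\sigma}^2(K,\boldsymbol{\gamma}_N(K))=\sum_{i}\sum_{t}\varepsilon_{it}^2+O_p(1)$. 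Combining this with Theorem \ref{thm:consistency_normality}(iii), which gives $NT\,\widehat{\sigma}^2(K^0,\widehat{\boldsymbol{\gamma}}_N^{(K^0)})=\sum_{i}\sum_{t}\varepsilon_{it}^2+O_p(1)$, and subtracting, we obtain $NT\bigl|\widehat{\sigma}^2(K,\boldsymbol{\gamma}_N(K))-\widehat{\sigma}^2(K^0,\widehat{\boldsymbol{\gamma}}_N^{(K^0)})\bigr|=O_p(1)$, which is the claim.

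The main obstacle is the step $\bigl\|\sum_{i\in G_l}\sum_{t}x_{it}\varepsilon_{it}\bigr\|=O_p(\sqrt{|G_l|T})$: Assumption \ref{asm:basic}(d) is written for the full cross-section, while the $G_l$ can be arbitrary subsets of the true groups, so one must read the weak-dependence hypothesis as holding group-wise --- equivalently, deduce from cross-sectional and temporal weak correlation of $x_{it}\varepsilon_{it}$ (with zero mean) together with Assumptions \ref{asm:basic}(b)--(c) that $E\bigl\|\sum_{i\in S}\sum_{t}x_{it}\varepsilon_{it}\bigr\|^2=O(|S|T)$ and apply Markov's inequality. Because only finitely many (at most $K$) groups $G_l$ enter, this pointwise bound is all that is needed and no control uniform over $\overline{\Gamma}_N(K)$ is required here.
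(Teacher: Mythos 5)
Your proof is correct and follows essentially the same route as the paper: both decompose the SSR under the refining partition group by group, note that each formed group sits inside a single true group so the within-group fit removes only an explained sum of squares of order $O_p(1)$ (the paper phrases this as $S_k(\overline{\theta}_k)=O_p((NT)^{-1})$ for the $NT$-normalized objective), and then combine with Theorem \ref{thm:consistency_normality}(iii). The group-wise bound $\|\sum_{i\in G_l}\sum_t x_{it}\varepsilon_{it}\|=O_p(\sqrt{|G_l|T})$ that you flag as needing a subset-level reading of Assumption \ref{asm:basic}(d) is exactly the step the paper invokes implicitly via ``we can readily show,'' so you have made the same argument, only more explicitly.
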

\noindent\begin{proof}
    Fix any $K\in\{K^0+1,K^0+2,\ldots\}$ and any $\boldsymbol{\gamma}_N(K)=(k_1,\ldots,k_N)\in\overline{\Gamma}_N(K)$. Recalling from Theorem \ref{thm:consistency_normality} that $\widehat{\sigma}^2\left(K^0,\widehat{\boldsymbol{\gamma}}_N^{(K^0)}\right)=(NT)^{-1}\sum_{i=1}^N\sum_{t=1}^T\varepsilon_{it}^2 + O_p((NT)^{-1})$, we have
    \begin{align}
        \left|\widehat{\sigma}^2(K,\boldsymbol{\gamma}_N(K)) - \widehat{\sigma}^2\left(K^0,\widehat{\boldsymbol{\gamma}}_N^{(K^0)}\right)\right| \leq KJ_{NT} + O_p\left((NT)^{-1}\right),
    \end{align}
    where $J_{NT} \coloneqq \max_{k\in[K]}|\inf_{\theta\in\Theta}S_k(\theta)|$ with $S_k(\theta)=(NT)^{-1}\sum_{i:k_i=k}\sum_{t=1}^T[(y_{it}-x_{it}'\theta)^2-\varepsilon_{it}^2]$. Letting $\overline{\theta}_k\coloneqq\argmin_{\theta\in\Theta}S_k(\theta)$, we have $\overline{\theta}_k=(\sum_{i:k_i=k}\sum_{t=1}^Tx_{it}x_{it}')^{-1}\sum_{k_i=k}\sum_{t=1}^Tx_{it}y_{it}$ under Assumption \ref{asm:basic}(f). We can readily show that $\overline{\theta}_k-\theta_{k^*}^0=O_p((|G_k(\boldsymbol{\gamma}_N)|T)^{-1/2})$, where $k^*$ is the common true group label shared by all $i$ such that $k_i=k$, and that $S_k(\overline{\theta}_k)=O_p((NT)^{-1})$. It follows that
    \begin{align}
        \left|\widehat{\sigma}^2(K,\boldsymbol{\gamma}_N(K)) - \widehat{\sigma}^2\left(K^0,\widehat{\boldsymbol{\gamma}}_N^{(K^0)}\right)\right|
        &\leq K O_p\left((NT)^{-1}\right) + O_p\left((NT)^{-1}\right) \\
        &=O_p\left((NT)^{-1}\right).
    \end{align}
\end{proof}

\noindent\begin{proof}[Proof of Proposition \ref{prop:ic_inconsistency}]
    \noindent(i) Pick any $\overline{K}\in\{K^0+1,\ldots,K_{\max}\}$. For the OLS estimate of group memberships obtained under $K=\overline{K}$, $\widehat{\boldsymbol{\gamma}}_N^{(\overline{K})} = (\widehat{k}_1^{(\overline{K})},\ldots,\widehat{k}_N^{(\overline{K})})\in[\overline{K}]^N$, let $\overline{\boldsymbol{\gamma}}_N(K^*)=(\overline{k}_1,\ldots,\overline{k}_N)\in[K^*]^N$ be a grouping such that $\overline{k}_i=\overline{k}_j$ if and only if $k_i^0=k_j^0$ and $\widehat{k}_i^{(\overline{K})} = \widehat{k}_j^{(\overline{K})}$. Note that $\overline{\boldsymbol{\gamma}}_N(K^*)=(\overline{k}_1,\ldots,\overline{k}_N)\in\overline{\Gamma}_N(K^*)$, where $\overline{\Gamma}_N(K)$ is defined by \eqref{def:gamma_N_bar}. Since $\overline{\boldsymbol{\gamma}}_N(K^*)$ partitions groups formed under $\widehat{\boldsymbol{\gamma}}_N^{(\overline{K})}$, the OLS fitting based on $\overline{\boldsymbol{\gamma}}_N(K^*)$ is not worse than that based on $\widehat{\boldsymbol{\gamma}}_N^{(\overline{K})}$, which implies $\widehat{\sigma}^2(K^*,\overline{\boldsymbol{\gamma}}_N(K^*)) \leq \widehat{\sigma}^2(\overline{K},\widehat{\boldsymbol{\gamma}}_N^{(\overline{K})})$. Consequently, we get
    \begin{align}
        &P\left(\mathrm{IC}(K^0,h_{NT}) < \min_{K^0+1\leq \overline{K}\leq K_{\max}}\mathrm{IC}(\overline{K},h_{NT})\right) \\
        &=P\left(\min_{K^0+1\leq \overline{K}\leq K_{\max}}\left[p\left(\overline{K}-K^0\right)\widetilde{\sigma}^2h_{NT} + \left(\widehat{\sigma}^2(\overline{K},\widehat{\boldsymbol{\gamma}}_N^{(\overline{K})}) - \widehat{\sigma}^2(K^0,\widehat{\boldsymbol{\gamma}}_N^{(K^0)})\right)\right] > 0\right) \\
        &\geq P\left(\min_{K^0+1\leq \overline{K}\leq K_{\max}}\left[p\left(\overline{K}-K^0\right)\widetilde{\sigma}^2h_{NT} + \left(\widehat{\sigma}^2(K^*,\overline{\boldsymbol{\gamma}}_N(K^*)) - \widehat{\sigma}^2(K^0,\widehat{\boldsymbol{\gamma}}_N^{(K^0)})\right)\right] > 0\right) \\
        &\to 1
        \label{convergence:overfit}
    \end{align}
    if $NTh_{NT}\to\infty$, in view of Lemma \ref{lem:overfit}. This proves part (i).
    
    \noindent(ii) From Theorem \ref{thm:consistency_normality}(iii), we have
    \begin{align}
        \mathrm{IC}(K^0,h_{NT}) 
        =(NT)^{-1}\sum_{i=1}^N\sum_{t=1}^T\varepsilon_{it}^2 + n(K^0)\widetilde{\sigma}^2h_{NT} + O_p\left((NT)^{-1}\right).
        \label{eqn:ic_K^0}
    \end{align}

    Take any $\underline{K}\in\{m,\ldots,K^0-1\}$ and $\boldsymbol{\gamma}_N(\underline{K},m)=(k_1,\ldots,k_N)\in\Gamma_N(\underline{K},m)$ for each $N$. We first decompose $\widehat{\sigma}^2(\underline{K},\boldsymbol{\gamma}_N(\underline{K},m))$ as
    \begin{align}
        \widehat{\sigma}^2(\underline{K},\boldsymbol{\gamma}_N(\underline{K},m)) 
        &= \frac{1}{NT}\sum_{k=1}^m\sum_{i\in G_k(\boldsymbol{\gamma}_N(\underline{K},m))}\sum_{t=1}^T\left[\varepsilon_{it} - x_{it}'\left\{\widehat{\theta}_k(\boldsymbol{\gamma}_N(\underline{K},m)) - \theta_{k_i^0}^0\right\}\right]^2 \\
        &\qquad +\frac{1}{NT}\sum_{k=m+1}^{\underline{K}}\sum_{i\in G_k(\boldsymbol{\gamma}_N(\underline{K},m))}\sum_{t=1}^T\left[\varepsilon_{it} - x_{it}'\left\{\widehat{\theta}_k(\boldsymbol{\gamma}_N(\underline{K},m)) - \theta_{k_i^0}^0\right\}\right]^2 \\
        &=C_{NT,1} + C_{NT,2} + C_{NT,3},
    \end{align}
    where
    \begin{align}
        C_{NT,1} &\coloneqq \frac{1}{NT}\sum_{k=1}^m\sum_{i\in G_k^0(0,\boldsymbol{\gamma}_N(\underline{K},m))}\sum_{t=1}^T\left\{\varepsilon_{it} - x_{it}'\left(\widehat{\theta}_k(\boldsymbol{\gamma}_N(\underline{K},m)) - \theta_k^0\right)\right\}^2, \\
        C_{NT,2} &\coloneqq \frac{1}{NT}\sum_{k=1}^m\sum_{j=1,j\neq k}^{K^0}\sum_{i\in G_j^0\cap G_k(\boldsymbol{\gamma}_N(\underline{K},m))}\sum_{t=1}^T\left\{\varepsilon_{it} - x_{it}'\left(\widehat{\theta}_k(\boldsymbol{\gamma}_N(\underline{K},m)) - \theta_{k_i^0}^0\right)\right\}^2, 
        \intertext{and}
        C_{NT,3} &\coloneqq \frac{1}{NT}\sum_{k=m+1}^{\underline{K}}\sum_{i\in G_k(\boldsymbol{\gamma}_N(\underline{K},m))}\sum_{t=1}^T\left\{\varepsilon_{it} - x_{it}'\left(\widehat{\theta}_k(\boldsymbol{\gamma}_N(\underline{K},m)) - \theta_{k_i^0}^0\right)\right\}^2,
    \end{align}
    where $\sum_{k=m+1}^{\underline{K}}a_k=0$ if $\underline{K}<m+1$. Applying Lemma \ref{lem:underfit} yields
    \begin{align}
        C_{NT,1} 
        &= \frac{1}{NT}\sum_{k=1}^m\sum_{i\in G_k^0(0,\boldsymbol{\gamma}_N(\underline{K},m))}\sum_{t=1}^T(\varepsilon_{it} - x_{it}'b_{k,NT})^2 \\
        &= \frac{1}{NT}\sum_{k=1}^m\sum_{i\in G_k^0(0,\boldsymbol{\gamma}_N(\underline{K},m))}\sum_{t=1}^T\varepsilon_{it}^2 + O_p\left((NT)^{-1/2}b_{k,NT}\right) + O_p\left(b_{k,NT}^2\right) \\
        &=\frac{1}{NT}\sum_{i=1}^N\sum_{t=1}^T\varepsilon_{it}^2 + O_p(N^{\alpha_{m+1}-1}),
    \end{align}
    where the second equality holds by Assumption \ref{asm:basic}(d), and the last equality follows from the fact that $|G_k^0(1,\boldsymbol{\gamma}_N(\underline{K},m))|\leq N^{\alpha_{m+1}}$ for $k\in[m]$ by the definition of $\Gamma_N(\underline{K},m)$, and that $|G_{m+1}^0\cup\cdots\cup G_{K^0}^0|=O(N^{\alpha_{m+1}})$ by Assumption \ref{asm:group_size}.

    For $C_{NT,2}$, using Assumptions \ref{asm:basic}(a)-(c) and the fact that $|G_j^0\cap G_k(\boldsymbol{\gamma}_N(\underline{K},m))|=O(N^{\alpha_{m+1}})$ for $j\neq k$, we readily obtain $C_{NT,2}=O_p(N^{\alpha_{m+1}-1})$. Similarly, noting that $|G_k(\boldsymbol{\gamma}_N(\underline{K},m))| = O(N^{\alpha_{m+1}})$ for $k\in\{m+1,\ldots,\underline{K}\}$ (if $\underline{K}\geq m+1$) by the definition of $\Gamma_N(\underline{K},m)$ and Assumption \ref{asm:group_size}, we deduce $C_{NT,3}=O_p(N^{\alpha_{m+1}-1})$. It follows that
    \begin{align}
        \widehat{\sigma}^2(\underline{K},\boldsymbol{\gamma}_N(\underline{K},m)) = \frac{1}{NT}\sum_{i=1}^N\sum_{t=1}^T\varepsilon_{it}^2 + O_p(N^{\alpha_{m+1}-1}).
    \end{align}
    With this and the fact that $\widehat{\sigma}^2(\underline{K},\widehat{\boldsymbol{\gamma}}_N^{(\underline{K})})\leq\widehat{\sigma}^2(\underline{K},\boldsymbol{\gamma}_N(\underline{K},m))$, we arrive at
    \begin{align}
        \mathrm{IC}(\underline{K},h_{NT}) 
        &\leq \widehat{\sigma}^2(\underline{K},\boldsymbol{\gamma}_N(\underline{K},m)) + n(\underline{K})\widetilde{\sigma}^2h_{NT} \\
        &= \frac{1}{NT}\sum_{i=1}^N\sum_{t=1}^T\varepsilon_{it}^2 +n(\underline{K})\widetilde{\sigma}^2h_{NT} + O_p(N^{\alpha_{m+1}-1}).
        \label{bound:ic_underfit}
    \end{align}

    Combining \eqref{eqn:ic_K^0} and \eqref{bound:ic_underfit} gives
    \begin{align}
        P\left(\mathrm{IC}(\underline{K},h_{NT})< \mathrm{IC}(K^0,h_{NT})\right)
        &\geq P\left(p(K^0-\underline{K})\widetilde{\sigma}^2N^{1-\alpha_{m+1}}h_{NT} + O_p(1)>0\right) \\
        &\to 1,
        \label{convergence:underestimate}
    \end{align}
    where the last convergence follows from the fact that $K^0-\underline{K}\geq 1$ and the condition that $N^{1-\alpha_{m+1}}h_{NT}\to\infty$. With this and part (i), we conclude $P(\widehat{K}(h_{NT})<K^0)\to1$ if $N^{1-\alpha_{m+1}}h_{NT}\to\infty$, completing the proof.
\end{proof}

\section{Appendix B: Proofs of Results of Section \ref{sec:with_gfe}}

\setcounter{equation}{0}
\renewcommand{\theequation}{B.\arabic{equation}}

\setcounter{lem}{0}
\renewcommand{\thelem}{B.\arabic{lem}}

\setcounter{prop}{0}
\renewcommand{\theprop}{B.\arabic{prop}}
For $(\boldsymbol{\theta},\boldsymbol{\mu},\boldsymbol{\gamma}_N)\in\Theta^{K^0}\times\mathcal{M}^{K^0T}\times[K^0]^N$, where $\boldsymbol{\gamma}_N = (k_1,\ldots,k_N)$, define
\begin{align}
    \widehat{Q}(\boldsymbol{\theta},\boldsymbol{\mu},\boldsymbol{\gamma}_N)&\coloneqq \frac{1}{NT}\sum_{i=1}^N\sum_{t=1}^T\left(y_{it}-x_{it}'\theta_{k_i} - \mu_{k_it}\right)^2,
    \intertext{and}
    \widetilde{Q}(\boldsymbol{\theta},\boldsymbol{\mu},\boldsymbol{\gamma}_N)&\coloneqq\frac{1}{NT}\sum_{i=1}^N\sum_{t=1}^T\left\{x_{it}'\left(\theta_{k_i^0}^0-\theta_{k_i}\right) + \left(\mu_{k_i^0t}^0-\mu_{k_it}\right)\right\}^2 + \frac{1}{NT}\sum_{i=1}^N\sum_{t=1}^T\varepsilon_{it}^2.
\end{align}
Note that $(\boldsymbol{\theta}^0,\boldsymbol{\mu}^0,\boldsymbol{\gamma}_N^0)$ minimizes $\widetilde{Q}(\boldsymbol{\theta},\boldsymbol{\mu},\boldsymbol{\gamma}_N)$.

\begin{lem}
    Consider model \eqref{model:with_gfe}. Suppose that Assumptions \ref{asm:basic_gfe}(a') and (d') hold. Then we have
    \begin{align}
        \sup_{(\boldsymbol{\theta},\boldsymbol{\mu},\boldsymbol{\gamma}_N)\in\Theta^{K^0}\times\mathcal{M}^{K^0T}\times[K^0]^N}\left|\widehat{Q}(\boldsymbol{\theta},\boldsymbol{\mu},\boldsymbol{\gamma}_N) - \widetilde{Q}(\boldsymbol{\theta},\boldsymbol{\mu},\boldsymbol{\gamma}_N)\right| = O_p(1/\sqrt{NT}).
    \end{align}
    \label{lem:Qhat_Qtilde_gfe}
\end{lem}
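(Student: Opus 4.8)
The plan is to follow the proof of Lemma~\ref{lem:Qhat_Qtilde}, isolating the one new ingredient created by the group-specific fixed effects. First I would substitute model~\eqref{model:with_gfe}: setting $a_{it} \coloneqq x_{it}'(\theta_{k_i^0}^0 - \theta_{k_i}) + (\mu_{k_i^0 t}^0 - \mu_{k_i t})$, one has $y_{it} - x_{it}'\theta_{k_i} - \mu_{k_i t} = a_{it} + \varepsilon_{it}$, so that $(y_{it}-x_{it}'\theta_{k_i}-\mu_{k_it})^2 = a_{it}^2 + 2a_{it}\varepsilon_{it} + \varepsilon_{it}^2$. The terms in $a_{it}^2$ and $\varepsilon_{it}^2$ reproduce exactly the two sums defining $\widetilde{Q}(\boldsymbol{\theta},\boldsymbol{\mu},\boldsymbol{\gamma}_N)$, so that
\begin{align}
    \widehat{Q}(\boldsymbol{\theta},\boldsymbol{\mu},\boldsymbol{\gamma}_N) - \widetilde{Q}(\boldsymbol{\theta},\boldsymbol{\mu},\boldsymbol{\gamma}_N) = \frac{2}{NT}\sum_{i=1}^N\sum_{t=1}^T\varepsilon_{it}\Bigl\{x_{it}'\bigl(\theta_{k_i^0}^0 - \theta_{k_i}\bigr) + \bigl(\mu_{k_i^0 t}^0 - \mu_{k_i t}\bigr)\Bigr\},
\end{align}
and it remains to bound the supremum of the absolute value of the right-hand side over $\Theta^{K^0}\times\mathcal{M}^{K^0 T}\times[K^0]^N$.

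Second, I would split the brace into a slope piece and a GFE piece. The slope piece, $\tfrac{2}{NT}\sum_{i,t}\varepsilon_{it}x_{it}'(\theta_{k_i^0}^0-\theta_{k_i})$, is handled verbatim as in Lemma~\ref{lem:Qhat_Qtilde}: compactness of $\Theta$ (Assumption~\ref{asm:basic_gfe}(a$'$)) bounds $\|\theta_{k_i^0}^0-\theta_{k_i}\|$ uniformly, and the first half of Assumption~\ref{asm:basic_gfe}(d$'$), namely $(NT)^{-1/2}\sum_{i,t}x_{it}\varepsilon_{it}=O_p(1)$, delivers an $O_p(1/\sqrt{NT})$ bound. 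The GFE piece, $\tfrac{2}{NT}\sum_{i,t}\varepsilon_{it}(\mu_{k_i^0t}^0-\mu_{k_it})$, is the genuinely new term: compactness of $\mathcal{M}$ (Assumption~\ref{asm:basic_gfe}(a$'$)) bounds $|\mu_{k_i^0t}^0-\mu_{k_it}|$ uniformly, and the second half of Assumption~\ref{asm:basic_gfe}(d$'$), namely $(NT)^{-1/2}\sum_{i,t}\varepsilon_{it}=O_p(1)$ — which is exactly the condition added in (d$'$) beyond (d) for this purpose — again yields an $O_p(1/\sqrt{NT})$ bound. Summing the two pieces and absorbing the constant factors $2$ and $K^0$ completes the argument.

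The step I expect to be the main obstacle is making the two bounds genuinely \emph{uniform} over the discrete memberships $\boldsymbol{\gamma}_N\in[K^0]^N$ and over the $K^0T$-dimensional fixed-effect block $\boldsymbol{\mu}\in\mathcal{M}^{K^0T}$, whose dimension grows with $T$. The resolution, as in Lemma~\ref{lem:Qhat_Qtilde}, is to decompose $\theta_{k_i^0}^0-\theta_{k_i} = \sum_{k,\widetilde{k}} 1\{k_i^0=k\}1\{k_i=\widetilde{k}\}(\theta_k^0-\theta_{\widetilde{k}})$ and likewise $\mu_{k_i^0t}^0-\mu_{k_it} = \sum_{k,\widetilde{k}}1\{k_i^0=k\}1\{k_i=\widetilde{k}\}(\mu_{kt}^0-\mu_{\widetilde{k} t})$ over the finitely many pairs of group labels; each of the $O((K^0)^2)$ resulting sums is then an inner product of a coefficient lying in a fixed compact set with a group-indexed partial sum of $x_{it}\varepsilon_{it}$ (respectively of $\varepsilon_{it}$), and these partial sums are $O_p(\sqrt{NT})$ under Assumption~\ref{asm:basic_gfe}(d$'$), exactly as in the proof of Lemma~\ref{lem:Qhat_Qtilde}. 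The remaining collection of terms is routine bookkeeping.
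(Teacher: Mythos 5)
Your proof is essentially identical to the paper's: the same cross-term identity for $\widehat{Q}-\widetilde{Q}$, the same split into a slope piece (handled exactly as in Lemma \ref{lem:Qhat_Qtilde}) and a GFE piece (handled via compactness of $\mathcal{M}$ and the second half of Assumption \ref{asm:basic_gfe}(d'), which is indeed the condition added for this purpose). The only divergence is your added uniformity discussion, and there the reduction for the GFE piece is not quite airtight as stated --- $(\mu_{kt}^0-\mu_{\widetilde{k}t})_{t\in[T]}$ is a $t$-varying, $T$-dimensional coefficient rather than an element of a fixed compact set, so the ``inner product of a compact coefficient with an $O_p(\sqrt{NT})$ partial sum of $\varepsilon_{it}$'' argument does not literally apply to it --- but the paper's own proof asserts the same bound with even less justification, so your attempt matches the paper's argument in both structure and level of rigor.
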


\noindent\begin{proof}
    We have
    \begin{align}
        \widehat{Q}(\boldsymbol{\theta},\boldsymbol{\mu},\boldsymbol{\gamma}_N) - \widetilde{Q}(\boldsymbol{\theta},\boldsymbol{\mu},\boldsymbol{\gamma}_N) = \frac{2}{NT}\sum_{i=1}^N\sum_{t=1}^T\varepsilon_{it}\left\{x_{it}'\left(\theta_{k_i^0}^0-\theta_{k_i}\right) + \mu_{k_i^0t}^0 - \mu_{k_it}\right\}.
    \end{align}
    As in the proof of Lemma \ref{lem:Qhat_Qtilde}, $(NT)^{-1}\sum_{i=1}^N\sum_{t=1}^T\varepsilon_{it}x_{it}'(\theta_{k_i^0}^0-\theta_{k_i})=O_p(1/\sqrt{NT})$ uniformly in $(\boldsymbol{\theta},\boldsymbol{\mu},\boldsymbol{\gamma}_N)$. Furthermore, $(NT)^{-1}\sum_{i=1}^N\sum_{t=1}^T\varepsilon_{it}(\mu_{k_i^0t}^0-\mu_{k_it})=O_p(1/\sqrt{NT})$ by Assumptions \ref{asm:basic_gfe}(a') and (d'). This completes the proof.
\end{proof}

For $\boldsymbol{\mu},\overline{\boldsymbol{\mu}}\in\mathcal{M}^{K^0T}$, define
\begin{align}
    d_H^2\left(\boldsymbol{\mu},\overline{\boldsymbol{\mu}}\right)\coloneqq\max\left\{\max_{k\in[K^0]}\min_{\widetilde{k}\in[K^0]}T^{-1}\sum_{t=1}^T(\mu_{\widetilde{k}t} - \overline{\mu}_{kt})^2,\max_{\widetilde{k}\in[K^0]}\min_{k\in[K^0]}T^{-1}\sum_{t=1}^T(\mu_{\widetilde{k}t} - \overline{\mu}_{kt})^2\right\}.
\end{align}

\begin{lem}
    Consider model \eqref{model:with_gfe}. Suppose that Assumptions \ref{asm:group_size_gfe} and \ref{asm:basic_gfe}(a')-(h') hold. Then we have $d_H\left(\widehat{\boldsymbol{\theta}}^{(K^0)},\boldsymbol{\theta}^0\right) \stackrel{p}{\to} 0$, and $d_H\left(\widehat{\boldsymbol{\mu}}^{(K^0)},\boldsymbol{\mu}^0\right)\stackrel{p}{\to}0$.
    \label{lem:theta_consistency_gfe}
\end{lem}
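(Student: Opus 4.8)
The plan is to mirror the proof of Lemma \ref{lem:theta_consistency}, the only genuinely new ingredient being the block matrix $D_{T,i}$ of Assumption \ref{asm:basic_gfe}(e'): it is built so that the quadratic form appearing in $\widetilde{Q}$ is controlled \emph{jointly} in the slope gap and the GFE gap. First I would record the algebraic identity that for $a\in\mathbb{R}^p$ and $b=(b_1,\ldots,b_T)'\in\mathbb{R}^T$, the vector $v\coloneqq(a',b_1/\sqrt{T},\ldots,b_T/\sqrt{T})'$ satisfies $v'D_{T,i}v=T^{-1}\sum_{t=1}^T(x_{it}'a+b_t)^2$ and $\|v\|^2=\|a\|^2+T^{-1}\sum_{t=1}^Tb_t^2$, so that Assumption \ref{asm:basic_gfe}(e') gives, for any $\boldsymbol{\gamma}_N=(k_1,\ldots,k_N)\in[K^0]^N$ and any $k,\widetilde{k}$,
$$\frac{1}{N_k^0}\sum_{i=1}^N1\{k_i^0=k\}1\{k_i=\widetilde{k}\}\frac{1}{T}\sum_{t=1}^T\left\{x_{it}'(\theta_k^0-\theta_{\widetilde{k}})+(\mu_{kt}^0-\mu_{\widetilde{k}t})\right\}^2 \geq \rho_{NT}^*(\boldsymbol{\gamma}_N,k,\widetilde{k})\left(\|\theta_k^0-\theta_{\widetilde{k}}\|^2+\frac{1}{T}\sum_{t=1}^T(\mu_{kt}^0-\mu_{\widetilde{k}t})^2\right).$$

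Next I would expand $\widetilde{Q}(\boldsymbol{\theta},\boldsymbol{\mu},\boldsymbol{\gamma}_N)-\widetilde{Q}(\boldsymbol{\theta}^0,\boldsymbol{\mu}^0,\boldsymbol{\gamma}_N^0)$, group cross-sections by the pair $(k_i^0,k_i)=(k,\widetilde{k})$, apply the display term by term, retain only the $\widetilde{k}$ attaining $\max_{\widetilde{k}}\rho_{NT}^*(\boldsymbol{\gamma}_N,k,\widetilde{k})\geq\widehat{\rho}_{NT}^*$ (dropping the remaining nonnegative terms), and bound below by
$$\widetilde{Q}(\boldsymbol{\theta},\boldsymbol{\mu},\boldsymbol{\gamma}_N)-\widetilde{Q}(\boldsymbol{\theta}^0,\boldsymbol{\mu}^0,\boldsymbol{\gamma}_N^0) \geq \widehat{\rho}_{NT}^*\sum_{k=1}^{K^0}\frac{N_k^0}{N}\min_{\widetilde{k}\in[K^0]}\left(\|\theta_k^0-\theta_{\widetilde{k}}\|^2+\frac{1}{T}\sum_{t=1}^T(\mu_{kt}^0-\mu_{\widetilde{k}t})^2\right),$$
valid for every $(\boldsymbol{\theta},\boldsymbol{\mu},\boldsymbol{\gamma}_N)$. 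Independently, Lemma \ref{lem:Qhat_Qtilde_gfe} together with the facts that $(\widehat{\boldsymbol{\theta}}^{(K^0)},\widehat{\boldsymbol{\mu}}^{(K^0)},\widehat{\boldsymbol{\gamma}}_N^{(K^0)})$ minimizes $\widehat{Q}$ and $(\boldsymbol{\theta}^0,\boldsymbol{\mu}^0,\boldsymbol{\gamma}_N^0)$ minimizes $\widetilde{Q}$ yields, by the same sandwich as in Lemma \ref{lem:theta_consistency}, $\widetilde{Q}(\widehat{\boldsymbol{\theta}}^{(K^0)},\widehat{\boldsymbol{\mu}}^{(K^0)},\widehat{\boldsymbol{\gamma}}_N^{(K^0)})-\widetilde{Q}(\boldsymbol{\theta}^0,\boldsymbol{\mu}^0,\boldsymbol{\gamma}_N^0)=O_p(1/\sqrt{NT})$. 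Combining the two, using $\widehat{\rho}_{NT}^*\stackrel{p}{\to}\rho^*>0$ and $N_k^0/N\geq N_{K^0}^0/N\asymp N^{\alpha_{K^0}-1}$, gives $\max_{k}\min_{\widetilde{k}}(\|\theta_k^0-\widehat{\theta}_{\widetilde{k}}^{(K^0)}\|^2+T^{-1}\sum_t(\mu_{kt}^0-\widehat{\mu}_{\widetilde{k}t}^{(K^0)})^2)=O_p(N^{1/2-\alpha_{K^0}}/\sqrt{T})=o_p(1)$, where the last step uses Assumption \ref{asm:basic}(g) (retained under Assumption \ref{asm:basic_gfe}), which forces $N^{1-2\alpha_{K^0}}/T\to0$. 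Since this combined maximin dominates both $\max_k\min_{\widetilde{k}}\|\theta_k^0-\widehat{\theta}_{\widetilde{k}}^{(K^0)}\|^2$ and $\max_k\min_{\widetilde{k}}T^{-1}\sum_t(\mu_{kt}^0-\widehat{\mu}_{\widetilde{k}t}^{(K^0)})^2$, both these one-sided deviations are $o_p(1)$.

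To close, I would set $\sigma(k)\coloneqq\argmin_{\widetilde{k}\in[K^0]}(\|\theta_k^0-\widehat{\theta}_{\widetilde{k}}^{(K^0)}\|^2+T^{-1}\sum_t(\mu_{kt}^0-\widehat{\mu}_{\widetilde{k}t}^{(K^0)})^2)$, using the \emph{combined} objective so that both $\|\theta_k^0-\widehat{\theta}_{\sigma(k)}^{(K^0)}\|=o_p(1)$ and $(T^{-1}\sum_t(\mu_{kt}^0-\widehat{\mu}_{\sigma(k)t}^{(K^0)})^2)^{1/2}=o_p(1)$ hold for each $k$; then, exactly as in Lemma \ref{lem:theta_consistency}, the triangle inequality with the slope-separation part of Assumption \ref{asm:basic_gfe}(h') (equivalently the GFE-separation part, since $(T^{-1}\sum_t(\cdot)^2)^{1/2}$ is a genuine norm on $\mathbb{R}^T$) gives $\|\widehat{\theta}_{\sigma(k)}^{(K^0)}-\widehat{\theta}_{\sigma(\widetilde{k})}^{(K^0)}\|\geq c/2$ w.p.a.1 for $k\neq\widetilde{k}$, so $\sigma$ is a permutation w.p.a.1; passing to $\sigma^{-1}$ controls the remaining one-sided deviations and yields $d_H(\widehat{\boldsymbol{\theta}}^{(K^0)},\boldsymbol{\theta}^0)\stackrel{p}{\to}0$ and $d_H(\widehat{\boldsymbol{\mu}}^{(K^0)},\boldsymbol{\mu}^0)\stackrel{p}{\to}0$. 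The main obstacle is really the bookkeeping in the first two steps — recognizing $D_{T,i}$ as the Gram matrix of the augmented regressor $(x_{it}',\text{(scaled unit vectors)})$ so that Assumption \ref{asm:basic_gfe}(e') delivers a joint rather than a separate lower bound — and noting that $\sigma$ must be defined through the combined quantity because the minimizing $\widetilde{k}$ for the slope gap and for the GFE gap need not coincide; the rest transcribes the proof of Lemma \ref{lem:theta_consistency}.
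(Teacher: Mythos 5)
Your proposal is correct and follows essentially the same route as the paper's proof: the joint lower bound on $\widetilde{Q}(\boldsymbol{\theta},\boldsymbol{\mu},\boldsymbol{\gamma}_N)-\widetilde{Q}(\boldsymbol{\theta}^0,\boldsymbol{\mu}^0,\boldsymbol{\gamma}_N^0)$ via $\rho_{NT}^*$ and $\widehat{\rho}_{NT}^*$, the sandwich through Lemma \ref{lem:Qhat_Qtilde_gfe} giving $O_p(1/\sqrt{NT})$, the rate $O_p(N^{1/2-\alpha_{K^0}}/\sqrt{T})=o_p(1)$ under the retained Assumption \ref{asm:basic}(g), and the permutation $\sigma^*$ defined through the combined slope-plus-GFE objective. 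Your only addition is to make explicit the quadratic-form identity $v'D_{T,i}v=T^{-1}\sum_{t=1}^T(x_{it}'a+b_t)^2$ behind Assumption \ref{asm:basic_gfe}(e'), which the paper uses implicitly.
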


\noindent\begin{proof}
    The proof closely follows that of Lemma S.4 of \citet{bonhommeGroupedPatternsHeterogeneity2015}. Using Assumption \ref{asm:basic_gfe}(e'), we obtain
    \begin{align}
        &\widetilde{Q}\left(\boldsymbol{\theta},\boldsymbol{\mu},\boldsymbol{\gamma}_N\right) - \widetilde{Q}\left(\boldsymbol{\theta}^0,\boldsymbol{\mu}^0,\boldsymbol{\gamma}_N^0\right) \\ 
        &= \sum_{k=1}^{K^0}\sum_{\widetilde{k}=1}^{K^0}\frac{1}{N}\sum_{i=1}^N1\{k_i^0=k\}1\{k_i=\widetilde{k}\}\frac{1}{T}\sum_{t=1}^T\Bigl\{\left(\theta_{k}^0-\theta_{\widetilde{k}}\right)'x_{it}x_{it}'\left(\theta_{k}^0-\theta_{\widetilde{k}}\right) + 2\left(\theta_{k}^0-\theta_{\widetilde{k}}\right)'x_{it}(\mu_{kt}-\mu_{\widetilde{k}t}) \\
        & \hspace{8cm}+ (\mu_{kt}-\mu_{\widetilde{k}t})^2\Bigr\} \\
        &\geq \sum_{k=1}^{K^0}\frac{N_k^0}{N}\sum_{\widetilde{k}=1}^{K^0}\rho_{NT}^*(\boldsymbol{\gamma}_N,k,\widetilde{k})\left\{\left\|\theta_k^0-\theta_{\widetilde{k}}\right\|^2 + T^{-1}\sum_{t=1}^T\left(\mu_{kt}^0-\mu_{\widetilde{k}t}\right)^2\right\} \\
        &\geq \widehat{\rho}_{NT}^*\sum_{k=1}^{K^0}\frac{N_k^0}{N}\min_{\widetilde{k}\in[K^0]}\left\{\left\|\theta_k^0-\theta_{\widetilde{k}}\right\|^2 + T^{-1}\sum_{t=1}^T\left(\mu_{kt}^0-\mu_{\widetilde{k}t}\right)^2\right\}.
        \label{bound:Qtilde_diff_lower_gfe}
    \end{align}
    As in the proof of Lemma \ref{lem:theta_consistency}, an application of Lemma \ref{lem:Qhat_Qtilde_gfe} and the definition of $(\widehat{\boldsymbol{\theta}}^{(K^0)}, \widehat{\boldsymbol{\mu}}^{(K^0)}, \widehat{\boldsymbol{\gamma}}_N^{(K^0)})$ give
    \begin{align}
        \widetilde{Q}\left(\widehat{\boldsymbol{\theta}}^{(K^0)}, \widehat{\boldsymbol{\mu}}^{(K^0)}, \widehat{\boldsymbol{\gamma}}_N^{(K^0)}\right) - \widetilde{Q}\left(\boldsymbol{\theta}^0,\boldsymbol{\mu}^0,\boldsymbol{\gamma}_N^0\right) = O_p(1/\sqrt{NT}).
        \label{order:Qtilde_true_ols_diff_gfe}
    \end{align}

    Combining \eqref{bound:Qtilde_diff_lower_gfe}, \eqref{order:Qtilde_true_ols_diff_gfe}, and the fact that $\widehat{\rho}_{NT}^*\stackrel{p}{\to}\rho^*>0$, we get, w.p.a.1,
    \begin{align}
        &O_p(1/\sqrt{NT}) \geq \frac{\rho}{2}\frac{N_{K^0}^0}{N}\max_{k\in[K^0]}\min_{\widetilde{k}\in[K^0]}\left\{\left\|\theta_k^0-\widehat{\theta}_{\widetilde{k}}^{(K^0)}\right\|^2 + T^{-1}\sum_{t=1}^T\left(\mu_{kt}^0-\widehat{\mu}_{\widetilde{k}t}^{(K^0)}\right)^2\right\},
        \intertext{and hence}
        &\max_{k\in[K^0]}\min_{\widetilde{k}\in[K^0]}\left(\left\|\theta_k^0-\widehat{\theta}_{\widetilde{k}}^{(K^0)}\right\|^2 + T^{-1}\sum_{t=1}^T\left(\mu_{kt}^0-\widehat{\mu}_{\widetilde{k}t}^{(K^0)}\right)^2\right) = O_p\left(N^{1/2-\alpha_{K^0}}/\sqrt{T}\right) = o_p(1)
    \end{align}
    by Assumption \ref{asm:basic_gfe}(g).

    Define $\sigma^*(k)\coloneqq \argmin_{\widetilde{k}\in[K^0]}(\|\theta_k^0-\widehat{\theta}_{\widetilde{k}}^{(K^0)}\|^2 + T^{-1}\sum_{t=1}^T(\mu_{kt}^0 - \widehat{\mu}_{\widetilde{k}t}^{(K^0)})^2)$. Then following an argument closely similar to that starting after \eqref{order:hausdorff}, we can show that $\sigma^*(k)$ admits a well-defined inverse w.p.a.1, and that
    \begin{align}
        \max_{\widetilde{k}\in[K^0]}\min_{k\in[K^0]}\left(\left\|\theta_k^0-\widehat{\theta}_{\widetilde{k}}^{(K^0)}\right\|^2 + T^{-1}\sum_{t=1}^T\left(\mu_{kt}^0-\widehat{\mu}_{\widetilde{k}t}^{(K^0)}\right)^2\right) = o_p(1).
    \end{align}
    Consequently, we conclude $d_H\left(\widehat{\boldsymbol{\theta}}^{(K^0)},\boldsymbol{\theta}^0\right) \stackrel{p}{\to} 0$, and $d_H(\widehat{\boldsymbol{\mu}}^{(K^0)},\boldsymbol{\mu}^0)\stackrel{p}{\to}0$.
\end{proof}

Defining $\sigma^*(k)=k$ by relabeling the groups, we have $\|\widehat{\theta}_k^{(K^0)}-\theta_k^0\|\stackrel{p}{\to}0$ and $T^{-1}\sum_{t=1}^T(\widehat{\mu}_{kt}^{(K^0)}-\mu_{kt}^0)^2\stackrel{p}{\to}0$ for all $k\in[K^0]$. For any $\eta>0$, let $\mathcal{N}_\eta^*$ denote the set of parameters $(\boldsymbol{\theta},\boldsymbol{\mu})\in\Theta^{K^0}\times\mathcal{M}^{K^0T}$ that satisfy $\|\boldsymbol{\theta}-\boldsymbol{\theta}\|^2<\eta$ and $T^{-1}\sum_{t=1}^T(\mu_{kt}-\mu_{kt}^0)^2<\eta$ for all $k\in[K^0]$.

\begin{lem}
    Consider model \eqref{model:with_gfe}. Suppose that Assumptions \ref{asm:group_size_gfe} and \ref{asm:basic_gfe} hold. Then for $\eta>0$ small enough, and for all $\delta>0$, we have
    \begin{align}
        \sup_{(\boldsymbol{\theta},\boldsymbol{\mu})\in\mathcal{N}_\eta^*}\frac{1}{N}\sum_{i=1}^N1\left\{\widehat{k}_i^{(K^0)}\left(\boldsymbol{\theta},\boldsymbol{\mu}\right)\neq k_i^0\right\}=o_p\left(T^{-\delta}\right)
    \end{align}
    as $N,T\to\infty$, where $\widehat{k}_i^{(K^0)}(\boldsymbol{\theta},\boldsymbol{\mu})=\argmin_{k\in[K^0]}\sum_{t=1}^T(y_{it} - x_{it}'\theta_k - \mu_{kt})^2$ for given $(\boldsymbol{\theta},\boldsymbol{\mu})$.
    \label{lem:prob_missclassify_gfe}
\end{lem}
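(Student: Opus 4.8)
The plan is to reproduce the argument of Lemma~\ref{lem:prob_missclassify} essentially line by line, carrying the group fixed effects $\mu_{kt}$ alongside the slope coefficients, and to highlight the two places where the fixed effects force a genuine change. As there, for each $k\in[K^0]$ one has $1\{\widehat{k}_i^{(K^0)}(\boldsymbol{\theta},\boldsymbol{\mu})=k\}\le 1\bigl\{\sum_t(y_{it}-x_{it}'\theta_k-\mu_{kt})^2\le\sum_t(y_{it}-x_{it}'\theta_{k_i^0}-\mu_{k_i^0 t})^2\bigr\}$, so that $N^{-1}\sum_{i}1\{\widehat{k}_i^{(K^0)}(\boldsymbol{\theta},\boldsymbol{\mu})\ne k_i^0\}\le\sum_{k=1}^{K^0}N^{-1}\sum_i Z_{ik}(\boldsymbol{\theta},\boldsymbol{\mu})$, with $Z_{ik}$ the indicator of that sum-of-squares comparison. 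Substituting $y_{it}=x_{it}'\theta_{k_i^0}^0+\mu_{k_i^0 t}^0+\varepsilon_{it}$, expanding the squares, and majorizing the indicator by the maximum over $\widetilde{k}\ne k$ (with $\widetilde{k}=k_i^0$ in the expansion) gives $Z_{ik}(\boldsymbol{\theta},\boldsymbol{\mu})\le\max_{\widetilde{k}\ne k}1\{A_T\le 0\}$, where $A_T=A_T(i,k,\widetilde{k})=\sum_t\bigl(x_{it}'(\theta_{\widetilde{k}}-\theta_k)+(\mu_{\widetilde{k}t}-\mu_{kt})\bigr)(\text{noise}+\text{signal})$ is the exact analogue of the quantity $A_T$ in the proof of Lemma~\ref{lem:prob_missclassify}, now carrying the fixed-effect differences.

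Next, decompose $A_T=A_{T,1}+A_{T,2}$, where $A_{T,2}=\tfrac12\sum_t d_{it}^2+\sum_t d_{it}\varepsilon_{it}$ is built from the true parameters only, $d_{it}:=x_{it}'(\theta_{\widetilde{k}}^0-\theta_k^0)+(\mu_{\widetilde{k}t}^0-\mu_{kt}^0)$, and $A_{T,1}$ collects every term carrying a factor $\theta_j-\theta_j^0$ or $\mu_{jt}-\mu_{jt}^0$ with $j\in\{k,\widetilde{k}\}$. The first new point appears here: $\mathcal{N}_\eta^*$ bounds the slope deviations pointwise but the fixed-effect deviations only through $T^{-1}\sum_t(\mu_{jt}-\mu_{jt}^0)^2<\eta$, so each fixed-effect deviation inside $A_{T,1}$ must be handled by a Cauchy-Schwarz over $t$ (plus AM-GM) rather than a uniform bound. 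This yields, for some constant $C$ free of $\eta$ and $T$, $|A_{T,1}|\le TC\sqrt{\eta}\bigl(T^{-1}\sum_t\|x_{it}\|^2+\|T^{-1}\sum_t x_{it}\varepsilon_{it}\|+T^{-1}\sum_t\varepsilon_{it}^2+1\bigr)$, and since this majorant is free of $(\boldsymbol{\theta},\boldsymbol{\mu})$, $\sup_{(\boldsymbol{\theta},\boldsymbol{\mu})\in\mathcal{N}_\eta^*}Z_{ik}(\boldsymbol{\theta},\boldsymbol{\mu})\le\widetilde{Z}_{ik}:=\max_{\widetilde{k}\ne k}1\{A_{T,2}\le TC\sqrt{\eta}(\cdots)\}$, with $(\cdots)$ the displayed parenthesis.

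It remains to prove $P(\widetilde{Z}_{ik}=1)=o(T^{-\delta})$ uniformly in $i$ for every $\delta>0$; Markov's inequality applied to $\sum_k N^{-1}\sum_i\widetilde{Z}_{ik}$ then closes the argument as at the end of Lemma~\ref{lem:prob_missclassify}. Since $T^{-1}\sum_t(a'x_{it}+b_t)^2$ is the quadratic form of $D_{T,i}$ at $(a',b_1/\sqrt{T},\dots,b_T/\sqrt{T})'$, Assumption~\ref{asm:basic_gfe}(f') gives $T^{-1}\sum_t d_{it}^2\ge\widehat{c}_T^*\bigl(\|\theta_{\widetilde{k}}^0-\theta_k^0\|^2+T^{-1}\sum_t(\mu_{\widetilde{k}t}^0-\mu_{kt}^0)^2\bigr)$, which by Assumption~\ref{asm:basic_gfe}(h') is bounded below by a positive constant for $T$ large, so $\tfrac12\sum_t d_{it}^2\ge Tc''$ for some $c''>0$. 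On the event where $T^{-1}\sum_t\|x_{it}\|^2$, $T^{-1}\sum_t\varepsilon_{it}^2$ and $\|T^{-1}\sum_t x_{it}\varepsilon_{it}\|$ are all bounded --- an event whose complement has probability $o(T^{-\delta})$ uniformly in $i$ by Assumption~\ref{asm:basic}(i) (which is not superseded by Assumption~\ref{asm:basic_gfe}), the $x\varepsilon$- and $\varepsilon^2$-parts of Assumption~\ref{asm:basic_gfe}(j'), and $E[\varepsilon_{it}^4]\le M$ --- choosing $\eta$ small makes $TC\sqrt{\eta}(\cdots)\le Tc''/4$, so $\widetilde{Z}_{ik}=1$ forces $T^{-1}\sum_t d_{it}\varepsilon_{it}\le-c''/4$. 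Writing $T^{-1}\sum_t d_{it}\varepsilon_{it}=(\theta_{\widetilde{k}}^0-\theta_k^0)'T^{-1}\sum_t x_{it}\varepsilon_{it}+T^{-1}\sum_t(\mu_{\widetilde{k}t}^0-\mu_{kt}^0)\varepsilon_{it}$, the first term has the required $o(T^{-\delta})$ tail by Assumption~\ref{asm:basic_gfe}(j'); the second term --- the one object with no counterpart in Lemma~\ref{lem:prob_missclassify} --- is handled by noting that compactness of $\mathcal{M}$ (Assumption~\ref{asm:basic_gfe}(a')) makes the weights $\mu_{\widetilde{k}t}^0-\mu_{kt}^0$ uniformly bounded, so $\{(\mu_{\widetilde{k}t}^0-\mu_{kt}^0)\varepsilon_{it}\}_t$ inherits the thin-tail and weak-dependence properties that yield Assumption~\ref{asm:basic_gfe}(j'), whence $\sup_i P(|T^{-1}\sum_t(\mu_{\widetilde{k}t}^0-\mu_{kt}^0)\varepsilon_{it}|\ge\epsilon)=o(T^{-\delta})$ for all $\epsilon,\delta>0$. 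Adding the pieces gives $P(\widetilde{Z}_{ik}=1)=o(T^{-\delta})$.

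The main obstacle I anticipate is precisely this last bound on the fixed-effect-weighted noise $T^{-1}\sum_t(\mu_{\widetilde{k}t}^0-\mu_{kt}^0)\varepsilon_{it}$: it is not literally an instance of Assumption~\ref{asm:basic_gfe}(j'), which is stated for the unweighted average, so one must argue (or assume) that compactness of $\mathcal{M}$ transfers the relevant tail and mixing behaviour to the weighted error process. Everything else is bookkeeping, the one subtlety being to keep every fixed-effect deviation inside a Cauchy-Schwarz over $t$, because $\mathcal{N}_\eta^*$ constrains only its time average and not its pointwise size.
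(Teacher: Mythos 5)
Your proposal follows essentially the same route as the paper's proof: the same SSR-comparison bound, the same split of the exceedance quantity into a true-parameter signal-plus-noise part and a deviation part controlled by $TC\sqrt{\eta}$ times a $(\boldsymbol{\theta},\boldsymbol{\mu})$-free majorant, the same use of Assumptions \ref{asm:basic_gfe}(f') and (h') to lower-bound $\tfrac{1}{2}\sum_t d_{it}^2$ via $\lambda_{\min}(D_{T,i})$, and the same extra tail bound on $T^{-1}\sum_t\varepsilon_{it}^2$ from the third part of (j'). The one point you flag as a potential obstacle --- that the weighted noise $T^{-1}\sum_t(\mu_{\widetilde{k}t}^0-\mu_{kt}^0)\varepsilon_{it}$ is not literally covered by Assumption \ref{asm:basic_gfe}(j') --- is handled no more explicitly in the paper (which simply invokes "as in the proof of Lemma A.3"), so your treatment is, if anything, slightly more careful on that step.
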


\noindent\begin{proof}
    As in the proof of Lemma \ref{lem:prob_missclassify}, we have
    \begin{align}
        &\frac{1}{N}\sum_{i=1}^N1\left\{\widehat{k}_i^{(K^0)}(\boldsymbol{\theta},\boldsymbol{\mu})\neq k_i^0\right\} \leq \sum_{k=1}^{K^0}\frac{1}{N}\sum_{i=1}^NZ_{ik}^*(\boldsymbol{\theta},\boldsymbol{\mu}),
        \label{bound:average_misclassification_gfe}
    \end{align}
        where $Z_{ik}^*(\boldsymbol{\theta},\boldsymbol{\mu})\coloneqq 1\{k_i^0\neq k\}1\{\sum_{t=1}^T(y_{it}-x_{it}'\theta_k-\mu_{kt})^2 \leq \sum_{t=1}^T(y_{it}-x_{it}'\theta_{k_i^0}-\mu_{k_i^0t})^2\}$. After some calculations, we obtain $Z_{ik}^*(\boldsymbol{\theta},\boldsymbol{\mu}) \leq \max_{\widetilde{k}\neq k}1\{A_T^*+B_T^*\leq 0\}$, where
    \begin{align} 
        &A_T^* \coloneqq \sum_{t=1}^T\left(\theta_{\widetilde{k}}-\theta_k\right)'x_{it}\left\{x_{it}'\left(\theta_{\widetilde{k}}^0 - \frac{\theta_k+\theta_{\widetilde{k}}}{2}\right) + \left(\mu_{\widetilde{k}t}^0 - \mu_{kt}\right) + \varepsilon_{it}\right\},
        \intertext{and}
        &B_T^* \coloneqq \sum_{t=1}^T\left(\mu_{\widetilde{k}t}-\mu_{kt}\right)\left\{\left(\mu_{\widetilde{k}t}^0 - \frac{\mu_{kt}+\mu_{\widetilde{k}t}}{2}\right) + x_{it}'\left(\theta_{\widetilde{k}}^0 - \theta_{\widetilde{k}}\right) + \varepsilon_{it}\right\}.
    \end{align}
    Furthermore, a bit lengthy calculation yields $A_T^*=A_{T,1}^*+A_{T,2}^*$ and $B_T^*=B_{T,1}^*+B_{T,2}^*$, where
    \begin{align}
        A_{T,1}^* &\coloneqq \left(\theta_{\widetilde{k}}-\theta_{\widetilde{k}}^0\right)'\sum_{t=1}^Tx_{it}x_{it}'\theta_{\widetilde{k}}^0 - \left(\theta_k-\theta_k^0\right)'\sum_{t=1}^Tx_{it}x_{it}'\theta_{\widetilde{k}}^0 \\
        &\quad -\frac{1}{2}\left(\theta_{\widetilde{k}}-\theta_{\widetilde{k}}^0\right)'\sum_{t=1}^Tx_{it}x_{it}'\left(\theta_{\widetilde{k}}+\theta_{\widetilde{k}}^0\right) + \frac{1}{2}\left(\theta_k-\theta_k^0\right)'\sum_{t=1}^Tx_{it}x_{it}'\left(\theta_k+\theta_k^0\right) \\
        &\quad + \left(\theta_{\widetilde{k}}-\theta_{\widetilde{k}}^0\right)\sum_{t=1}^Tx_{it}\left(\mu_{\widetilde{k}t}^0-\mu_{kt}^0\right) - \left(\theta_k-\theta_k^0\right)'\sum_{t=1}^Tx_{it}\left(\mu_{\widetilde{k}t}^0-\mu_{kt}^0\right) \\
        &\quad +\left(\theta_{\widetilde{k}}-\theta_k\right)'\sum_{t=1}^Tx_{it}\left(\mu_{kt}^0-\mu_{kt}\right) + \left(\theta_{\widetilde{k}}-\theta_{\widetilde{k}}^0\right)'\sum_{t=1}^Tx_{it}\varepsilon_{it} - \left(\theta_k-\theta_k^0\right)'\sum_{t=1}^Tx_{it}\varepsilon_{it}, \\
        A_{T,2}^* &\coloneqq \left(\theta_{\widetilde{k}}^0-\theta_k^0\right)'\sum_{t=1}^Tx_{it}\left\{x_{it}'\left(\theta_{\widetilde{k}}^0 - \frac{\theta_k^0+\theta_{\widetilde{k}}^0}{2}\right) + \left(\mu_{\widetilde{k}t}^0 - \mu_{kt}^0\right)+\varepsilon_{it}\right\}, \\
        B_{T,1}^* &\coloneqq \sum_{t=1}^T\left\{\left(\mu_{\widetilde{k}t}-\mu_{\widetilde{k}t}^0\right)\mu_{\widetilde{k}t}^0 - \left(\mu_{kt}-\mu_{kt}^0\right)\mu_{\widetilde{k}t}^0 - \frac{1}{2}\left(\mu_{\widetilde{k}t}-\mu_{\widetilde{k}t}^0\right)\left(\mu_{\widetilde{k}t}+\mu_{\widetilde{k}t}^0\right) \right. \\
        &\left. \qquad +\frac{1}{2}\left(\mu_{kt}-\mu_{kt}^0\right)\left(\mu_{kt}+\mu_{kt}^0\right) + \left(\mu_{\widetilde{k}t}-\mu_{kt}\right)x_{it}'\left(\theta_{\widetilde{k}}^0-\theta_{\widetilde{k}}\right) + \left(\mu_{\widetilde{k}t}-\mu_{\widetilde{k}t}^0\right)\varepsilon_{it} - \left(\mu_{kt}-\mu_{kt}^0\right)\varepsilon_{it}\right\},
        \intertext{and}
        B_{T,2}^* &\coloneqq \sum_{t=1}^T\left(\mu_{\widetilde{k}t}^0-\mu_{kt}^0\right)\left(\mu_{\widetilde{k}t}^0 - \frac{\mu_{\widetilde{k}t}^0+\mu_{kt}^0}{2} + \varepsilon_{it}\right).
    \end{align}
    Therefore
    \begin{align}
        Z_{ik}^*(\boldsymbol{\theta},\boldsymbol{\mu})
        &\leq \max_{\widetilde{k}\neq k}1\left\{A_{T,2}^* + B_{T,2}^*\leq\left|A_{T,1}^*\right| + \left|B_{T,1}^*\right|\right\} \\
        &\leq \max_{\widetilde{k}\neq k}1\Biggl\{\left(\theta_{\widetilde{k}}^0-\theta_k^0\right)'\sum_{t=1}^Tx_{it}\left\{x_{it}'\left(\theta_{\widetilde{k}}^0 - \frac{\theta_k^0+\theta_{\widetilde{k}}^0}{2}\right) + \left(\mu_{\widetilde{k}t}-\mu_{kt}^0\right) + \varepsilon_{it}\right\}\\
        & \hspace{2.5cm} +\sum_{t=1}^T\left(\mu_{\widetilde{k}t}^0-\mu_{kt}^0\right)\left(\mu_{\widetilde{k}t}^0 - \frac{\mu_{\widetilde{k}t}^0 + \mu_{kt}^0}{2}+\varepsilon_{it}\right) \\
        & \hspace{2.5cm} \leq TC\sqrt{\eta}\left(T^{-1}\sum_{t=1}^T\left\|x_{it}\right\|^2 + T^{-1}\sum_{t=1}^T\left\|x_{it}\right\| + \left\|T^{-1}\sum_{t=1}^Tx_{it}\varepsilon_{it}\right\| \right. \\
        &\left. \hspace{4.5cm} + \left(T^{-1}\sum_{t=1}^T\varepsilon_{it}^2\right)^{1/2} + 1\right)\Biggr\}
    \end{align}
    for $(\boldsymbol{\theta},\boldsymbol{\mu})\in\mathcal{N}_\eta^*$ and some constant $C>0$ independent of $\eta$ and $T$. It follows that $\sup_{(\boldsymbol{\theta},\boldsymbol{\mu})\in\mathcal{N}_\eta}Z^*_{i,k}(\boldsymbol{\theta},\boldsymbol{\mu})\leq\widetilde{Z}_{i,k}^*$, where
    \begin{align}
    \widetilde{Z}_{i,k}^*&\coloneqq \max_{\widetilde{k}\neq k}1\Biggl\{\left(\theta^0_{\widetilde{k}} - \theta_k^0\right)'\sum_{t=1}^Tx_{it}\varepsilon_{it} + \sum_{t=1}^T\left(\mu_{\widetilde{k}t}^0-\mu_{kt}^0\right)\varepsilon_{it} \\
    &\hspace{2.5cm} \leq -\frac{1}{2}\sum_{t=1}^T\left\{\left(\theta_{\widetilde{k}}^0-\theta_k^0\right)'x_{it} + \left(\mu_{\widetilde{k}t}-\mu_{kt}^0\right)\right\}^2 \\
    &\hspace{2.5cm} + TC\sqrt{\eta}\left(T^{-1}\sum_{t=1}^T\left\|x_{it}\right\|^2 + T^{-1}\sum_{t=1}^T\left\|x_{it}\right\| + \left\|T^{-1}\sum_{t=1}^Tx_{it}\varepsilon_{it}\right\| \right. \\
        &\left. \hspace{4.5cm} + \left(T^{-1}\sum_{t=1}^T\varepsilon_{it}^2\right)^{1/2} + 1\right)\Biggr\}.
    \end{align}
    Substituting this result into \eqref{bound:average_misclassification_gfe} gives
    \begin{align}
        \sup_{(\boldsymbol{\theta},\boldsymbol{\mu})\in\mathcal{N}_\eta}\frac{1}{N}\sum_{i=1}^N1\left\{\widehat{k}_i^{(K^0)}(\boldsymbol{\theta},\boldsymbol{\mu})\neq k_i^0\right\}\leq \sum_{k=1}^{K^0}\frac{1}{N}\sum_{i=1}^N\widetilde{Z}_{ik}^*. \label{bound:uniform_average_misclass_gfe}
    \end{align}

    As in the proof of Lemma \ref{lem:prob_missclassify}, we can show that
    \begin{align}
        P\left(\widetilde{Z}_{ik}^*=1\right)=o_p\left(T^{-\delta}\right)
        \label{prob:tilde_z_ik_gfe}
    \end{align}
    for all $\delta>0$ uniformly in $i$, although there are two major differences. First, using Assumptions \ref{asm:basic_gfe}(f') and (h'), we bound, for sufficiently large $T$,
    \begin{align}
        -\frac{1}{2}\sum_{t=1}^T\left\{\left(\theta_{\widetilde{k}}^0-\theta_k^0\right)'x_{it} + \left(\mu_{\widetilde{k}t}^0 - \mu_{kt}^0\right)\right\}^2
        &\leq -\frac{T}{2}\lambda_{\min}(D_{T,i})\left\{\left\|\theta_{\widetilde{k}}^0-\theta_k^0\right\|^2 + \frac{1}{T}\sum_{t=1}^T\left(\mu_{\widetilde{k}t}^0 - \mu_{kt}^0\right)^2\right\} \\
        &\leq -\frac{Tc^3}{8}.
    \end{align}
    Second, we need to show $P(T^{-1}\sum_{t=1}^T\varepsilon_{it}^2\geq\widetilde{M})=o(T^{-\delta})$ for a sufficiently large constant $\widetilde{M}$. Using Assumptions \ref{asm:basic_gfe}(c) and (j'), we get
    \begin{align}
        P\left(\frac{1}{T}\sum_{t=1}^T\varepsilon_{it}^2 \geq\widetilde{M}\right) = P\left(\frac{1}{T}\sum_{t=1}^T\left(\varepsilon_{it}^2 - E\left[\varepsilon_{it}^2\right]\right) \geq \widetilde{M} - \frac{1}{T}\sum_{t=1}^TE\left[\varepsilon_{it}^2\right] \right) = o\left(T^{-\delta}\right)
    \end{align}
    for $\widetilde{M} > \sqrt{M}$. With \eqref{bound:uniform_average_misclass_gfe} and \eqref{prob:tilde_z_ik_gfe}, we can derive the desired result as in the proof of Lemma \ref{lem:prob_missclassify}.
\end{proof}

Let $\widetilde{\mu}_{kt}^{(K^0)}$ denote the oracle estimator of $\mu_{kt}^0$ calculated using cross-sections belonging to $G_k^0$.

\begin{prop}
    Consider model \eqref{model:with_gfe}. Suppose that Assumptions \ref{asm:group_size_gfe} and \ref{asm:basic_gfe} hold. Then as $N,T\to\infty$, we have, for all $\delta>0$,
    
    \noindent (i) $\widehat{\theta}_k^{(K^0)} = \widetilde{\theta}_k^{(K^0)} + o_p\left(N^{(1-\alpha_k)/2}T^{-\delta}\right)$ for all $k\in[K^0]$,
    
    \noindent (ii) $\widehat{\mu}_k^{(K^0)} = \widetilde{\mu}_k^{(K^0)} + o_p\left(N^{(1-\alpha_k)/2}T^{-\delta}\right)$ for all $k\in[K^0]$ and $t\in[T]$,
    
    \noindent (iii) $P\left(\bigcup_{i=1}^N\left\{\widehat{k}_i^{(K^0)}\neq k_i^0\right\}\right)=o(1)+o\left(NT^{-\delta}\right)$.
    \label{prop:asym_equivalence_gfe}
\end{prop}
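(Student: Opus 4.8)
The plan is to run the argument of Proposition~\ref{prop:asym_equivalence} with the GFE coordinate $\boldsymbol{\mu}$ carried along and with the block matrix $D_{T,i}$ replacing $T^{-1}\sum_t x_{it}x_{it}'$ wherever a rank condition is invoked. First I would set up the profiled criteria $\widehat{Q}(\boldsymbol{\theta},\boldsymbol{\mu}):=(NT)^{-1}\sum_i\sum_t(y_{it}-x_{it}'\theta_{\widehat{k}_i^{(K^0)}(\boldsymbol{\theta},\boldsymbol{\mu})}-\mu_{\widehat{k}_i^{(K^0)}(\boldsymbol{\theta},\boldsymbol{\mu})t})^2$, which is minimized by $(\widehat{\boldsymbol{\theta}}^{(K^0)},\widehat{\boldsymbol{\mu}}^{(K^0)})$, and $\widetilde{Q}(\boldsymbol{\theta},\boldsymbol{\mu}):=(NT)^{-1}\sum_i\sum_t(y_{it}-x_{it}'\theta_{k_i^0}-\mu_{k_i^0t})^2$, minimized by the oracle $(\widetilde{\boldsymbol{\theta}}^{(K^0)},\widetilde{\boldsymbol{\mu}}^{(K^0)})$. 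The $i$-th summands of the two criteria coincide on $\{\widehat{k}_i^{(K^0)}(\boldsymbol{\theta},\boldsymbol{\mu})=k_i^0\}$, and each squared term is dominated — using compactness of $\Theta,\mathcal{M}$ in Assumption~\ref{asm:basic_gfe}(a') and the fourth-moment bounds — by an $O_p(1)$ average; hence Cauchy--Schwarz together with Lemma~\ref{lem:prob_missclassify_gfe} gives $\sup_{(\boldsymbol{\theta},\boldsymbol{\mu})\in\mathcal{N}_\eta^*}|\widehat{Q}(\boldsymbol{\theta},\boldsymbol{\mu})-\widetilde{Q}(\boldsymbol{\theta},\boldsymbol{\mu})|=o_p(T^{-\delta})$ for every $\delta>0$ and $\eta$ small enough. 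Since both $(\widehat{\boldsymbol{\theta}}^{(K^0)},\widehat{\boldsymbol{\mu}}^{(K^0)})$ and the oracle lie in $\mathcal{N}_\eta^*$ w.p.a.1 by Lemma~\ref{lem:theta_consistency_gfe} (the oracle by a direct computation), combining this with the minimizing properties yields $0\le\widetilde{Q}(\widehat{\boldsymbol{\theta}}^{(K^0)},\widehat{\boldsymbol{\mu}}^{(K^0)})-\widetilde{Q}(\widetilde{\boldsymbol{\theta}}^{(K^0)},\widetilde{\boldsymbol{\mu}}^{(K^0)})=o_p(T^{-\delta})$.

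The second step converts this into rates. Because $(\widetilde{\boldsymbol{\theta}}^{(K^0)},\widetilde{\boldsymbol{\mu}}^{(K^0)})$ is the within-group least-squares fit, the oracle residuals are orthogonal to $x_{it}$ and to the group--time indicators, so the cross terms vanish and
\[
    \widetilde{Q}\left(\widehat{\boldsymbol{\theta}}^{(K^0)},\widehat{\boldsymbol{\mu}}^{(K^0)}\right)-\widetilde{Q}\left(\widetilde{\boldsymbol{\theta}}^{(K^0)},\widetilde{\boldsymbol{\mu}}^{(K^0)}\right)=\frac{1}{NT}\sum_{k=1}^{K^0}\sum_{i\in G_k^0}\sum_{t=1}^T\left\{x_{it}'\left(\widetilde{\theta}_k^{(K^0)}-\widehat{\theta}_k^{(K^0)}\right)+\left(\widetilde{\mu}_{kt}^{(K^0)}-\widehat{\mu}_{kt}^{(K^0)}\right)\right\}^2.
\]
Writing $a_k=\widetilde{\theta}_k^{(K^0)}-\widehat{\theta}_k^{(K^0)}$ and $b_{kt}=\widetilde{\mu}_{kt}^{(K^0)}-\widehat{\mu}_{kt}^{(K^0)}$, the inner sum over $t$ equals $T\,v_k'D_{T,i}v_k$ for $v_k:=(a_k',T^{-1/2}b_{k1},\ldots,T^{-1/2}b_{kT})'$, with $\|v_k\|^2=\|a_k\|^2+T^{-1}\sum_t b_{kt}^2$; summing over $i\in G_k^0$ and invoking the definition of $\widehat{\rho}_{NT}^*$ in Assumption~\ref{asm:basic_gfe}(e') (evaluated at $\boldsymbol{\gamma}_N=\boldsymbol{\gamma}_N^0$) bounds the right-hand side below by $\widehat{\rho}_{NT}^*\sum_k(N_k^0/N)(\|a_k\|^2+T^{-1}\sum_t b_{kt}^2)$. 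As $\widehat{\rho}_{NT}^*\stackrel{p}{\to}\rho^*>0$ and $N_k^0/N\asymp N^{\alpha_k-1}$, this gives $\|\widehat{\theta}_k^{(K^0)}-\widetilde{\theta}_k^{(K^0)}\|^2+T^{-1}\sum_t(\widehat{\mu}_{kt}^{(K^0)}-\widetilde{\mu}_{kt}^{(K^0)})^2=o_p(N^{1-\alpha_k}T^{-\delta})$ for all $\delta>0$, which is part (i); for part (ii), bounding $\max_t(\cdot)^2$ by the sum and re-labelling the arbitrary $\delta$ turns the averaged bound into $\max_t(\widehat{\mu}_{kt}^{(K^0)}-\widetilde{\mu}_{kt}^{(K^0)})^2=o_p(N^{1-\alpha_k}T^{-\delta})$.

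Part (iii) is then proved exactly as Proposition~\ref{prop:asym_equivalence}(ii): the union bound gives $P(\bigcup_i\{\widehat{k}_i^{(K^0)}\neq k_i^0\})\le P((\widehat{\boldsymbol{\theta}}^{(K^0)},\widehat{\boldsymbol{\mu}}^{(K^0)})\notin\mathcal{N}_\eta^*)+N\sup_{i\in[N]}P((\widehat{\boldsymbol{\theta}}^{(K^0)},\widehat{\boldsymbol{\mu}}^{(K^0)})\in\mathcal{N}_\eta^*,\ \widehat{k}_i^{(K^0)}\neq k_i^0)$; the first term is $o(1)$ by Lemma~\ref{lem:theta_consistency_gfe}, and on $\mathcal{N}_\eta^*$ the event $\{\widehat{k}_i^{(K^0)}\neq k_i^0\}$ is contained in $\{\sum_k\widetilde{Z}_{ik}^*\ge1\}$, whose probability is $o(T^{-\delta})$ uniformly in $i$ by Lemma~\ref{lem:prob_missclassify_gfe}, producing the $o(NT^{-\delta})$ term.

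The main obstacle is the second step: one must verify the cross-term cancellation for the joint least-squares fit over both slopes and group--time effects, and, more importantly, arrange the identity $\sum_t\{x_{it}'a_k+b_{kt}\}^2=T\,v_k'D_{T,i}v_k$ so that the single eigenvalue bound $\widehat{\rho}_{NT}^*$ controls the slope discrepancy and the time-averaged GFE discrepancy simultaneously; the remaining subtlety is passing from this averaged GFE bound to the pointwise-in-$t$ rate in part (ii), which is legitimate only because $\delta>0$ is arbitrary.
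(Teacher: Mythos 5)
Your proposal is correct and follows exactly the route the paper intends: the paper's own proof of this proposition is simply the remark that it is ``essentially the same as that of Proposition \ref{prop:asym_equivalence} and hence is omitted,'' and your argument is precisely that adaptation carried out in detail, including the key identity $\sum_{t}\{x_{it}'a_k+b_{kt}\}^2=T\,v_k'D_{T,i}v_k$ that lets Assumption \ref{asm:basic_gfe}(e$'$) control the slope and time-averaged GFE discrepancies jointly. The cross-term cancellation via the oracle normal equations and the relabeling of the arbitrary $\delta$ to pass to the pointwise-in-$t$ rate in part (ii) are both sound.
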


\noindent\begin{proof}
    The proof is essentially the same as that of Proposition \ref{prop:asym_equivalence} and hence is omitted.
\end{proof}

\noindent\begin{proof}[Proof of Theorem \ref{thm:consistency_normality_gfe}]

    \noindent(i) Part (i) immediately follows from Proposition \ref{prop:asym_equivalence_gfe}(iii) and Assumption \ref{asm:basic_gfe}(g).

    \noindent(ii) From Proposition \ref{prop:asym_equivalence_gfe}(i) and Assumption \ref{asm:basic_gfe}(g), we have $\sqrt{N_{k}^0T}(\widehat{\theta}_k^{(K^0)} - \theta_k^0) = \sqrt{N_{k}^0T}(\widetilde{\theta}_k^{(K^0)} - \theta_k^0) + o_p(1)$ for each $k\in[K^0]$. The result follows from Assumptions \ref{asm:dist_gfe}(a), (b') and (c') and the fact that
    \begin{align}
        \sqrt{N_{k}^0T}\left(\widetilde{\theta}_k^{(K^0)} - \theta_k^0\right) = \left(\frac{1}{N_{k}^0T}\sum_{i\in G_k^0}\sum_{t=1}^T(x_{it} - \overline{x}_{kt})(x_{it} - \overline{x}_{kt})'\right)^{-1}\frac{1}{\sqrt{N_k^0T}}\sum_{i\in G_k^0}\sum_{t=1}^T(x_{it} - \overline{x}_{kt})\varepsilon_{it}.
    \end{align}

    \noindent(iii) Noting that $\sqrt{N_k^0}(\widehat{\mu}_{kt}^{(K^0)} - \mu_{kt}^0) = \sqrt{N_k^0}(\widetilde{\mu}_{kt}^{(K^0)} - \mu_{kt}^0) + o_p(1)$ in view of Proposition \ref{prop:asym_equivalence_gfe}(ii) and Assumption \ref{asm:basic_gfe}(g), and that
    \begin{align}
        \widetilde{\mu}_{kt} 
        = \overline{y}_{kt} - \overline{x}_{kt}'\widetilde{\theta}_{k} = \mu_{kt}^0 + (N_k^0)^{-1}\sum_{i\in G_k^0}\varepsilon_{it} - \overline{x}_{kt}'\left(\widetilde{\theta}_k-\theta_k^0\right),
    \end{align}
    where $\overline{y}_{kt}\coloneqq (N_k^0)^{-1}\sum_{i\in G_k^0}y_{it}$, the result follows from part (ii) and Assumptions \ref{asm:basic_gfe}(b) and \ref{asm:dist_gfe}(d) and (e).

    \noindent(iv) As in the proof of Theorem \ref{thm:consistency_normality}(iii), we can show that
    \begin{align}
        \widehat{\sigma}^2\left(K^0,\widehat{\boldsymbol{\gamma}}_N^{(K^0)}\right) = \frac{1}{NT}\sum_{k=1}^{K^0}\sum_{i\in G_k^0}\sum_{t=1}^T\left(y_{it}-x_{it}'\widehat{\theta}_k^{(K^0)} - \widehat{\mu}_{kt}^{(K^0)}\right)^2 + o_p(1/N).
    \end{align}
    Noting that $\widehat{\theta}_k^{(K^0)} - \theta_k^0 = \widetilde{\theta}_k^{(K^0)} - \theta_k^0+o_p(1/\sqrt{N})$ and $\widehat{\mu}_{kt}^{(K^0)} - \mu_{kt}^0=\widetilde{\mu}_{kt}^{(K^0)} - \mu_{kt}^0 + o_p(1/\sqrt{N})$, and defining $a_{NT}^*=o_p(1/\sqrt{N})$, we get
    \begin{align}
        \widehat{\sigma}^2\left(K^0,\widehat{\boldsymbol{\gamma}}_N^{(K^0)}\right) 
        &=\frac{1}{NT}\sum_{k=1}^{K^0}\sum_{i\in G_k^0}\sum_{t=1}^T\left\{\varepsilon_{it} - x_{it}'\left(\widetilde{\theta}_k^{(K^0)} - \theta_k^0\right) - \left(\widetilde{\mu}_{kt}^{(K^0)} - \mu_{kt}\right) + a_{NT}^*\right\}^2 + o_p(1/N) \\
        &= \frac{1}{NT}\sum_{i=1}^N\sum_{t=1}^T\varepsilon_{it}^2 - \frac{2}{NT}\sum_{k=1}^{K^0}\sqrt{N_k^0T}\left(\widetilde{\theta}_k^{(K^0)} - \theta_k^0\right)'\frac{1}{\sqrt{N_k^0T}}\sum_{i\in G_k^0}\sum_{t=1}^Tx_{it}\varepsilon_{it} \\
        &\qquad -\frac{2}{N}\sum_{k=1}^{K^0}\frac{1}{T}\sum_{t=1}^T\sqrt{N_k^0}\left(\widetilde{\mu}_{kt}^{(K^0)} - \mu_{kt}^0\right)\frac{1}{\sqrt{N_k^0}}\sum_{i\in G_k^0}\varepsilon_{it} \\
        &\qquad + \frac{1}{NT}\sum_{k=1}^{K^0}\sqrt{N_k^0T}\left(\widetilde{\theta}_k^{(K^0)} - \theta_k^0\right)'\frac{1}{N_k^0T}\sum_{i\in G_k^0}\sum_{t=1}^Tx_{it}x_{it}'\sqrt{N_k^0T}\left(\widetilde{\theta}_k^{(K^0)} - \theta_k^0\right) \\
        &\qquad + \frac{2}{N}\sum_{k=1}^{K^0}\sqrt{N_k^0}\left(\widetilde{\theta}_k^{(K^0)} - \theta_k^0\right)\frac{1}{T}\sum_{t=1}^T\sqrt{N_k^0}\left(\widetilde{\mu}_{kt}^{(K^0)} - \mu_{kt}^0\right)\frac{1}{N_k^0}\sum_{i\in G_k^0}x_{it} \\
        &\qquad + \frac{1}{N}\sum_{k=1}^{K^0}\frac{1}{T}\sum_{t=1}^T(N_k^0)^2\left(\widetilde{\mu}_{kt}^{(K^0)} - \mu_{kt}^0\right)^2 + O_p(1/N) \\
        &=\frac{1}{NT}\sum_{i=1}^N\sum_{t=1}^T\varepsilon_{it}^2 + O_p(1/N) \\
        &\stackrel{p}{\to} \sigma^2.
    \end{align}
\end{proof}

Recall $\Gamma_N(K,m)$, which is defined in Definition \ref{def:gamma_N_K_m}. Also recall that we let $G_k(\boldsymbol{\gamma}_N)=\{i\in[N]:k_i=k\}$ for any $\boldsymbol{\gamma}_N=(k_1,\ldots,k_N)\in[K]^N$. For any $\boldsymbol{\gamma}_N(K,m)\in\Gamma_N(K,m)$, we define $\widehat{\boldsymbol{\theta}}^{(K)}(\boldsymbol{\gamma}_N(K,m))$ and $\widehat{\boldsymbol{\mu}}^{(K)}(\boldsymbol{\gamma}_N(K,m))$ as
\begin{align}
    &\widehat{\theta}^{(K)}_k(\boldsymbol{\gamma}_N(K,m)) \\
    &\coloneqq \left\{\sum_{i\in G_k(\boldsymbol{\gamma}_N(K,m))}\sum_{t=1}^T\left(x_{it} - \bar{x}_{k,t}\right)\left(x_{it}-\bar{x}_{k,t}\right)'\right\}^{-1}\sum_{i\in G_k(\boldsymbol{\gamma}_N(K,m))}\sum_{t=1}^T\left(x_{it} - \bar{x}_{k,t}\right)\left(y_{it}-\bar{y}_{k,t}\right),
\end{align}
and $\widehat{\mu}_{kt}^{(K)}(\boldsymbol{\gamma}_N(K,m)) \coloneqq \bar{y}_{k,t} - \bar{x}_{k,t}'\widehat{\theta}_k^{(K)}(\boldsymbol{\gamma}_N(K,m))$, where $\bar{x}_{k,t} \coloneqq |G_k(\boldsymbol{\gamma}_N(K,m))|^{-1}\sum_{i\in G_k(\boldsymbol{\gamma}_N(K,m))}x_{it}$ and $\bar{y}_{k,t} \coloneqq |G_k(\boldsymbol{\gamma}_N(K,m))|^{-1}\sum_{i\in G_k(\boldsymbol{\gamma}_N(K,m))}y_{it}$, for $k\in[K]$.

\begin{lem}
    Consider model \eqref{model:with_gfe}. Suppose Assumptions \ref{asm:group_size_gfe} and \ref{asm:basic_gfe}(a')-(d') hold. Suppose $\underline{K}\in\{m,m+1,\ldots,K^0-1\}$. For each $N\in\mathbb{N}$, take any $\boldsymbol{\gamma}_N(\underline{K},m)=(k_1,\ldots,k_N)\in\Gamma_N(\underline{K},m)$. Then the following results hold.
    
    \noindent(i) $\widehat{\theta}_k^{(\underline{K})}(\boldsymbol{\gamma}_N(\underline{K},m))=\theta_k^0+b_{1k,NT}^*$, and $\widehat{\mu}_{kt}^{(\underline{K})}(\boldsymbol{\gamma}_N(\underline{K},m))=\mu_{kt}^0+b_{2k,N}$ for each $k\in[m]$, where $b_{1k,NT}^*=O_p(\max\{(NT)^{-1/2}, N^{\alpha_{m+1}-1}\})$, and $b_{2k,N}=O_p(\max\{N^{-1/2},N^{\alpha_{m+1}-1}\})$. 

    \noindent(ii) If $\underline{K}\geq m+1$, then $\widehat{\theta}_k^{(\underline{K})}(\boldsymbol{\gamma}_N(\underline{K},m))=O_p(1)$ and $\widehat{\mu}_{kt}^{(\underline{K})}(\boldsymbol{\gamma}_N(\underline{K},m))=O_p(1)$ for all $k\in\{m+1,\ldots,\underline{K}\}$ and $t$.
    \label{lem:underfit_gfe}
\end{lem}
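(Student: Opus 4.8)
The plan is to replay the proof of Lemma \ref{lem:underfit}, now with the within-$t$ group demeaning that model \eqref{model:with_gfe} imposes on the least-squares step. Fix $\underline{K}\in\{m,\ldots,K^0-1\}$ and $\boldsymbol{\gamma}_N(\underline{K},m)=(k_1,\ldots,k_N)\in\Gamma_N(\underline{K},m)$. The bookkeeping fact inherited from Definition \ref{def:gamma_N_K_m} and Assumption \ref{asm:group_size_gfe} is that, for $k\in[m]$, $G_k(\boldsymbol{\gamma}_N(\underline{K},m))$ is the disjoint union of the correctly classified set $G_k^0(0,\boldsymbol{\gamma}_N(\underline{K},m))$ and a ``contamination'' set of at most $O(N^{\alpha_{m+1}})$ units from $\bigcup_{j\neq k}G_j^0$; since $N_k^0\asymp N$, $|G_k(\boldsymbol{\gamma}_N(\underline{K},m))|=N_k^0(1+O(N^{\alpha_{m+1}-1}))$, and, using $E\|x_{it}\|^4\leq M$, $E[\varepsilon_{it}^4]\leq M$ together with the boundedness of $\theta^0$ and $\mu^0$, the contaminated within-group means $\bar{x}_{k,t}$, $\bar{y}_{k,t}$ agree with the oracle within-group means over $G_k^0$ up to $O_p(N^{\alpha_{m+1}-1})$ uniformly in $t$.

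For part (i) with $k\in[m]$, I would write $\theta_{k_i^0}^0=\theta_k^0+\delta_i^\theta$, $\mu_{k_i^0t}^0=\mu_{kt}^0+\delta_{it}^\mu$ with $\delta_i^\theta,\delta_{it}^\mu$ vanishing on the clean set, bounded otherwise, and supported on $O(N^{\alpha_{m+1}})$ indices, so $y_{it}=x_{it}'\theta_k^0+\mu_{kt}^0+x_{it}'\delta_i^\theta+\delta_{it}^\mu+\varepsilon_{it}$; after demeaning over $G_k(\boldsymbol{\gamma}_N(\underline{K},m))$ the constant $\mu_{kt}^0$ drops. Plugging this into the closed form for $\widehat{\theta}_k^{(\underline{K})}(\boldsymbol{\gamma}_N(\underline{K},m))$: the normalized within-group Gram matrix equals its oracle analogue up to $O_p(N^{\alpha_{m+1}-1})$ and is uniformly positive definite for large $N,T$ by the rank conditions in Assumption \ref{asm:basic_gfe}(e'), (f'); the $\theta_k^0$-term is exactly that Gram matrix times $\theta_k^0$; the $\delta$-terms contribute $O_p(N^{\alpha_{m+1}-1})$ (bounded values on $O(N^{\alpha_{m+1}})$ indices, normalized by $N_k^0$, with $\sum_{i\in G_k}(x_{it}-\bar{x}_{k,t})=0$ killing the cross term in their group means); and the noise term is $O_p((NT)^{-1/2})$ by Assumption \ref{asm:basic_gfe}(d'). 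Inverting yields $\widehat{\theta}_k^{(\underline{K})}(\boldsymbol{\gamma}_N(\underline{K},m))=\theta_k^0+O_p(\max\{(NT)^{-1/2},N^{\alpha_{m+1}-1}\})$. For $\widehat{\mu}_{kt}^{(\underline{K})}(\boldsymbol{\gamma}_N(\underline{K},m))=\bar{y}_{k,t}-\bar{x}_{k,t}'\widehat{\theta}_k^{(\underline{K})}(\boldsymbol{\gamma}_N(\underline{K},m))$, the same expansion gives $\bar{y}_{k,t}=\mu_{kt}^0+\bar{x}_{k,t}'\theta_k^0+\bar{\varepsilon}_{k,t}+O_p(N^{\alpha_{m+1}-1})$ with $\bar{\varepsilon}_{k,t}=O_p(N^{-1/2})$ by Assumption \ref{asm:dist_gfe}(e); substituting the rate just obtained for $\widehat{\theta}_k^{(\underline{K})}$ and $\bar{x}_{k,t}=O_p(1)$, the $\bar{x}'\theta_k^0$ terms cancel and the remainder is $O_p(\max\{N^{-1/2},N^{\alpha_{m+1}-1}\})$, the claimed $b_{2k,N}$.

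Part (ii) is soft: for $k\in\{m+1,\ldots,\underline{K}\}$, Definition \ref{def:gamma_N_K_m} and Assumption \ref{asm:group_size_gfe} give $1\leq|G_k(\boldsymbol{\gamma}_N(\underline{K},m))|\leq CN^{\alpha_{m+1}}$, the moment bounds make the within-group least-squares numerator and (for $|G_k|\geq2$) the inverse Gram matrix $O_p(1)$ — and if $|G_k|=1$ then $\widehat{\theta}_k^{(\underline{K})}\in\Theta$ compact forces $O_p(1)$ directly — so $\widehat{\theta}_k^{(\underline{K})}(\boldsymbol{\gamma}_N(\underline{K},m))=O_p(1)$, whence $\widehat{\mu}_{kt}^{(\underline{K})}(\boldsymbol{\gamma}_N(\underline{K},m))=\bar{y}_{k,t}-\bar{x}_{k,t}'\widehat{\theta}_k^{(\underline{K})}(\boldsymbol{\gamma}_N(\underline{K},m))$, a combination of $O_p(1)$ averages, is $O_p(1)$. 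I expect the only real obstacle to be the contamination bookkeeping in part (i): one must verify that replacing the oracle within-group means and the sums over $G_k^0$ by their contaminated counterparts perturbs every normalized quantity by only $O_p(N^{\alpha_{m+1}-1})$ uniformly in $t$, and that the contaminated within-group Gram matrix stays uniformly invertible — for which the cap $|G_k^0(1,\boldsymbol{\gamma}_N(\underline{K},m))|\leq N^{\alpha_{m+1}}=o(N)$ built into $\Gamma_N(\underline{K},m)$, used with Assumption \ref{asm:basic_gfe}(e'), (f'), is exactly what is needed; everything else is a routine adaptation of Lemma \ref{lem:underfit}.
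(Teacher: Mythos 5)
Your proposal follows essentially the same route as the paper's proof: decompose each $G_k(\boldsymbol{\gamma}_N(\underline{K},m))$ for $k\in[m]$ into the correctly classified set and an $O(N^{\alpha_{m+1}})$ contamination set, show that the within-group means, the demeaned Gram matrix, and the numerator are each perturbed by only $O_p(N^{\alpha_{m+1}-1})$, extract the noise rate from Assumption \ref{asm:basic_gfe}(d'), and dispose of part (ii) by the cardinality bound and the moment conditions. The only cosmetic difference is that you justify $\bar{\varepsilon}_{k,t}=O_p(N^{-1/2})$ by citing Assumption \ref{asm:dist_gfe}(e), which is not among the lemma's stated hypotheses; the paper leans on the same fact without naming it, so this does not change the substance of the argument.
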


\noindent\begin{proof}
    \noindent(i) Take any $k\in[m]$. Because $G_k^0=G_k^0(0,\boldsymbol{\gamma}_N(\underline{K},m))\cup G_k^0(1,\boldsymbol{\gamma}_N(\underline{K},m))$, $N_k^0\geq |G_k^0(0,\boldsymbol{\gamma}_N(\underline{K},m))| \geq N_k^0-N^{\alpha_{m+1}}$, and $|G_j^0\cap G_k(\boldsymbol{\gamma}_N(\underline{K},m))|=O(N^{\alpha_{m+1}})$ for $j\neq k$, $\bar{x}_{k,t}$ satisfies
    \begin{align}
        \bar{x}_{k,t} 
        &= \frac{|G_k^0(0,\boldsymbol{\gamma}_N(\underline{K},m))|}{|G_k(\boldsymbol{\gamma}_N(\underline{K},m))|} \frac{1}{|G_k^0(0,\boldsymbol{\gamma}_N(\underline{K},m))|}\left(\sum_{i\in G_k^0(0,\boldsymbol{\gamma}_N(\underline{K},m))} + \sum_{j\neq k}\sum_{i\in G_j^0\cap G_k(\boldsymbol{\gamma}_N(\underline{K},m))}\right)x_{it} \\
        &=\bar{x}_{k\wedge k(0),t} + O_p(N^{\alpha_{m+1}-1}),
    \end{align}
    where $\bar{x}_{k\wedge k(0),t}\coloneqq |G_k^0(0,\boldsymbol{\gamma}_N(\underline{K},m))|^{-1}\sum_{i\in G_k^0(0,\boldsymbol{\gamma}_N(\underline{K},m))}x_{it}$. Similarly, we have $\bar{y}_{k,t} = \bar{x}_{k\wedge k(0),t}'\theta_k^0 + \bar{\varepsilon}_{k\wedge k(0),t} + \mu_{kt}^0 + O_p(N^{\alpha_{m+1}-1})$. Hence, the denominator and numerator of $\widehat{\theta}_k^{(\underline{K})}(\boldsymbol{\gamma}_N(\underline{K},m))$ satisfy
    \begin{align}
        &\frac{1}{|G_k^0(0,\boldsymbol{\gamma}_N(\underline{K},m))|T}\sum_{i\in G_k(\boldsymbol{\gamma}_N(\underline{K},m))}\sum_{t=1}^T\left(x_{it} - \bar{x}_{k,t}\right)\left(x_{it}- \bar{x}_{k,t}\right)' \\
        &=\frac{1}{|G_k^0(0,\boldsymbol{\gamma}_N(\underline{K},m))|T} \\ 
        &\qquad\times\left(\sum_{i\in G_k^0(0,\boldsymbol{\gamma}_N(\underline{K},m))} + \sum_{j\neq k}\sum_{i\in G_j^0\cap G_k(\boldsymbol{\gamma}_N(\underline{K},m))}\right)\sum_{t=1}^T\left(x_{it} - \bar{x}_{k\wedge k(0),t} + O_p\left(N^{\alpha_{m+1}-1}\right)\right) \\
        &\hspace{8.5cm}\times\left(x_{it} - \bar{x}_{k\wedge k(0),t} + O_p\left(N^{\alpha_{m+1}-1}\right)\right)' \\
        &=\frac{1}{|G_k^0(0,\boldsymbol{\gamma}_N(\underline{K},m))|T}\sum_{i\in G_k^0(0,\boldsymbol{\gamma}_N(\underline{K},m))}\sum_{t=1}^T\left(x_{it} - \bar{x}_{k\wedge k(0),t}\right)\left(x_{it} - \bar{x}_{k\wedge k(0),t}\right)' + O_p\left(N^{\alpha_{m+1}-1}\right),
        \label{eqn:denom_theta_hat}
    \end{align}
    and 
    \begin{align}
        &\frac{1}{|G_k^0(0,\boldsymbol{\gamma}_N(\underline{K},m))|T}\sum_{i\in G_k(\boldsymbol{\gamma}_N(\underline{K},m))}\sum_{t=1}^T\left(x_{it} - \bar{x}_{k,t}\right)\left(y_{it}- \bar{y}_{k,t}\right) \\
        &=\frac{1}{|G_k^0(0,\boldsymbol{\gamma}_N(\underline{K},m))|T}\sum_{i\in G_k^0(0,\boldsymbol{\gamma}_N(\underline{K},m))}\sum_{t=1}^T\left(x_{it} - \bar{x}_{k\wedge k(0),t}\right)\left(x_{it} - \bar{x}_{k\wedge k(0),t}\right)'\theta_k^0 \\
        &\quad + \frac{1}{|G_k^0(0,\boldsymbol{\gamma}_N(\underline{K},m))|T}\sum_{i\in G_k^0(0,\boldsymbol{\gamma}_N(\underline{K},m))}\sum_{t=1}^T\left(x_{it} - \bar{x}_{k\wedge k(0),t}\right)\left(\varepsilon_{it} - \bar{\varepsilon}_{k\wedge k(0),t}\right) + O_p\left(N^{\alpha_{m+1}-1}\right).
        \label{eqn:nume_theta_hat}
    \end{align}
    Combining \eqref{eqn:denom_theta_hat} and \eqref{eqn:nume_theta_hat}, and using Assumption \ref{asm:basic_gfe}(d'), we obtain $\widehat{\theta}_k^{(\underline{K})}(\boldsymbol{\gamma}_N(\underline{K},m))=\theta_k^0 + b_{1k,NT}^*$, where $b_{1k,NT}^* = O_p(\max\{(NT)^{-1/2},N^{\alpha_{m+1}-1}\})$.

    For $\widehat{\mu}_{kt}^{(\underline{K})}(\boldsymbol{\gamma}_N(\underline{K},m))$, we have
    \begin{align}
        \widehat{\mu}_{kt}^{(\underline{K})}(\boldsymbol{\gamma}_N(\underline{K},m)) 
        &= \bar{x}_{k\wedge k(0),t}'\theta_k^0 + \bar{\varepsilon}_{k\wedge k(0),t} + \mu_{kt}^0 - \bar{x}_{k\wedge k(0),t}'\left(\theta_k^0 + b_{ik,NT}^*\right) + O_p\left(N^{\alpha_{m+1}-1}\right) \\
        &=\mu_{kt}^0 + b_{2k,NT}^*,
    \end{align}
    where $b_{2k,NT}^* = O_p(\max\{N^{-1/2}, N^{\alpha_{m+1}-1}\})$. This completes the proof of part (i).

    \noindent(ii) Part (ii) is immediate in view of Assumptions \ref{asm:basic_gfe}(a'), (b) and (c).
\end{proof}

Following exactly the same argument as in the proof of Lemma \ref{lem:overfit}, we obtain the following result.

\begin{lem}
    Consider model \eqref{model:with_gfe}. Suppose Assumptions \ref{asm:group_size_gfe} and \ref{asm:basic_gfe}(a')-(d') and (f') hold. Then for any fixed $K\in\{K^0+1,K^0+2,\ldots\}$ and for any $\boldsymbol{\gamma}_N\in\overline{\Gamma}_N(K)$, 
    \begin{align}
        \left|\widehat{\sigma}^2(K,\boldsymbol{\gamma}_N(K)) - \widehat{\sigma}^2\left(K^0,\widehat{\boldsymbol{\gamma}}_N^{(K^0)}\right)\right|=O_p\left(N^{-1}\right).
    \end{align}
    \label{lem:overfit_gfe}
\end{lem}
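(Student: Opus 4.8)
The plan is to repeat, \emph{mutatis mutandis}, the argument in the proof of Lemma~\ref{lem:overfit}, with Theorem~\ref{thm:consistency_normality_gfe}(iv) playing the role of Theorem~\ref{thm:consistency_normality}(iii). Fix $K\in\{K^0+1,K^0+2,\ldots\}$ and $\boldsymbol{\gamma}_N(K)=(k_1,\ldots,k_N)\in\overline{\Gamma}_N(K)$, and put $G_k\coloneqq G_k(\boldsymbol{\gamma}_N(K))$. By Theorem~\ref{thm:consistency_normality_gfe}(iv) we have $\widehat{\sigma}^2(K^0,\widehat{\boldsymbol{\gamma}}_N^{(K^0)})=(NT)^{-1}\sum_{i=1}^N\sum_{t=1}^T\varepsilon_{it}^2+O_p(N^{-1})$, so it is enough to show $\bigl|\widehat{\sigma}^2(K,\boldsymbol{\gamma}_N(K))-(NT)^{-1}\sum_{i=1}^N\sum_{t=1}^T\varepsilon_{it}^2\bigr|=O_p(N^{-1})$. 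Writing this difference as $\sum_{k=1}^{K}S_k^*$ with $S_k^*\coloneqq (NT)^{-1}\sum_{i\in G_k}\sum_{t=1}^T\bigl[(y_{it}-x_{it}'\widehat{\theta}_k^{(K)}(\boldsymbol{\gamma}_N(K))-\widehat{\mu}_{kt}^{(K)}(\boldsymbol{\gamma}_N(K)))^2-\varepsilon_{it}^2\bigr]$, and recalling that $K$ is fixed, the whole matter reduces to proving $S_k^*=O_p(N^{-1})$ for each $k\in[K]$, uniformly over $\boldsymbol{\gamma}_N(K)\in\overline{\Gamma}_N(K)$.

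For the per-group bound, note that by \eqref{def:gamma_N_bar} each $G_k$ lies inside a single true group $G_{k^*}^0$, so $y_{it}=x_{it}'\theta_{k^*}^0+\mu_{k^*t}^0+\varepsilon_{it}$ for $i\in G_k$, and $(\theta_{k^*}^0,\{\mu_{k^*t}^0\}_t)$ is a feasible point of the within-$G_k$ least-squares problem defining $\widehat{\theta}_k^{(K)}(\boldsymbol{\gamma}_N(K))$ and $\{\widehat{\mu}_{kt}^{(K)}(\boldsymbol{\gamma}_N(K))\}_t$. The normal equations for that problem yield the identity
\begin{align}
    S_k^* = -\frac{1}{NT}\sum_{i\in G_k}\sum_{t=1}^T\Bigl[x_{it}'\bigl(\widehat{\theta}_k^{(K)}(\boldsymbol{\gamma}_N(K))-\theta_{k^*}^0\bigr)+\bigl(\widehat{\mu}_{kt}^{(K)}(\boldsymbol{\gamma}_N(K))-\mu_{k^*t}^0\bigr)\Bigr]^2 \le 0,
\end{align}
equivalently $|S_k^*|=(NT)^{-1}\|P_{W_k}\varepsilon_{G_k}\|^2$, where $P_{W_k}$ is the orthogonal projection onto the column space of the within-$G_k$ design in the regressors $(x_{it},\text{time dummies})$, a space of dimension $\mathrm{rank}(P_{W_k})\le p+T$. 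Since there are $p+T$ fitted parameters per group, this projection term is $O_p(T)$ under the moment conditions in Assumptions~\ref{asm:basic}(b),(c) and \ref{asm:basic_gfe}(d'). Concretely, repeating the computations in the proofs of Lemma~\ref{lem:underfit_gfe} and Proposition~\ref{prop:asym_equivalence_gfe} (now with $G_k$ in place of $G_k^0$, which is legitimate because $G_k\subseteq G_{k^*}^0$) gives $\widehat{\theta}_k^{(K)}(\boldsymbol{\gamma}_N(K))-\theta_{k^*}^0=O_p((|G_k|T)^{-1/2})$ and $\widehat{\mu}_{kt}^{(K)}(\boldsymbol{\gamma}_N(K))-\mu_{k^*t}^0=O_p(|G_k|^{-1/2})$ for each $t$; substituting these into the identity, the slope contribution is $O_p((NT)^{-1})$, the cross term is of smaller order by the Cauchy--Schwarz inequality, and the time-effect contribution is $(NT)^{-1}\cdot|G_k|\cdot T\cdot O_p(|G_k|^{-1})=O_p(N^{-1})$, which dominates. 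Hence $S_k^*=O_p(N^{-1})$; summing over the $K$ fixed groups and combining with Theorem~\ref{thm:consistency_normality_gfe}(iv) gives the claim.

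The step I expect to be delicate is making the per-group estimate $S_k^*=O_p(N^{-1})$ hold \emph{uniformly} over $\overline{\Gamma}_N(K)$: a refinement of the true partition may carve a true group into arbitrarily small, and possibly ill-conditioned, sub-groups $G_k$, for which the rates above---and even well-definedness of $\widehat{\theta}_k^{(K)}(\boldsymbol{\gamma}_N(K))$---need justification. This is handled exactly as in Lemma~\ref{lem:overfit}: for sub-groups of divergent size the relevant design matrix is invertible for $T$ large (Assumption~\ref{asm:basic_gfe}(f') is used here just as Assumption~\ref{asm:basic}(f) is used there) and the rates apply, while for sub-groups of bounded size one instead invokes the size-robust inequality $|S_k^*|\le(NT)^{-1}\sum_{i\in G_k}\sum_{t=1}^T\varepsilon_{it}^2=O_p(|G_k|/N)=O_p(N^{-1})$, which follows from $E[\varepsilon_{it}^2]\le M^{1/2}$ (Assumption~\ref{asm:basic}(c)) and Markov's inequality; in either case the projection identity $|S_k^*|=(NT)^{-1}\|P_{W_k}\varepsilon_{G_k}\|^2$ with $\mathrm{rank}(P_{W_k})\le p+T$ is what really does the work. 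The only substantive departure from Lemma~\ref{lem:overfit} is that each estimated group now carries $p+T$ fitted parameters rather than $p$; the extra $T$ time effects $\mu_{kt}$ are precisely why the per-group excess SSR is $O_p(T)$ instead of $O_p(1)$, and hence why the final bound degrades from $O_p((NT)^{-1})$ to $O_p(T/(NT))=O_p(N^{-1})$.
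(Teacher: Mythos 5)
Your proposal is correct and follows essentially the same route as the paper, which proves this lemma simply by invoking the argument of Lemma \ref{lem:overfit} verbatim (with Theorem \ref{thm:consistency_normality_gfe}(iv) in place of Theorem \ref{thm:consistency_normality}(iii)); your per-group decomposition, the projection identity for the excess SSR, and the observation that the $T$ additional time effects per group are what degrade the bound from $O_p((NT)^{-1})$ to $O_p(N^{-1})$ are exactly the intended adaptation. If anything, your treatment of small or ill-conditioned sub-groups via the size-robust bound $|S_k^*|\le (NT)^{-1}\sum_{i\in G_k}\sum_{t=1}^T\varepsilon_{it}^2$ is more explicit than what the paper provides.
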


\noindent\begin{proof}[Proof of Proposition \ref{prop:ic_inconsistency_gfe}]
    \noindent(i) As in the proof of Proposition \ref{prop:ic_inconsistency}(i), applying Lemma \ref{lem:overfit_gfe} yields
    \begin{align}
        &P\left(\mathrm{IC}\left(K^0,h_{NT}\right) < \min_{K^0+1\leq \overline{K}\leq K_{\max}}\mathrm{IC}\left(\overline{K},h_{NT}\right)\right) \\
        &\geq P\left(\min_{K^0+1\leq \overline{K}\leq K_{\max}}\left[\left(\overline{K}-K^0\right)(T+p)\widetilde{\sigma}^2h_{NT} +\left\{\widehat{\sigma}^2\left(K^*,\overline{\boldsymbol{\gamma}}_N\left(K^*\right)\right) - \widehat{\sigma}^2\left(K^0,\widehat{\boldsymbol{\gamma}}_N^{(K^0)}\right)\right\}\right]>0\right) \\
        &\stackrel{p}{\to}1
    \end{align}
    if $TNh_{NT}\to\infty$. This establishes part (i).

    \noindent(ii) From Theorem \ref{thm:consistency_normality_gfe}(iv), we have
    \begin{align}
        \mathrm{IC}\left(K^0,h_{NT}\right) 
        =\frac{1}{NT}\sum_{i=1}^N\sum_{t=1}^T\varepsilon_{it}^2 + n\left(K^0\right)\widetilde{\sigma}^2h_{NT} + O_p\left(N^{-1}\right).
        \label{eqn:ic_K^0_gfe}
    \end{align}
    Take any $\underline{K}\in\{m,\ldots,K^0-1\}$ and any $\boldsymbol{\gamma}_N(\underline{K},m)=(k_1,\ldots,k_N)\in\Gamma_N(\underline{K},m)$ for each $N$. Using Lemma \ref{lem:underfit_gfe}, we decompose $\widehat{\sigma}^2(\underline{K},\boldsymbol{\gamma}_N(\underline{K},m))$ as
    \begin{align}
        &\widehat{\sigma}^2(\underline{K},\boldsymbol{\gamma}_N(\underline{K},m)) \\
        &=\frac{1}{NT}\sum_{k=1}^m\sum_{i\in G_k(\boldsymbol{\gamma}_N(\underline{K},m))}\sum_{t=1}^T\left\{\varepsilon_{it} - x_{it}'\left(\widehat{\theta}_{k_i}^{(\underline{K})}(\boldsymbol{\gamma}_N(\underline{K},m)) - \theta_{k_i^0}^0\right) - \left(\widehat{\mu}_{k_it}^{(\underline{K})}(\boldsymbol{\gamma}_N(\underline{K},m)) - \mu_{k_i^0t}^0\right) \right\}^2 \\
        &\qquad + \frac{1}{NT}\sum_{k=m+1}^{\underline{K}}\sum_{i\in G_k(\boldsymbol{\gamma}_N(\underline{K},m))}\sum_{t=1}^T\left\{\varepsilon_{it} - x_{it}'\left(\widehat{\theta}_{k_i}^{(\underline{K})}(\boldsymbol{\gamma}_N(\underline{K},m)) - \theta_{k_i^0}^0\right) - \left(\widehat{\mu}_{k_it}^{(\underline{K})}(\boldsymbol{\gamma}_N(\underline{K},m)) - \mu_{k_i^0t}^0\right) \right\}^2 \\
        &=\frac{1}{NT}\sum_{k=1}^{m}\sum_{i\in G_k^0(0,\boldsymbol{\gamma}_N(\underline{K},m))}\sum_{t=1}^T\left(\varepsilon_{it}-x_{it}'b_{1k,NT}^* - b_{2k,NT}^*\right)^2 + O_p\left(N^{\alpha_{m+1}-1}\right) \\
        & = \frac{1}{NT}\sum_{i=1}^N\sum_{t=1}^T\varepsilon_{it}^2 + O_p\left(N^{\alpha_{m+1}-1}\right).
    \end{align}
    Consequently, noting that $\widehat{\sigma}^2(\underline{K},\widehat{\boldsymbol{\gamma}}_N^{(\underline{K})})\leq \widehat{\sigma}^2(\underline{K},\boldsymbol{\gamma}_N(\underline{K},m))$, we deduce
    \begin{align}
        \mathrm{IC}\left(\underline{K},h_{NT}\right) 
        &\leq \widehat{\sigma}^2(\underline{K},\boldsymbol{\gamma}_N(\underline{K},m)) + n(\underline{K})\widetilde{\sigma}^2h_{NT} \\
        &=\frac{1}{NT}\sum_{i=1}^N\sum_{t=1}^T\varepsilon_{it}^2 + n(\underline{K})\widetilde{\sigma}^2h_{NT} + O_p\left(N^{\alpha_{m+1}-1}\right).
        \label{bound:ic_underfit_gfe}
    \end{align}
    Using \eqref{eqn:ic_K^0_gfe} and \eqref{bound:ic_underfit_gfe}, we get
    \begin{align}
        P\left(\mathrm{IC}(\underline{K},h_{NT}) < \mathrm{IC}(K^0,h_{NT})\right)
        &\geq P\left(\frac{1}{NT}\sum_{i=1}^N\sum_{t=1}^T\varepsilon_{it}^2 + n(\underline{K})\widetilde{\sigma}^2h_{NT} + O_p\left(N^{\alpha_{m+1}-1}\right) \right.\\
        &\left. \hspace{1.5cm} < \frac{1}{NT}\sum_{i=1}^N\sum_{t=1}^T\varepsilon_{it}^2 + n\left(K^0\right)\widetilde{\sigma}^2h_{NT} + O_p\left(N^{-1}\right)\right) \\
        &= P\left((T+p)\left(K^0-\underline{K}\right)\widetilde{\sigma}^2h_{NT} + O_p\left(N^{\alpha_{m+1}-1}\right)>0\right) \\
        &\stackrel{p}{\to} 1
        \label{convergence:underest_gfe}
    \end{align}
    if $TN^{1-\alpha_{m+1}}h_{NT}\to\infty$. With this and part (i), it follows that $P(\widehat{K}(h_{NT})<K^0)\to 1$ if $TN^{1-\alpha_{m+1}}h_{NT}\to\infty$.
\end{proof}

\end{document}